\newcolumntype{K}[1]{>{\centering\arraybackslash}p{#1}}
\newcommand{\A}{\mbox{$\mathbb A$}}
\newcommand{\N}{\mbox{$\mathcal N$}} 
\def\a{{\alpha}}
\def\b{{\beta}}
\def\d{{\delta}}
\def\g{{\gamma}}
\def\ld{{\lambda}}
\def\v{{\nu}}
\def\z{{\zeta}}
\def\w{{\omega}}
\def\c{{\mathrm{\textbf{c}}}}
\def\X{{\mathrm{X}}}
\def\Y{{\mathrm{Y}}}
\def\S{\mathcal {S}}
\def\AA{{\mathcal A}}
\def\P{{\mathcal P}}
\def\I{{\mathcal I}}
\def\H{{\mathcal H}}
\def\c{{\bf{c}}}
\def\Fqm{{\mathbb F}_{q}^{m}}
\def\Fqn{{\mathbb F}_{q}^{n}}
\newcommand{\Fq}{\mbox{$\mathbb{F}_{q}$}}
\newcommand{\F}{\mbox{$\mathbb F$}}
\newcommand{\Ev}{\mathrm{Ev}}
\newcommand{\RM}{\mathrm{RM}}
\newcommand{\RS}{\mathrm{RS}}
\newcommand{\AC}{\mathrm{AC}}
\newcommand{\GL}{\mathrm{GL}}
\newcommand{\AGL}{\mathrm{AGL}}
\def\Aff{{\normalfont\text{Aff}}}
\newcommand{\G}{\mathrm{G}}
\newcommand{\M}{\mathrm{M}}
\newcommand{\orb}{\mathrm{orb}}
\newcommand{\stab}{\mathrm{stab}}
\newcommand{\GA}{\mathcal{G}_{\mathcal{A}}}
\newcommand{\GG}{\mathcal{G}}
\newcommand{\ZZ}{\mathrm{Z}}
\newcommand{\supp}{\mathrm{Supp}}
\newcommand{\stirling}[2]{\genfrac{[}{]}{0pt}{}{#1}{#2}}
\newtheorem{theorem}{Theorem}[section]
\newtheorem{lemma}[theorem]{Lemma}
\newtheorem{prop}[theorem]{Proposition}
\newtheorem{cor}[theorem]{Corollary}
\theoremstyle{definition}
\newtheorem{defn}[theorem]{Definition}
\theoremstyle{remark}
\newtheorem{remark}[theorem]{Remark}
\numberwithin{equation}{section}
\begin{document}

\title[Enumeration of minimum weight codewords of affine Cartesian codes]{Enumeration of minimum weight codewords of \\ affine Cartesian codes}

\author{Sakshi Dang}
\thanks{Sakshi Dang is supported by Senior Research Fellowship from the Council of Scientific and Industrial Research, Govt. of India, for her doctoral work  at IIT Bombay.}
	\address{Department of Mathematics, 
		Indian Institute of Technology Bombay,\newline \indent
		Powai, Mumbai 400076, India}
		\email{\href{mailto:sakshidang10@gmail.com}{sakshidang10@gmail.com}}

	\author{Sudhir R. Ghorpade}
	\address{Department of Mathematics, 
		Indian Institute of Technology Bombay,\newline \indent
		Powai, Mumbai 400076, India}
\email{\href{mailto:srg@math.iitb.ac.in}{srg@math.iitb.ac.in}}
\thanks{Sudhir Ghorpade is partially supported by  the grant DST/INT/RUS/RSF/P-41/2021 from the Department of Science \& Technology, Govt. of India, and the IRCC award grant RI/0224-10000101-001 from IIT Bombay.}
	
\date{}

\begin{abstract}

Affine Cartesian codes were first discussed by Geil and Thomsen in 2013 in a broader framework and were formally introduced by L\'opez, Renter\'ia-M\'arquez and Villarreal in 2014. These are linear error-correcting codes obtained by evaluating polynomials at points of a Cartesian product of subsets of the given finite field. They can be viewed as a vast generalization of Reed-Muller codes. In 1970, Delsarte, Goethals and MacWilliams gave a 
formula for the minimum weight codewords of Reed-Muller codes. 
Carvalho and Neumann in 2020 considered affine Cartesian codes in a special setting where  the subsets in the Cartesian product are nested subfields of the given finite field, and gave a characterization of their  minimum weight codewords. We use this to give an explicit formula for the number of minimum weight codewords  of affine Cartesian codes in the case of nested subfields. This is seen to unify the known formulas for 
the number of minimum weight codewords of Reed-Solomon codes and Reed-Muller codes.   

\end{abstract}

\maketitle


\section{Introduction}\label{section_intro}

Let $\Fq$ denote the finite field with $q$ elements, where $q$ is the power of some prime $p$. For a positive integer $m$, consider  nonempty subsets $A_1, \ldots,A_m$ of $\Fq$ of size $d_1, \ldots, d_m$ respectively. Without loss of generality, we may assume that
$$
	 1 < d_1  \leq \cdots\leq d_m \leq q .
$$
Define the Cartesian product 
$$
    \mathcal{A}: = A_1 \times \cdots \times A_m \subseteq \F_q^m \quad \text{and} \quad  n := |\mathcal{A}| = d_1 \cdots d_m .
$$
Denote by $\S$, the polynomial ring in $m$ variables $X_1, \ldots, X_m$ over $\Fq$. Let $\X=(X_1, \ldots,X_m)$. Then
$$
    \S: = \Fq[\X] = \Fq[X_1,\ldots,X_m].
$$
For a nonnegative integer $u$, define $\S_{\leq u}(\AA)$ to be the vector subspace of $\S$ having polynomials $f$ such that $\deg f \leq u$ and  $\deg_{X_i} (f) <d_i$ for $1 \leq i \leq m$, i.e., 
$$
    \S_{\leq u}(\AA) := \{f(\X) \in \S:  \deg f \leq u,  \deg_{X_i} (f) <d_i \text{ for } 1 \leq i \leq m\}.
$$
Denote, by convention, $\S_{\leq -1}(\AA) = \{ \bf{0} \}$, the zero subspace of $\S$. Since $|\AA| =n$, we write $\AA = \{P_1, \ldots, P_n\}$. Define the evaluation map  
$$
    \Ev: \S_{\leq u}(\AA) \rightarrow \Fqn \qquad \text{given by} \qquad  f \mapsto \c_f: = (f(P_1), \ldots, f(P_n)).
$$
We note that $\Ev$ is a linear injective map and its image is a linear code, called the {\it affine Cartesian code of order $u$} and is denoted by $\AC_{q}(u, \AA)$. Thus, $\AC_{q}(u, \AA) = \Ev(\S_{\leq u}(\AA))$ is a linear code of length $n = d_1 \cdots d_m$. Moreover, $\Ev$ is a surjective map if $u \geq K : = \sum\limits_{i=1}^{m}(d_i -1)$. 

Affine Cartesian codes were first defined by Geil and Thomsen \cite{Geil} in 2013 in a more general setting where they consider weighted Reed-Muller codes on a Cartesian product of subsets of the given finite field. Independently, these codes were formally introduced by L\'opez, Renter\'ia-M\'arquez and Villarreal in 2014 \cite{Lopez}; in this work, they  
determine the basic parameters of these codes using the theory of complete intersections and Hilbert functions. Among the different evaluation codes that exist in the literature, Reed-Solomon codes and Reed-Muller codes are widely studied and can be seen as special cases of affine Cartesian codes. In fact, 
\begin{itemize}
    \item If m=1, then $\AC_{q}(u, \AA) = \RS_{q}(u+1,d_1)$,  the Reed-Solomon code of length $d_1$ and dimension $u+1$ over $\Fq$.
    
     \item If $ A_1= \cdots = A_m =\Fq$, then $\AA = \Fqm$ and hence $\AC_{q}(u, \AA) = \RM_{q}(u,m)$, the Reed-Muller code of order $u$ in $m$ variables over $\Fq$. 
\end{itemize}

Reed-Muller codes were first introduced by Muller \cite{Muller} in 1954, and a decoding algorithm was later developed by Reed \cite{Reed}. These codes were initially defined over the binary field $\F_2$ and were later extended in 1968 by Kasami, Lin, and Peterson to generalized Reed-Muller codes, which are defined over a finite field $\F_q$. Generalized Reed-Muller codes, denoted by $\RM_q(u, m)$, are obtained by evaluating polynomials of degree at most $u$ in the polynomial ring $\F_q[X_1, \ldots, X_m]$ at all points in the affine space $\A^m(\F_q)$. An extensive literature on binary and $q$-ary Reed-Muller codes can be found in \cite{Assmus_Key, Sloane}. Due to their nice algebraic properties, these codes have been extensively studied and are still an active area of research.  It were the seminal papers by Kasami, Lin, Peterson in 1968 \cite{Kasami} where they determined the basic parameters of these codes and by Delsarte, Goethals, MacWilliams in 1970 \cite{Delsarte_Goethals_MacWilliams} where they gave a characterization of minimum weight codewords in terms of product of linear polynomials that significantly advanced the study of Reed-Muller codes. Other prominent works on Reed-Muller codes include the determination of their automorphism groups by Berger, Charpin \cite{Berger_Charpin}, their generalized Hamming weights by Heijnen, Pellikaan \cite{Heijnen_Pellikaan}, and later by Beelen\cite{Beelen}. A lot of work has also been done towards the determination of higher Hamming weights and next-to-minimum weight codewords of Reed-Muller codes. One can refer to \cite[Section 1]{Carvalho_Neumann} and references within for more details on higher weights of Reed-Muller codes. 

Affine Cartesian codes, being a generalization of Reed-Muller codes have always been under active consideration. Other than the defining works by L\'opez, Renter\'ia-M\'arquez and Villareal   \cite{Lopez}, some major works include determination of their duals and generalized Hamming weights by Beelen, Datta \cite{Beelen_Datta}, their complimentary duals by Lopez, Manganiello, Matthews \cite{Lopez_2} and relative generalized Hamming weights by Datta \cite{Datta}. In a special case, Carvalho and Neumann \cite{Carvalho_Neumann} gave a characterization of minimum weight codewords. They also determined second Hamming weights and  next-to-minimal weight codewords of affine Cartesian codes. One can refer to \cite{Carvalho, Carvalho_Neumann_2, Carvalho_Neumann_3} for more details.

Delsarte, Goethals, MacWilliams in \cite{Delsarte_Goethals_MacWilliams} gave a characterization of minimum weight codewords for Reed-Muller codes. They showed that for $0 \leq  u \leq m(q-1)$, if 
$$
    u=t(q-1)+s
$$
for some uniquely determined integers $t,s$ such that $0 \leq t \leq m$ and $0 \leq s  < q-1$, every minimum weight codeword of the Reed-Muller code $\RM_q(u,m)$ is obtained by evaluation of $\c_f$ of polynomials $f \in \S_{\leq u}(\AA)$, given by
$$
    f(X_1, \ldots, X_m) = \g \prod_{i=1}^{t} (1-(L_i +b_i)^{q-1}) \prod_{j=1}^{s}(L_{t+1}-\w_j),
$$
where  $\g  \in \F_q^*$, $L_1, \ldots, L_{t+1} \in \Fq[X_1, \ldots, X_m]$ are linearly independent linear homogeneous polynomials, $b_1, \ldots, b_t \in \Fq$ and $\Omega:=\{ w_1, \ldots, \w_s\} \subseteq \Fq$. This result showed that all the minimum weight codewords lie on a special type of hyperplane arrangement.  Moreover, they also enumerated the minimum weight codewords and showed that if $\mathrm{N}_{q}(u,m)$ denotes the set of minimum weight codewords of $\RM_q (u,m)$, then 
$$
    \left| \mathrm{N}_{q}(u,m) \right| =
    \begin{cases}
        (q-1)q^{t} \stirling{m}{t}_{q} & {\text{ if }s =0},\\[.5em]    
        (q-1)q^{t} \stirling{m}{t}_{q}  \stirling{m-t}{1}_{q} \binom{q}{s}  & {\text{ if } s \neq 0},\\
    \end{cases}
$$
where $\stirling{m}{t}_{q}$ denotes the $q$-binomial coefficient and is defined to be the number of $t$-dimensional linear subspaces of $\Fqm$. In 2012, Leducq in \cite{Leducq} gave an alternative proof for the above result of Delsarte, Goethals and MacWilliams \cite{Delsarte_Goethals_MacWilliams} using the concept of \textquoteleft reduced'  polynomials. A generalization of this result to affine Cartesian codes has proven to be a difficult problem due to arbitrary nature of subsets of the Cartesian product under consideration. However, some work has recently been done by Geil \cite{Geil_2} in this direction where he showed that not all minimum weight codewords are obtained by polynomials which can be expressed as a product of linear factors. Extending the work of Delsarte  Goethals and MacWilliams \cite{Delsarte_Goethals_MacWilliams}, Carvalho and Neumann  \cite{Carvalho_Neumann} in 2020 considered a special case of affine Cartesian codes and gave a similar characterization for its minimum weight codewords. They assumed subsets $A_1, \ldots, A_m$ of the Cartesian product $\AA$ to be subfields of $\Fq$. Moreover, they further assumed that 
$$
    A_1 \subseteq \cdots \subseteq A_m \subseteq \Fq.
$$
We address the problem of enumerating the minimum weight codewords in this specific context. We write 
$$ 
    \AA = F_1 \times\cdots \times  F_1 \times \cdots\cdots \times F_{\lambda} \times \cdots \times   F_{\lambda} \subseteq \F_q^m,
$$
where $F_1 \subsetneq \cdots \subsetneq F_{\lambda} \subseteq \Fq $ are distinct subfields of $\Fq$ and for $1 \leq t \leq \lambda $, let $F_t$ is repeated $\mu_t$ times in $\AA$. Then, it follows that
$$
    \mu_1 + \cdots + \mu_{\lambda } = m.
$$
Define $s_{0}  =0$, and 
$$
    s_{t} := \mu_1 + \cdots + \mu_t \quad \text{ for } \quad 1 \leq t \leq \lambda. 
$$
Since $\mu_t > 0$ for $1 \leq t \leq \lambda$, we get
$$ 
    0 < s_{1}  < s_{2}  < \cdots<   s_{\lambda}  = m.
$$
Then, it follows from the division algorithm that for every $k$ such that $1 \leq k \leq m$, there exists a unique integer $t =t_k$ such that $s_{t_{k}-1}  < k \leq s_{t_{k}} $. 

Denoting by $\N_q(u, \AA)$, the set of minimum weight codewords of $\AC_{q}(u, \AA)$, we completely determine $\left | \N_{q} (u, \AA) \right|$ for all values of $u$ under consideration and is given by the following theorem. 

\begin{theorem}\label{thm_main}
Let $m \geq 1$ and  $\AA = F_1 \times \cdots \times F_1 \times \cdots \times F_{\lambda} \times \cdots \times F_{\ld}  = F_1^{\mu_1} \times \cdots \times F_{\ld}^{\mu_{\ld}} \subseteq \F_q^m $ as before. Then
$$
    \left| \N_{q}(0, \AA) \right| = q-1.
$$
For $1 \leq u \leq K= \sum\limits_{i=1}^{m} (d_i-1)$, write
$$
    u = \sum_{i=1}^{j}(d_i-1)+\ell,
$$
where $j, \ell$ are uniquely determined integers such that $0 \leq j < m$ and $0 < \ell \leq d_{j+1}-1$. Then
\begin{equation*}
    \left| \N_{q}(u, \AA) \right| = \begin{cases}
            (q-1) \bigg(\prod\limits_{\underset{i \neq k}{i=1}}^{j+1} d_i \bigg)  \displaystyle{\stirling{\mu_r}{j+1-s_{r-1}} _{d_{j+1}}}
            & \text{ if } \ell = d_{j+1}-1,\\
           \left| \N_{q}^{(j+1)}(u, \AA) \right| +  \sum\limits_{t = t_{k_0}}^{r-1} \left| \N_{q}^{(s_{t} )}(u, \AA) \right|   & \text{ if } \ell < d_{j+1}-1,\\
        \end{cases}
    \end{equation*}
    where $r = t_{j+1}$,   $1 \leq k_0 \leq j+1$ is the least integer $k$ such that $d_{k} \geq d_{j+1}-\ell$ and  $|\N_{q}^{(k)}(u, \AA)|$ is given by
    \begin{itemize}
        \item If $t_k = r$, 
        $$
        \left| \N_{q}^{(k)}(u, \AA) \right| = (q-1) \bigg(\prod\limits_{\underset{i\neq k}{i=1}}^{j+1} d_i \bigg) \displaystyle{\stirling{\mu_r}{j-s_{r-1} }_{d_{j+1}}  \stirling{s_{r }-j}{1}_{d_k}} \binom{d_{k}}{d_{j+1}-\ell}.
        $$

        \item If $t_k < r$, 
        $$
         \left| \N_{q}^{(k)}(u, \AA) \right| =   (q-1) \bigg(\prod\limits_{\underset{i \neq k}{i=1}}^{j+1} d_i \bigg) \bigg(\prod\limits_{i=s_{t_k} +1}^{j+1} d_i \bigg) \displaystyle{\stirling{\mu_r}{j+1-s_{r-1} }_{d_{j+1}}  \stirling{\mu_{t_k}}{1}_{d_k}}\binom{d_k}{d_{j+1}-\ell}.  
        $$
    \end{itemize}
\end{theorem}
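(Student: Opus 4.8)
The plan is to build directly on the Carvalho--Neumann characterization of minimum weight codewords in the nested subfield case, and then to carry out a careful combinatorial enumeration of the polynomials appearing there, accounting for scalar multiples and other redundancies. The case $u=0$ is immediate: the only polynomials in $\S_{\leq 0}(\AA)$ are the constants $\gamma \in \Fq$, and the nonzero ones give exactly the $q-1$ codewords $\gamma(1,\dots,1)$, each of full (hence minimum) weight. For $1 \leq u \leq K$, I would start from the normal form of a minimum weight polynomial supplied by Carvalho--Neumann: writing $u = \sum_{i=1}^{j}(d_i-1)+\ell$, such a polynomial is, up to a nonzero scalar $\gamma \in \F_q^{\ast}$, a product of ``complete'' annihilator factors in $j$ suitably chosen independent directions together with one ``incomplete'' factor of degree controlled by $\ell$ carried by a single additional direction. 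The whole enumeration then reduces to counting the admissible choices of (i) the scalar $\gamma$, (ii) the directions and the translations defining the complete factors, and (iii) the linear form and the set of field elements defining the incomplete factor, all constrained by the nesting $F_1 \subsetneq \cdots \subsetneq F_{\lambda}$.

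The key organizing idea is to classify minimum weight codewords according to the coordinate block in which the incomplete factor lives. I would let $k$ index the distinguished direction carrying the incomplete factor and set $\N_q^{(k)}(u,\AA)$ to be the number of minimum weight codewords whose incomplete factor uses a linear form supported on the block $t_k$. Because the incomplete factor must select $d_{j+1}-\ell$ elements from the subfield $F_{t_k}$ of size $d_k$, only blocks with $d_k \geq d_{j+1}-\ell$ can occur; this is exactly what forces the lower index $k_0$, the least $k \leq j+1$ with $d_k \geq d_{j+1}-\ell$. Since coordinates inside a fixed earlier block are symmetric, one representative direction $s_t$ suffices per block, with the symmetry recorded separately. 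Summing the block contributions and isolating the contribution of the current block $r = t_{j+1}$ gives the stated decomposition $|\N_q(u,\AA)| = |\N_q^{(j+1)}(u,\AA)| + \sum_{t=t_{k_0}}^{r-1}|\N_q^{(s_t)}(u,\AA)|$ in the range $\ell < d_{j+1}-1$.

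For each fixed block I would count the ingredients separately. The scalar contributes the factor $q-1$. The translations attached to the complete factors range over the relevant subfields and contribute the product $\prod_{i \neq k} d_i$ taken over the first $j+1$ directions with the distinguished direction removed, and, in the case $t_k < r$, an additional product $\prod_{i=s_{t_k}+1}^{j+1} d_i$ arising from the extra freedom when the distinguished direction sits in a strictly smaller subfield block. The choice of the independent directions spanning the complete factors is governed by flags of subspaces inside the block $r$, which is precisely what the Gaussian binomials $\stirling{\mu_r}{\,\cdot\,}_{d_{j+1}}$ record; the crucial point is that many different tuples of linear forms yield the same codeword, so one must count subspaces rather than ordered bases, and this is the source of the $q$-analogs with base $d_{j+1}$. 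The distinguished direction itself is then a one dimensional choice inside the appropriate block, contributing $\stirling{s_r-j}{1}_{d_k}$ when $t_k=r$ and $\stirling{\mu_{t_k}}{1}_{d_k}$ when $t_k<r$, while the set of retained field values for the incomplete factor contributes $\binom{d_k}{d_{j+1}-\ell}$. Assembling these yields the two displayed formulas for $|\N_q^{(k)}(u,\AA)|$, and the boundary value $\ell=d_{j+1}-1$ is handled separately, since there the incomplete factor degenerates into a complete one, collapsing the binomial and the one dimensional choice into the single Gaussian binomial $\stirling{\mu_r}{j+1-s_{r-1}}_{d_{j+1}}$.

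I expect the main obstacle to be the redundancy analysis: determining exactly when two distinct choices of scalar, directions, translations, and root set produce the same evaluation vector, so as to avoid both over- and under-counting. The nested subfield structure makes this delicate, because the symmetries preserving the normal form act differently across the blocks $F_1 \subsetneq \cdots \subsetneq F_{\lambda}$, and the passage from counting linear forms to counting subspaces, i.e.\ the replacement of naive products by Gaussian binomials, must be justified block by block. A secondary difficulty is matching the degree bookkeeping $\deg f \leq u$ and $\deg_{X_i} f < d_i$ with the feasibility condition $d_k \geq d_{j+1}-\ell$, and verifying that the distinguished direction can genuinely migrate into strictly earlier blocks, which is what produces the new summation $\sum_{t=t_{k_0}}^{r-1}$ that has no counterpart in the Reed--Muller case.
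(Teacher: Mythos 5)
Your plan reproduces the paper's overall architecture --- classify minimum weight codewords by the block $t_k$ carrying the incomplete factor, impose the feasibility threshold $k_0$, keep one representative direction $s_t$ per block, and sum the block contributions --- but it leaves unproved the step that makes the summation formula legitimate: that the block index is actually an \emph{invariant} of the codeword. A minimum weight codeword admits many representations $\gamma\, h_k^{\Omega}\circ\sigma$ with $\sigma\in\Aff(\AA)$, and nothing in the normal form rules out, a priori, the same evaluation vector arising with the distinguished direction in two different blocks; were that to happen, your decomposition would double count. The paper devotes Proposition \ref{prop_empty-intersection} to exactly this: it shows $\N_{q}^{(k_1)}(u,\AA)\cap\N_{q}^{(k_2)}(u,\AA)=\emptyset$ for $t_{k_1}<t_{k_2}$ by comparing the degree-$u$ monomials of $h_{k_1}^{\Omega_1}$ with those of $\gamma\, h_{k_2}^{\Omega_2}\circ\sigma$ modulo $\I(\AA)$, arranging the exponents into an $s_r\times s_r$ block matrix and showing that the nonnegativity of the block sums $S_{\rho,\eta}$ is violated precisely because $\ell<d_{j+1}-1$. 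You fold this into the ``redundancy analysis'' you flag as the main obstacle, but it is a separate idea with its own mechanism, and your proposal gives no indication of how to obtain it.

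The second gap is in the per-block count itself. Your factor-by-factor product --- in particular the claim that the root set ``contributes $\binom{d_k}{d_{j+1}-\ell}$'' --- is false for a fixed $\Omega$: the number of codewords attached to a given $\Omega$ is $B_k/|\Delta_{\Omega}|$, where $\Delta_{\Omega}\subseteq\AGL(1,F_{t_k})$ is the stabilizer of $\Omega$ under affine maps of the line, and $|\Delta_{\Omega}|$ genuinely varies with $\Omega$ (sets with affine symmetry, such as cosets of additive subgroups, carry larger stabilizers). The clean binomial emerges only in aggregate, via the class-equation identity $\sum_{i}1/|\Delta_{\Omega_i}|=\binom{d_k}{s}/\bigl(d_k(d_k-1)\bigr)$ summed over orbit representatives (equation \eqref{eqn_stab-reciprocal-sum} applied to the line $F_{t_k}$); this averaging is also what produces the factor $\stirling{\mu_{t_k}}{1}_{d_k}$. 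The paper makes all of this rigorous by realizing each $\N_{q}^{(k,\Omega)}(u,\AA)$ as an orbit of $\GA=\F_q^{*}\times\Aff(\AA)$, using the matrix description of $\Aff(\AA)$ from Proposition \ref{prop_affine  group}, computing stabilizers through Lemmas \ref{lemma_stab_1}--\ref{lemma_stab_3} and Lemma \ref{lemma_BLT}, and running an induction on the number of blocks that strips off the $F_1$-block; it is this stabilizer computation, not an independent-choices product, that justifies replacing ordered tuples of linear forms by Gaussian binomials and translations by the factors $\prod_{i\neq k}d_i$ and $\prod_{i=s_{t_k}+1}^{j+1}d_i$. As it stands, your plan names the correct answer and the correct decomposition, but executed literally it would risk double counting across blocks and would miscount within a block.
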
  

This article is organized as follows. In Section \ref{section_preliminaries}, we give some preliminary results pertaining to basics of finite fields, group theory and coding theory that will be used later. In Section \ref{section_special_case}, we consider the special case of affine Cartesian codes where subsets of the Cartesian product $\AA$ are assumed to be nested subfields of $\Fq$. We give a characterization of the affine group with respect to $\AA$ in terms of non-singular matrices. Section \ref{section_enumeration} focuses on the problem of enumerating minimum weight codewords and is further divided into two subsections following different cases that we need to consider. We conclude this article with Section \ref{section_conclusion} where we give a proof of Theorem \ref{thm_main} and deduce some special cases.   

\section{Preliminaries}\label{section_preliminaries}
Let $m$ and $\Fq$ be as defined in Section $\ref{section_intro}$ and remain fixed throughout the article. We begin with an elementary yet crucial property related to elements of finite fields.
 
\begin{lemma}[\cite{Hou}, Lemma 2.21]\label{lemma_a^s}
Let $s$ be an integer and $0 \leq s \leq q-1$. Then \begin{equation} \label{eqn_a^s}
    \sum_{ a \in \mathbb{F}_q} a^{s} = 
    \begin{cases}
        0 & \text{ if  $0 \leq s \leq q-2 $}, \\
        -1 & \text{ if  $s = q-1$}. \\
  \end{cases}
\end{equation} 
\end{lemma}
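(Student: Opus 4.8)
The plan is to prove the identity by separating the range of $s$ into three cases and exploiting two standard facts about finite fields: first, that $\Fq$ has characteristic $p$, so that $q = 0$ and hence $q - 1 = -1$ in $\Fq$; and second, that the multiplicative group $\Fq^{*} = \Fq \setminus \{0\}$ is cyclic of order $q-1$. The two extreme values $s = 0$ and $s = q-1$ will be handled by direct computation, while the generic range $1 \le s \le q-2$ will use a multiplicative-translation (orbit-reindexing) argument.

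For the boundary cases, first I would treat $s = 0$: with the convention $0^0 = 1$, every term $a^0$ equals $1$, so $\sum_{a \in \Fq} a^0 = q = 0$ in $\Fq$, which is consistent with the first branch. Next, for $s = q-1$, Fermat's little theorem (equivalently, Lagrange's theorem applied to the group $\Fq^{*}$ of order $q-1$) gives $a^{q-1} = 1$ for every $a \ne 0$, while $0^{q-1} = 0$; hence the sum equals $(q-1)\cdot 1 = q - 1 = -1$, which is the second branch.

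For the main range $1 \le s \le q-2$, I would first discard the $a = 0$ term, since $0^s = 0$ for $s \ge 1$, reducing the sum to $S := \sum_{a \in \Fq^{*}} a^s$. Fix a generator $g$ of the cyclic group $\Fq^{*}$. Because multiplication by $g$ permutes $\Fq^{*}$, reindexing $a \mapsto ga$ yields $g^s S = \sum_{a \in \Fq^{*}} (ga)^s = S$, so that $(g^s - 1) S = 0$. Finally, since $g$ has exact order $q-1$ and $s$, with $1 \le s \le q-2$, is not a multiple of $q-1$, we have $g^s \ne 1$; thus $g^s - 1$ is a nonzero, hence invertible, element of the field, forcing $S = 0$. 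Equivalently, one could sum the geometric series $\sum_{i=0}^{q-2}(g^s)^i = \bigl((g^s)^{q-1} - 1\bigr)/(g^s - 1) = 0$, using $(g^s)^{q-1} = (g^{q-1})^s = 1$.

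Since the result is a classical and wholly elementary statement, there is no substantial obstacle to anticipate; the only points requiring a little care are fixing the convention $0^0 = 1$ in the case $s = 0$, and verifying that $g^s \ne 1$ throughout the range $1 \le s \le q-2$, which is precisely where the upper bound $s \le q-2$ enters the argument.
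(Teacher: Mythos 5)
Your proof is correct and complete: the paper itself offers no proof of this lemma, citing it directly from Hou's \emph{Lectures on Finite Fields} (Lemma 2.21), and your argument --- handling $s=0$ and $s=q-1$ directly, then using a generator $g$ of the cyclic group $\F_q^{*}$ and the reindexing $a \mapsto ga$ to get $(g^s-1)S=0$ with $g^s \neq 1$ for $1 \le s \le q-2$ --- is exactly the standard proof found in such references. You also correctly identify the two delicate points, namely the convention $0^0=1$ and the fact that the upper bound $s \le q-2$ is what guarantees $g^s \neq 1$.
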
 


Denote by $\GL(m, \Fq)$,  the group of $m \times m $ non-singular matrices over $\Fq$. 

\begin{defn}\label{defn_AGL}
The {\bf affine general linear group} of degree $m$ over $\mathbb{F}_{q}$ is
\begin{equation*}
    \AGL(m, \mathbb{F}_{q}) = \left \{  
    \begin{bmatrix}
    A& b \\
    0 & 1\\
    \end{bmatrix}
    : A \in \GL(m,\mathbb{F}_{q} ), b \in \mathbb{F}_{q}^{m} \right \}.
\end{equation*}
\end{defn}
\noindent
It can also be described as follows.
\begin{equation}\label{eqn_AGL}
    \AGL(m ,\Fq):= \{\sigma_{A,b} : \F_q^m \rightarrow \F_q^m | \sigma_{A,b}(\X) = A \X +b \text{ for some } A \in \GL(m, \Fq), b \in \F_q^m  \}. 
\end{equation}
    
Note that $\AGL(m, \Fq)$ is a group with respect to composition of maps, i.e., 
$$
    \sigma_{A', b'} \circ \sigma_{A,b}(\X) = \sigma_{A', b'}(A \X+b) = (A'A)\X + (A'b+b'),
$$
which is same as matrix multiplication
\begin{equation*}
    \begin{bmatrix}
    A' & b' \\
    0 & 1\\
    \end{bmatrix}
    \begin{bmatrix}
    A & b \\
    0 & 1\\
    \end{bmatrix} = \begin{bmatrix}
    A'A & A'b +b' \\
    0 & 1\\
    \end{bmatrix}.
\end{equation*}
Observe that $|\AGL(m, \Fq)| = \left|\Fqm \right| \cdot \left| \GL(m,\Fq) \right| = q^m \prod\limits_{i=0}^{m-1}(q^m-q^i)$.

\begin{defn}\label{defn_group-action}
Let $(G,*)$ be a group with identity $e$ and let $X$ be a set. A {\bf group action} $\Phi$ of $G$ on $X$ is a map 
$$
    \Phi: G \times X \rightarrow X, 
$$
such that  for all $g_1, g_2 \in G, x \in X$, it satisfies 
\begin{enumerate}
    \item $\Phi(e,x) =x$. 
    \item $\Phi(g_1*g_2, x) = \Phi(g_1, \Phi(g_2,x))$.
\end{enumerate}
\end{defn}
\noindent
One can refer to \cite[Section 1.7]{Dummit_Foote} and \cite[Chapter 4]{Dummit_Foote} for more details on group actions.  
\begin{defn}
The {\bf orbit} and {\bf stabilizer} of an element $x \in X$ w.r.t. the action $\Phi$ of $G$ on $X$  are defined respectively by
$$
    \orb_{G}(x) : = \{\Phi(g,x): g\in G\}
    \quad \text{ and }\quad
    \stab_{G}(x):= \{ g \in G: \Phi(g,x) =x\}.
$$
\end{defn}

The orbit and stabilizer of an element are closely related and this relation is given by the following result, called the {\it orbit-stabilizer theorem}. 

\begin{theorem}[\cite{Dummit_Foote}, Chapter 4, Proposition 2] \label{thm_orb-stab}
Let $\Phi$ be an action of a group $G$ on a finite set $X$. Then for any $x \in X$, 
$$
     |\orb_{G}(x)| = \frac{|G|}{ |\stab_{G}(x)|}.
$$
\end{theorem}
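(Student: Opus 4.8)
The plan is to produce an explicit bijection between the orbit $\orb_{G}(x)$ and the collection of left cosets of the stabilizer $\stab_{G}(x)$ in $G$, and then to count the cosets via Lagrange's theorem. Writing $H := \stab_{G}(x)$, the first step I would carry out is to confirm that $H$ is a subgroup of $G$, so that Lagrange's theorem is applicable. Using the two axioms of Definition \ref{defn_group-action}: the identity satisfies $\Phi(e, x) = x$, so $e \in H$; if $g, h \in H$ then $\Phi(g * h, x) = \Phi(g, \Phi(h, x)) = \Phi(g, x) = x$, so $g * h \in H$; and applying $\Phi(g^{-1}, -)$ to the equation $\Phi(g, x) = x$ and simplifying with axiom (2) yields $\Phi(g^{-1}, x) = x$, so $g^{-1} \in H$. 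Hence $H \leq G$.

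Next I would introduce the map
$$\psi : \{\, gH : g \in G \,\} \longrightarrow \orb_{G}(x), \qquad \psi(gH) = \Phi(g, x),$$
sending each left coset to the corresponding orbit element. The heart of the proof is that $\psi$ is a well-defined bijection. The cleanest route is to observe, for any $g_1, g_2 \in G$, the chain of equivalences: $\Phi(g_1, x) = \Phi(g_2, x)$ holds if and only if $\Phi(g_2^{-1} * g_1, x) = x$ (apply $\Phi(g_2^{-1}, -)$ to both sides and simplify using the action axioms), which in turn holds if and only if $g_2^{-1} * g_1 \in H$, i.e. $g_1 H = g_2 H$. Read in one direction this gives well-definedness and in the other injectivity. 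Surjectivity is immediate from the definition of the orbit, since every element of $\orb_{G}(x)$ equals $\Phi(g, x) = \psi(gH)$ for some $g \in G$.

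With the bijection in hand, the orbit and the coset space have the same cardinality, so $|\orb_{G}(x)|$ equals the index $[G : H]$; since $G$ is finite, Lagrange's theorem gives $[G : H] = |G| / |H|$, which is exactly the claimed formula $|G| / |\stab_{G}(x)|$. I expect the only genuinely delicate point to be the simultaneous verification of well-definedness and injectivity of $\psi$, where the group axioms and both action axioms must be combined carefully; once the equivalence above is set up, everything else is routine bookkeeping.
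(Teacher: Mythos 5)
Your proof is correct and is precisely the standard coset-bijection argument: the paper itself gives no proof, citing \cite{Dummit_Foote}, and the proof there is essentially identical to yours (stabilizer is a subgroup, the map $gH \mapsto \Phi(g,x)$ is a well-defined bijection onto the orbit, then Lagrange). Nothing is missing; your handling of well-definedness and injectivity via the single equivalence $\Phi(g_1,x)=\Phi(g_2,x) \iff g_2^{-1}*g_1 \in \stab_G(x)$ is exactly the right way to package the delicate step.
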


Let $0 \leq s \leq q-1$. Denote by $\P_{s}(\Fq)$ the collection of all subsets of $\Fq$ having exactly $s$ elements, i.e.,
$$
    \P_{s}(\Fq):= \{\Omega \subseteq \Fq: |\Omega| =s \}.
$$
Therefore, 
$$
    |\P_{s}(\Fq)| =\binom{q}{s}.
$$
Consider the group action $\Phi: \AGL(1, \Fq) \times \P_{s}(\Fq) \rightarrow \P_{s}(\Fq)$ given by 
$$
    \Phi( \sigma_{a,b}, \Omega) = a\Omega +b: = \{ a \omega_1+b, \ldots, a \omega_s+b\} \text{ for } \Omega = \{ \omega_1, \ldots, \omega_s\}.
$$
Note that $\Phi$ is a well-defined group action. Denote by $[\Omega]$, the orbit of $\Omega $ under the group action $\Phi$, i.e.,
$$
    [\Omega]:= \{\Phi( \sigma_{a,b}, \Omega): \sigma_{a,b} \in \AGL(1, \Fq)\}.
$$ 
We know that any two orbits are either disjoint or identical. Since $\P_{s}(\Fq)$ is a finite set, there will be finitely many disjoint orbits. Suppose there are $\eta$ disjoint orbits and let $\Omega_1, \ldots, \Omega_{\eta} $ be their representatives. Then, it must follow that
$$
    |\P_{s}(\Fq)| = \sum_{i=1}^{\eta} \left| \right [\Omega_i]|. 
$$
But, Theorem \ref{thm_orb-stab} tells us that for $\Omega \in \P_{s}(\Fq)$, 
$$
    \left|[\Omega]\right| = \frac{|\AGL(1, \Fq)|} {|\Delta_{\Omega}|},
$$
where 
\begin{equation}\label{eqn_stab1}
    \Delta_{\Omega} := \{\sigma_{a,b} \in \AGL(1, \Fq): a\Omega+b = \Omega  \},
\end{equation}
is the stabilizer of $\Omega$ with respect to the group action $\Phi.$ Therefore, 
$$
    |\P_{s}(\Fq)| = \sum_{i=1}^{\eta} \frac{|\AGL(1, \Fq)|}{|\Delta_{\Omega_i}|},  
$$
which implies that
\begin{equation}\label{eqn_stab-reciprocal-sum}
      \sum_{i=1}^{\eta}\frac{1}{|\Delta_{\Omega_i}|} = \frac{1}{q(q-1)} \binom{q}{s}.
\end{equation}

\subsection{Affine Cartesian codes}

Let $A_1, \ldots, A_m$ be nonempty subsets of $\Fq$ and let $|A_i| =d_i$. For $1 \leq i \leq m$, we write $A_i = \{ \gamma_{i,1}, \gamma_{i,2},\ldots,\gamma_{i, d_i} \}$. Without loss of generality, we may assume that
$$ 
    1 < d_1 \leq d_2 \leq \cdots\leq d_m \leq q.
$$
Define the cartesian product 
$$
    \mathcal{A} = A_1 \times A_2 \times \cdots \times A_m \subseteq \Fqm.
$$
Recall, from Section \ref{section_intro},
$$
    n = \prod_{i=1}^{m}d_i  \quad \text{ and } \quad K = \sum\limits_{i=1}^{m} (d_{i} -1).
$$
Then $ |\AA| = n$ and we write $\AA = \{P_1, \ldots, P_n\}$. Define 
$$
    \mathcal{I}(\AA) := \{ f \in \S: f(P)=0 \text{ for all } P \in \AA \}.  
$$
Note that $\I(\AA)$ is an ideal of $\S$, called {\it the vanishing ideal of $\AA$}. Moreover, $\I(\AA)$ is given by  
$$
    \I(\AA) = \left\langle g_1, \ldots, g_m \right\rangle,
$$
where the generators $g_1, \ldots, g_m$ of $\I(\AA)$ are univariate polynomials defined by 
\begin{equation}\label{eqn_def-g_i}
    g_i(X_i) = \prod_{j=1}^{d_i} (X_i -\gamma_{i,j}) \text{ for } 1 \leq i \leq m.
\end{equation}

We know that for a subset $Y \subseteq \Fqm$, the function $f: Y \rightarrow \Fq$ can always be expressed by some polynomial $\mathrm{p} \in \S$. Let $\S(\AA)$ be the $\Fq$-algebra of functions defined on $\AA$. Consider the map 
$$
    \psi: \frac{\Fq[X_1, \ldots, X_m]}{\mathcal{I}(\AA)} \rightarrow \S(\AA) \text{ given by } f \mapsto f\pmod{ \I(\AA) }.
$$
Note that $\psi$ is an isomorphism and hence, for every $f \in \S(\AA)$, there exists a unique polynomial $\mathrm{p} \in \S$ such that $\deg_{X_i}(\mathrm{p}) < d_i$ for $1 \leq i \leq m$ and $\psi(\mathrm{p}+\I(\AA))=f$. We say that $\mathrm{p}$ is the {\it reduced} polynomial associated to $f$ and define the degree of $f$ as being the degree of $\mathrm{p}$. Therefore, without loss of generality, we will identify elements of $\S(\AA)$ with their reduced polynomials, i.e., 
$$
    \S(\AA) = \frac{\S}{\mathcal{I}(\AA)} = \{ f \in \S: \deg_{X_i}(f) < d_i \}.
$$
For any integer $u \geq 0$, denote by $\S_{ \leq u}(\mathcal{A})$, the vector subspace of $\S(\AA)$ consisting of polynomials $f$ with $\deg(f) \leq u$ and $\deg_{X_i}(f) < d_i$ for $1 \leq i \leq m$. Then, we can write
\begin{equation}\label{eqn_S_u}
    \S_{\leq  u}(\mathcal{A}) = \{f \in \Fq[X_1,\ldots,X_m]: \deg(f) \leq u, \deg_{X_i}(f) < d_i \text { for } i=1,\ldots,m \}. 
\end{equation}
By convention, we denote $\S_{-1}(\AA) =\{0\}$.

\begin{defn}
For $f \in \S(\AA)$, we define the set 
$$
    \supp_{\AA}(f) := \{ P \in \AA: f(P) \neq 0 \},
$$
and is called the \textbf{$\AA$-support} of a function $f$. Further, cardinality of the set $\supp_{\AA}(f)$ is called $\textbf{Hamming weight}$ of $f$ and is denoted by $|f|_{\AA}$. 
\end{defn} 

\begin{defn}
We denote the \textbf{zero set} of $f \in S(\AA)$ by $\ZZ_{\AA}(f)$ and is defined by 
\begin{equation*}
    \ZZ_{\AA}(f) := \{ P \in \AA: f(P) = 0\} = \AA \setminus \supp_{\AA}(f).
\end{equation*}
Then, $|\ZZ_{\AA}(f)| = |\AA|- |\supp_{\AA}(f)| = n-|f|_{\AA}$.
\end{defn}

\begin{defn}\label{defn_ACC}
For $u \geq 0$, the {\bf evaluation map $\Ev$} is defined as follows: 
\begin{equation}\label{eqn_Ev}
     \Ev: \S_{\leq u} (\mathcal{A}) \rightarrow \mathbb{F}_{q}^{n} \text{ given by } f \mapsto \c_f \text{ where } \c_f := (f(P_1), f(P_2),\ldots,f(P_n)).
\end{equation}
Then, $\Ev$ is a linear injective map and its image is a linear subspace of $\mathbb{F}_{q}^{n}$. Thus, for $u \geq 0 $,  $\AC_{q}(u, \mathcal{A}) = \Ev(\S_{\leq u}(\mathcal{A}))$ is a linear code of length $n = d_1 d_2 \cdots d_m$ and is called the {\bf affine Cartesian code of order $u$} over $\Fq$. Furthermore, note that the map $\Ev$ is surjective if $u \geq K$. By convention, we denote by $\AC_{q}(-1, \AA) = \{ 0\}$, the zero subspace of $\F_q^n$.
 \end{defn}
\begin{remark}
If m=1, then $\AC_{q}(u, \AA) = \RS_{q}(u+1,n)$,  the Reed-Solomon code of 	length $n$ and dimension $u+1$ over $\Fq$.
\end{remark}

\begin{remark} 
Reed-Muller code, $\RM_{q}(u,m)$ is a particular case of affine Cartesian code $
\AC_{q}(u, \AA)$ when $A_1= \cdots = A_m =\Fq$, i.e., 
$$	
	\AA = \F_q^m \text{ implies } \RM_{q}(u,m) = \AC_{q}(u, \AA) .
$$
\end{remark}
The next theorem gives the basic parameters of affine Cartesian codes.

\begin{theorem}  \label{thm_parameters_AC}
For $0 \leq u \leq K$, the affine Cartesian code $\AC_{q}(u, \AA)$ has the following parameters:
\begin{itemize}
    \item length, n :
    $$
        n =d_1 \cdots d_m.
    $$
    
    \item dimension, $\rho_q(u, \AA)$ [\cite{Lopez}, Theorem 3.1]:
        \begin{multline*}
            \rho_q(u, \AA)= \binom{m+u}{u} - \sum_{1 \leq i \leq m} \binom{m+u - d_{i}}{u - d_{i}}  + \sum_{ i < j } \binom{m+u - (d_i + d_j)}{ u- (d_i + d_j)} \\
            - \sum_{i<j<k} \binom{m+u - (d_i + d_j+d_k)}{ u- (d_i + d_j+d_k)} + \ldots+ (-1)^{m} \binom{m+u - (d_1 +\ldots+ d_m)}{ u- (d_1+\ldots + d_m)}. 
        \end{multline*}

    \item minimum distance, $\d_{q}(u, \AA)$ [\cite{Lopez}, Theorem 3.8]: 
    $$
        \delta_{q}(0,\AA) =n,
    $$
    and for $1 \leq u \leq K$, 
    $$
        \delta_{q}(u,\AA) = (d_{j+1} - \ell) d_{j+2} \cdots d_m,
    $$
    where $ 0 \leq j < m$ and $0 < \ell \leq d_{j+1} -1$ are uniquely determined integers such that $u =  \sum\limits_{i=1}^{j}(d_i -1) + \ell$. 
    
    \item dual, $\AC_{q}(u, \mathcal{A})^{\perp}$ [\cite{Beelen_Datta}, Theorem 5.7]: 
    $$
        \AC_{q}(u, \mathcal{A})^{\perp} = \{ (w_1 c_1 , w_2 c_2 , \ldots, w_n c_n): (c_1,\ldots,c_n) \in \AC_{q}(K-u-1, \mathcal{A})\},
    $$
    where $\frac{1}{w_j} = \prod\limits_{i=1}^{m} g_{i}'(P_j)$ for $1 \leq j \leq n$ and $g_i's$ are given by equation \eqref{eqn_def-g_i}.
\end{itemize}
\end{theorem}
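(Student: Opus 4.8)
The plan is to handle the four parameters separately, in increasing order of difficulty. The length is immediate: by Definition \ref{defn_ACC}, $\AC_q(u,\AA) = \Ev(\S_{\le u}(\AA))$ is by construction a subspace of $\Fqn$, so its length is $n = |\AA| = d_1\cdots d_m$. For the dimension I would use that $\Ev$ is injective (noted after \eqref{eqn_Ev}), whence $\rho_q(u,\AA) = \dim_{\Fq}\S_{\le u}(\AA)$. By \eqref{eqn_S_u} a basis of $\S_{\le u}(\AA)$ is the set of monomials $X_1^{a_1}\cdots X_m^{a_m}$ with $0 \le a_i \le d_i-1$ for all $i$ and $a_1+\cdots+a_m \le u$; hence the dimension equals the number of such exponent vectors. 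Counting lattice points $\mathbf a \in \mathbb Z_{\ge 0}^m$ with $|\mathbf a| \le u$ and \emph{no} upper bounds gives $\binom{m+u}{u}$ by stars-and-bars (introduce a slack coordinate). I would then apply inclusion–exclusion over the bad events $a_i \ge d_i$: for a set $S \subseteq \{1,\dots,m\}$ the substitution $a_i \mapsto a_i - d_i$ ($i \in S$) is a bijection onto the vectors with $|\mathbf a| \le u - \sum_{i\in S} d_i$, of which there are $\binom{m+u-\sum_{i\in S}d_i}{u-\sum_{i\in S}d_i}$ (read as $0$ when the argument is negative). Summing with signs $(-1)^{|S|}$ reproduces the stated formula; equivalently one reads it off as the partial sum through degree $u$ of the Hilbert series $\prod_{i=1}^m (1-z^{d_i})/(1-z)^{m+1}$, which is the complete-intersection computation of \cite{Lopez}.

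For the minimum distance I would prove matching bounds. Since $\S_{\le 0}(\AA)$ consists of nonzero constants, whose evaluation has full weight $n$, we get $\delta_q(0,\AA)=n$. For $1 \le u \le K$ the upper bound is achieved by the explicit codeword $\c_f$ with
$$ f = \Big(\prod_{i=1}^{j}\prod_{t=1}^{d_i-1}(X_i-\gamma_{i,t})\Big)\prod_{t=1}^{\ell}(X_{j+1}-\gamma_{j+1,t}); $$
one checks $\deg f = \sum_{i=1}^j (d_i-1)+\ell = u$ and $\deg_{X_i} f < d_i$, and that $f(P)\ne 0$ exactly when the $i$-th coordinate equals $\gamma_{i,d_i}$ for $i\le j$ and the $(j+1)$-th coordinate avoids $\{\gamma_{j+1,1},\dots,\gamma_{j+1,\ell}\}$, giving $|f|_\AA = (d_{j+1}-\ell)\,d_{j+2}\cdots d_m$. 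The matching lower bound — that every nonzero $f\in\S_{\le u}(\AA)$ has at least this many nonzeros on $\AA$ — is the crux. I would obtain it from the footprint bound: fixing a graded monomial order, the number of points of $\AA$ at which $f$ vanishes is at most $n$ minus the size of the footprint of $\langle g_1,\dots,g_m,f\rangle$, and minimizing the resulting quantity over all admissible leading monomials of $f$ (those not divisible by any $X_i^{d_i}$ and of degree $\le u$) yields the claimed distance; equivalently one invokes the Alon–Füredi theorem on grids.

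Finally, for the dual I would exploit that $\AA$ is a complete intersection: by \eqref{eqn_def-g_i} the generators $g_1,\dots,g_m$ of $\I(\AA)$ are univariate and form a regular sequence. I would first establish the orthogonality relation
$$ \sum_{j=1}^n w_j\, f(P_j)\, h(P_j) = 0 \qquad \text{whenever } \deg f + \deg h \le K-1, $$
with $1/w_j = \prod_{i=1}^m g_i'(P_j)$, by reducing $fh$ modulo $\I(\AA)$ to a combination of reduced monomials and applying Lemma \ref{lemma_a^s} coordinatewise: each weighted single-variable power sum $\sum_{\gamma\in A_i}\gamma^{a}/g_i'(\gamma)$ vanishes for $a$ below the top degree $d_i-1$, the finite-field identity of Lemma \ref{lemma_a^s} being the engine (after passing from $A_i$ to all of $\Fq$ via the partial-fraction expansion of $1/g_i$). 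Taking $\deg f \le u$ and $\deg h \le K-u-1$ then shows the scaled code $\{(w_1c_1,\dots,w_nc_n):(c_1,\dots,c_n)\in\AC_q(K-u-1,\AA)\}$ is contained in $\AC_q(u,\AA)^\perp$; a dimension count using the dimension formula above together with the symmetry $\rho_q(u,\AA)+\rho_q(K-u-1,\AA)=n$ forces equality, as in \cite{Beelen_Datta}.

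The main obstacle is the minimum-distance lower bound: the length and dimension are routine combinatorics, and the dual reduces to the power-sum Lemma \ref{lemma_a^s} together with a dimension count, but showing that no nonzero bounded-degree polynomial has fewer nonzeros on the grid $\AA$ than the explicit extremal one requires the footprint/Alon–Füredi machinery and is where the genuine work lies.
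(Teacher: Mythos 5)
Your proposal is sound, but note first that the paper itself offers no proof of Theorem \ref{thm_parameters_AC}: the theorem is a compilation of known facts, quoted with explicit citations to \cite{Lopez} (Theorems 3.1 and 3.8, for the dimension and minimum distance) and to \cite{Beelen_Datta} (Theorem 5.7, for the dual), so there is no internal argument to match against. Measured against those sources, your reconstruction is essentially faithful. The dimension count by inclusion--exclusion over the events $a_i \ge d_i$ is exactly the coefficient extraction from the Hilbert series $\prod_{i=1}^m (1-z^{d_i})/(1-z)^{m+1}$, i.e., the complete-intersection computation of \cite{Lopez} made elementary; your explicit extremal codeword for the upper bound on the minimum distance is the standard one. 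For the lower bound, your footprint/Alon--F\"uredi route is a legitimate and published alternative: \cite{Lopez} instead argues by induction on $m$ with lemmas on zeros of polynomials on grids, while the footprint mechanism is the one of Geil--Thomsen \cite{Geil}. You correctly identify this as the only step with genuine content; the optimization behind it is to minimize $\prod_i y_i$ subject to $1 \le y_i \le d_i$ and $\sum_i y_i \ge \sum_i d_i - u$, where the ordering $d_1 \le \cdots \le d_m$ makes $y_i = 1$ for $i \le j$, $y_{j+1} = d_{j+1}-\ell$, and $y_i = d_i$ for $i > j+1$ optimal, yielding $(d_{j+1}-\ell)d_{j+2}\cdots d_m$. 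For the dual, the orthogonality-plus-dimension-count scheme, with the symmetry $\rho_q(u,\AA) + \rho_q(K-u-1,\AA) = n$ coming from the involution $a_i \mapsto d_i - 1 - a_i$ on exponent vectors, is precisely the argument of \cite{Beelen_Datta}.

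Two small corrections to the dual step. First, the single-variable vanishing $\sum_{\gamma \in A_i} \gamma^{a}/g_i'(\gamma) = 0$ for $0 \le a \le d_i - 2$ is not a consequence of Lemma \ref{lemma_a^s}: that lemma is the special case $A_i = \Fq$, where $g_i(X) = X^q - X$ and $g_i' \equiv -1$. For a general subset $A_i$ the identity is proved self-containedly, either by comparing top coefficients in the Lagrange interpolation of $X^a$ on $A_i$, or by reading off the coefficient of $X^{-1}$ in the partial-fraction expansion of $X^a/g_i(X)$ at infinity (which you mention); the appeal to Lemma \ref{lemma_a^s} should simply be dropped rather than routed through a passage to $\Fq$. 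Second, make explicit that reduction of $fh$ modulo $\I(\AA)$ does not increase total degree: each generator $g_i$ is monic of degree $d_i$ in the single variable $X_i$, so replacing $X_i^{d_i}$ by $X_i^{d_i} - g_i(X_i)$ preserves the bound $\deg \le K-1$. Consequently every reduced monomial $X^a$ of $fh$ satisfies $\sum_i a_i \le K-1$, hence $a_i \le d_i - 2$ for some $i$, and the factorized weighted sum $\sum_{P \in \AA} w_P\, X^a(P) = \prod_{i=1}^m \sum_{\gamma \in A_i} \gamma^{a_i}/g_i'(\gamma)$ has a vanishing factor, completing the orthogonality claim. With these repairs your sketch is a correct proof of all four items.
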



\section{Affine Cartesian codes for Nested Subfields}\label{section_special_case}

We consider a special family of affine Cartesian codes in which the subsets $A_1, \ldots, A_m$ are subfields of $\Fq$. Furthermore, suppose 
$$
    A_1 \subseteq \cdots \subseteq A_m \subseteq \Fq.
$$
\noindent
Then, we can write 
\begin{equation} \label{eqn_AA}
    \AA = F_1 \times \cdots \times F_1 \times \cdots \cdots \times F_{\ld} \times \cdots \times F_{\ld},
\end{equation}
where $F_1 \subsetneq \cdots \subsetneq F_{\ld} \subseteq \Fq $ are distinct subfields of $\Fq$. For $1 \leq t \leq  \lambda$, let $F_t$ is repeated $\mu_t$ times in $\AA$. Then 
$$
    \sum_{t=1}^{\ld} \mu_t=m.
$$
Let $|F_t| = p^{r_t}$, then $r_1 \mid r_2 \mid \cdots \mid r_{\ld}$. Therefore, 
$$
    n  = |\AA| = \prod_{t=1}^{\ld} (p^{r_t})^{\mu_t} = p^{\left(\sum\limits_{t=1}^{\ld} r_t  \mu_t \right)}
$$
Let $\a_t$ be a primitive element of $F_t$ for $ 1 \leq t \leq \lambda$. Then, $F_{t}^{*} = \langle \a_t \rangle $. Define $s_0  =0$ and for $1 \leq t \leq \lambda$,
$$
    s_{t}  := \mu_1 +\ldots +\mu_t = s_{t-1}  + \mu_t.
$$
Note that $s_{\lambda}  =m$. Recall, for each $1 \leq k \leq m$, there exists a unique integer $t =t_k \in \{1, \ldots, \lambda\}$ such that $s_{t_{k}-1}  < k \leq s_{t_k} $. If $\AA = A_1 \times \cdots \times A_m$ with $|A_i| = d_i$. Then, upon comparing with equation $\eqref{eqn_AA}$, we get
$$
    A_i = F_{t_i} \text{ for } s_{{t_i}-1}  < i \leq s_{t_i} .
$$
and hence 
$$
    d_i = p^{r_{t_i}} \text{ whenever } s_{{t_i}-1}  < i \leq s_{t_i} .
$$
In this case, the ideal $\I(\AA) = \{f \in \S(\AA): f(P) =0 \text{ for all } P \in \AA \}$ can be rewritten as
\begin{equation}\label{I_A}
    \I(\AA) = \langle  X_1^{d_1} -X_1, \ldots, X_{m}^{d_m} -X_{m}\rangle.
\end{equation} 

Recall, the affine General Linear group $\AGL(m, \Fq)$ is the group of transformations of $\F_q^m$ of the form $\X \mapsto A\X+b$ for some $A\in \GL(m, \Fq)$ and $b \in \F_q^m$. Carvalho and Neumann \cite{Carvalho} characterized the minimum weight codewords of $\AC_{q}(u, \AA)$ when $\AA$ is a Cartesian product of nested subfields of $\Fq$. Before their main result, they define {\it the affine group associated to $\AA$}, which is given below. 
\begin{defn}\label{defn_Aff_AA}
    The {\bf affine group associated to $\AA$}, denoted by $\Aff(\AA)$ is given by
    \begin{equation}\label{Aff_A}
        \Aff(\AA) = \{ \phi : \AA \rightarrow \AA \text{ given by } \phi = \sigma |_{\AA} \text{ for some } \sigma \in \AGL(m, \Fq) \text{ and } \phi(\AA) =\AA \}.
    \end{equation}
    Further, they say that  polynomials $f, g \in \S(\AA)$ are said to be {\bf $\AA-$equivalent} if there exists $\phi \in \Aff(\AA)$ such that $f =g \circ \phi$.
\end{defn}

\begin{theorem}[\cite{Carvalho}, Theorem 3.5] \label{thm_carvalho}
For $1 \leq  u \leq K$, let 
$$
    u=\sum\limits_{i=1}^{j}(d_i-1) +\ell, \text{ for some } 0\leq j < m \text{ and }  0 < \ell \leq d_{j+1}-1.
$$
The minimum weight codewords of $\AC_{q}(u,\AA)$ are given by evaluations of $f \in \S_{ \leq u}(\AA)$ where $f$ is $\AA$-equivalent to the functions of the form
$$
    h_{k}^{\Omega}(\X) =\g \prod_{\underset{i\neq k}{i=1}}^{j+1}(1-X_{i}^{d_i-1})\prod_{\a=1}^{d_{k}-(d_{j+1}-\ell)}(X_k-\w_{\a}),
$$
for some $ \g \in \F_q^{*}$, $1 \leq k \leq  j+1 $ such that $ d_k \geq d_{j+1}-\ell$, and $\Omega:= \{\w_1, \ldots, \w_{d_{k}-(d_{j+1}-\ell)} \} \subseteq~A_k$. 
\end{theorem}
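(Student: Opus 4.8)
The statement is an ``if and only if'': the evaluation of a reduced $f\in\S_{\le u}(\AA)$ has minimum weight precisely when $f$ is $\AA$-equivalent to some $h_k^\Omega$. I would first dispose of the easy direction, namely that every $h_k^\Omega$ (and hence every $\AA$-equivalent function) is a minimum weight codeword; this is a direct computation that hinges on the subfield hypothesis. Since $A_i=F_{t_i}$ is a subfield of size $d_i$, every nonzero element of $A_i$ satisfies $x^{d_i-1}=1$, so on $\AA$ the factor $1-X_i^{d_i-1}$ is exactly the indicator of $\{X_i=0\}$. Thus $\supp_{\AA}(h_k^\Omega)$ consists of the points with $X_i=0$ for $i\in\{1,\dots,j+1\}\setminus\{k\}$, with $X_k\in A_k\setminus\Omega$ (a set of $d_{j+1}-\ell$ elements since $|\Omega|=d_k-(d_{j+1}-\ell)$), and with $X_i$ arbitrary for $i>j+1$, giving weight $(d_{j+1}-\ell)d_{j+2}\cdots d_m=\delta_q(u,\AA)$ by Theorem~\ref{thm_parameters_AC}. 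A short degree count gives $\deg h_k^\Omega=\sum_{\substack{i=1\\ i\neq k}}^{j+1}(d_i-1)+(d_k-(d_{j+1}-\ell))=u$ with $\deg_{X_i}h_k^\Omega<d_i$, so $h_k^\Omega\in\S_{\le u}(\AA)$. Finally any $\phi\in\Aff(\AA)$ permutes $\AA$ and does not raise total degree, so $\AA$-equivalence preserves both membership in $\S_{\le u}(\AA)$ and Hamming weight.

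The substantial direction is the converse: a reduced $f$ with $|f|_{\AA}=\delta_q(u,\AA)$ must be $\AA$-equivalent to some $h_k^\Omega$. The plan is to induct on $m$. For $m=1$ the code is $\RS_q(u+1,d_1)$ and a weight-$(d_1-\ell)$ polynomial of degree $\le u=\ell$ has exactly $\ell$ roots in $A_1$, hence equals $\g\prod_{\alpha}(X_1-\omega_\alpha)$, which is $h_1^\Omega$. For the inductive step I would slice along the largest-field variable $X_m$: writing $f_a(X_1,\dots,X_{m-1})=f(X_1,\dots,X_{m-1},a)$ for $a\in A_m$, one has $f_a\in\S_{\le u}(\AA')$ with $\AA'=A_1\times\cdots\times A_{m-1}$ and $|f|_{\AA}=\sum_{a\in A_m}|f_a|_{\AA'}$. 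Since each nonzero slice has weight at least $\delta_q(u,\AA')$ while there are only $d_m$ slices, the weight budget, combined with the value of $\delta_q(u,\AA')$ from Theorem~\ref{thm_parameters_AC}, should force strong rigidity on the slices.

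The argument then splits according to whether $X_m$ is a ``free'' variable or an ``active'' one, i.e.\ whether $j+1<m$ or $j+1=m$. When $j+1<m$ one has $\delta_q(u,\AA)=d_m\,\delta_q(u,\AA')$, and I would show that every slice is a minimum weight codeword of the $(m-1)$-variable code and, moreover, that the slices coincide, so that (after reduction) $f$ is independent of $X_m$; applying the inductive hypothesis to the common slice and reintroducing the free variable exhibits $f$ as an $h_k^\Omega$. When $j+1=m$ the reduced $(m-1)$-variable code is the full space, so at least $\ell$ slices vanish; here I would use that vanishing of the slice $X_m=a$ means $(X_m-a)\mid f$, extract a univariate factor in $X_m$, and combine it with the inductive description of a surviving slice. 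In both regimes the per-slice normal forms, together with the univariate data in $X_m$, are assembled into a single $h_k^\Omega$ by choosing one $\phi\in\Aff(\AA)$; the block upper-triangular description of $\Aff(\AA)$ from Section~\ref{section_special_case} is exactly what guarantees that the coordinate permutations, scalings, translations, and lower-field shears used on the individual slices can be realized simultaneously by one element of $\Aff(\AA)$.

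The hard part will be this rigidity-and-gluing step: forcing all relevant slices into a common normal form (the same distinguished coordinate $k$ and compatible sets $\Omega$) and then globalizing the separate affine transformations into one $\phi\in\Aff(\AA)$. This is where the nested-subfield hypothesis is indispensable, as it is what makes $1-X_i^{d_i-1}$ an exact indicator, what yields $\I(\AA)=\langle X_1^{d_1}-X_1,\dots,X_m^{d_m}-X_m\rangle$, and what gives $\Aff(\AA)$ enough transformations to reconcile the slices. I expect the trickiest sub-case to be $j+1=m$ with some slices vanishing, since deleting the vanishing values of $X_m$ destroys the subfield structure of $A_m$, so the factorization in $X_m$ must be handled directly rather than by a clean appeal to the inductive hypothesis. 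A useful conceptual backbone throughout is the footprint bound: minimum weight is equivalent to the equality $|\supp_{\AA}(f)|=\prod_i(d_i-a_i)$ for the leading monomial $X_1^{a_1}\cdots X_m^{a_m}$ of $f$, and the genuine content of the theorem is that such equality forces $f$ to factor into the prescribed indicator and linear factors up to $\AA$-equivalence.
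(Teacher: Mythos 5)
First, a point of orientation: the paper does not prove Theorem \ref{thm_carvalho} at all --- it is imported from Carvalho and Neumann --- so there is no in-paper proof to compare against, and your proposal has to stand on its own (its broad plan, induction on the number of variables with a slicing analysis, is indeed the same general strategy as the cited source). Your forward direction is correct and essentially complete: on a subfield $A_i$ the factor $1-X_i^{d_i-1}$ is the indicator of $X_i=0$, the degree count gives $\deg h_k^{\Omega}=u$ with $\deg_{X_i}h_k^{\Omega}<d_i$, the support count gives weight $(d_{j+1}-\ell)\,d_{j+2}\cdots d_m=\delta_q(u,\AA)$ in agreement with Theorem \ref{thm_parameters_AC}, and $\AA$-equivalence preserves weight and membership in $\S_{\leq u}(\AA)$. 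The base case $m=1$ (Reed--Solomon) is also fine.

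The converse, however, has a genuine gap, and the one concrete intermediate claim you commit to is false. In the regime $j+1<m$ you assert that the slices coincide, so that $f$ is (after reduction) independent of $X_m$. Take $\AA=\F_q^m$ (so $\lambda=1$), $u=1$, i.e.\ $j=0$, $\ell=1$, and $f=X_1+X_m$: this is a minimum weight codeword of $\RM_{q}(1,m)$ of weight $(q-1)q^{m-1}$, each slice $f_a=X_1+a$ is a minimum weight codeword of the $(m-1)$-variable code, yet the slices are pairwise distinct and $f$ genuinely depends on $X_m$. This is no accident: by the characterization in Section \ref{section_special_case}, after composing $h_k^{\Omega}$ with $\phi\in\Aff(\AA)$ the linear forms $L_i$ lie in $F_{t_i}[X_1,\dots,X_{s_{t_i}}]$, so every variable of the top block, including $X_m$, can occur (incidentally, the group $\G$ there is block \emph{lower}, not upper, triangular). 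What is actually true is only that each slice is a minimum weight codeword of $\AC_{q}(u,\AA')$; the entire content of the theorem is the rigidity statement that the assignment $a\mapsto \c_{f_a}$ varies affinely in $a$ --- one common distinguished index $k$, with the data $(\Omega_a,b_a)$ obtained from a single $\Omega$ by an $a$-affine shift --- together with the fact that these per-slice normalizations can be realized simultaneously by one $\phi\in\Aff(\AA)$. Your sketch names this step (``rigidity-and-gluing'') but supplies no argument for it, and the analogous issues in the case $j+1=m$ (pinning down the exact number of vanishing slices, which ranges from $\ell$ up to $d_m-1$ depending on whether $k=m$ or $k<m$, and reconciling the surviving slices) are likewise deferred. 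The closing footprint remark is fine as motivation, but equality in the footprint bound does not by itself produce the claimed factorization. As it stands, the proposal is a plausible road map whose load-bearing step is missing, and whose one stated shortcut (slice coincidence) would fail.
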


\subsection{A Characterization of the Affine Group}
In this section, we characterize the affine group w.r.t $\AA$ given in definition $\ref{defn_Aff_AA}$ in terms of nonsingular matrices and hence determine the size of this affine group. 

\begin{defn}\label{def_G}
We define a subset $\G \subseteq \GL(m, \Fq)$ by
\begin{multline}\label{G}
    \G: =  \{A \in \GL(m, \Fq): A=(a_{i,j})_{m \times m} \text{ given by } a_{i,j} \in F_t 
    \text{ whenever } s_{{t_i}-1}  < i \leq s_{t_i} , \\ 
    \text{ and } 
    a_{i,j} =0 \text{ for } j>s_{t_i} , \text{ for } 1 \leq t_i \leq \lambda \}.
\end{multline}
\end{defn}
Note that elements of $\G$ are block lower triangular matrices. Thus, an element $A \in \G$ can be written as 
$$
    A =\begin{bmatrix}
    A_{11} & \bf{0} & \cdots & \bf{0} \\
    A_{21} & A_{22} & \cdots & \bf{0} \\
    \vdots &  & \ddots & \vdots\\
    A_{\ld 1 } & A_{\ld 2} & \cdots & A_{\ld \ld} \\
    \end{bmatrix},
$$
where $A_{ij} \in M_{\mu_i \times \mu_j}(F_i)$ and $A_{ii} \in \GL(\mu_i, F_i)$ for $i =1, \ldots, \ld$. It is easy to observe that $\G$ is a subgroup of $\GL(m, \Fq)$. The next result gives a characterization of the affine group associated to $\AA$ in terms of nonsingular matrices. 

\begin{prop}\label{prop_affine  group}
The affine group associated to $\AA$ is given by 
    $$
        \Aff(\AA) = \big\{ \phi: \AA \rightarrow \AA \text{ given by } \hspace{0.05 in} \phi = \sigma_{A,b}|_{\AA} \normalfont\text{ where } \sigma_{A,b} \in \AGL(m, \Fq) \text{ for } A \in \G, b\in \AA \big \}.
    $$
\end{prop}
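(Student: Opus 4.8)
The plan is to establish the stated equality by proving the two inclusions separately. For the inclusion $\supseteq$, I would fix $A \in \G$ and $b \in \AA$ and check directly that $\sigma_{A,b}$ carries $\AA$ into itself. Writing $P = (p_1, \dots, p_m) \in \AA$ so that $p_j \in A_j = F_{t_j}$, the $i$-th coordinate of $\sigma_{A,b}(P)$ is $\sum_{j=1}^m a_{ij} p_j + b_i$, and since $a_{ij} = 0$ for $j > s_{t_i}$ this sum runs only over $j \le s_{t_i}$. For such $j$ one has $t_j \le t_i$, hence $F_{t_j} \subseteq F_{t_i}$ and $p_j \in F_{t_i}$; combined with $a_{ij} \in F_{t_i}$ and $b_i \in A_i = F_{t_i}$, every coordinate lands in $F_{t_i} = A_i$, so $\sigma_{A,b}(\AA) \subseteq \AA$. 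Because $A$ is nonsingular, $\sigma_{A,b}$ is injective, and injectivity on the finite set $\AA$ together with $\sigma_{A,b}(\AA)\subseteq\AA$ forces $\sigma_{A,b}(\AA)=\AA$; thus $\sigma_{A,b}|_\AA \in \Aff(\AA)$.

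For the reverse inclusion $\subseteq$, I would start from an arbitrary $\phi = \sigma_{A,b}|_\AA \in \Aff(\AA)$ and show that its matrix and translation vector are already forced to satisfy $A \in \G$, $b \in \AA$. The $i$-th coordinate function of $\phi$ is the polynomial $\ell_i := \sum_{j=1}^m a_{ij} X_j + b_i$, which is reduced since its $X_j$-degree is at most $1 < d_j$. The hypothesis $\phi(\AA) = \AA$ means that $\ell_i$ takes all of its values in $A_i = F_{t_i}$, so every value $c = \ell_i(P)$ satisfies $c^{d_i} = c$; hence $\ell_i^{d_i} - \ell_i$ vanishes on $\AA$, i.e. $\ell_i^{d_i} \equiv \ell_i \pmod{\I(\AA)}$.

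The crux is then a Frobenius computation together with the uniqueness of reduced representatives furnished by the isomorphism $\psi$. Since $d_i = |F_{t_i}|$ is a power of $p$, additivity of the $p$-power map (freshman's dream) gives $\ell_i^{d_i} = \sum_{j=1}^m a_{ij}^{d_i} X_j^{d_i} + b_i^{d_i}$. I would now reduce each $X_j^{d_i}$ modulo $\I(\AA) = \langle X_1^{d_1} - X_1, \dots, X_m^{d_m} - X_m\rangle$: for $j \le s_{t_i}$ one has $F_{t_j} \subseteq F_{t_i}$, so $x^{d_i} = x$ for all $x \in F_{t_j}$ and thus $X_j^{d_i} \equiv X_j$; for $j > s_{t_i}$ one has $d_i < d_j$, so $X_j^{d_i}$ is already reduced and is a monomial of degree $d_i \ge 2$, distinct from $X_j$. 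Comparing the (unique) reduced forms of $\ell_i^{d_i}$ and $\ell_i$ coefficient by coefficient then yields $b_i^{d_i} = b_i$ (so $b_i \in F_{t_i}$), $a_{ij}^{d_i} = a_{ij}$ for $j \le s_{t_i}$ (so $a_{ij} \in F_{t_i}$), and, matching the coefficient of the first-power monomial $X_j$ for $j > s_{t_i}$ which is absent on the left, $a_{ij} = 0$ for $j > s_{t_i}$. These are precisely the defining conditions of $\G$, and since $A \in \GL(m, \Fq)$ by hypothesis we conclude $A \in \G$ and $b \in \AA$.

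The main obstacle I anticipate is the bookkeeping in the reduction step: one must be sure that for $j > s_{t_i}$ the monomial $X_j^{d_i}$ does not collapse to $X_j$ (which is where the strict inequality $d_i < d_j$ and the fact $d_i \ge 2$ enter) and that no two of the surviving monomials coincide, so that the coefficient comparison is legitimate. It is worth stressing that the nested-subfield hypothesis is used twice in an essential way, namely to guarantee $F_{t_j} \subseteq F_{t_i}$ for $j \le s_{t_i}$ (which gives $X_j^{d_i} \equiv X_j$) and to guarantee that each $d_i$ is a power of $p$ (which makes the Frobenius expansion exact); this is precisely why such a clean matrix description is not available for general affine Cartesian codes.
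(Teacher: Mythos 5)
Your proof is correct, and while your argument for the easy inclusion (that $A \in \G$, $b \in \AA$ force $\sigma_{A,b}|_{\AA}$ to be a bijection of $\AA$) coincides with the paper's, your proof of the hard inclusion $\Aff(\AA) \subseteq \{\sigma_{A,b}|_{\AA} : A \in \G,\ b \in \AA\}$ takes a genuinely different route. The paper argues by evaluation at test points and induction: plugging in the standard basis vectors shows every column of $A$ lies in $\AA$ (hence $a_{i,j} \in F_{t_i}$ for all $j$); then evaluating at vectors such as $(\alpha_1, \ldots, \alpha_1, \alpha_{\lambda})^{T}$ built from primitive elements of the subfields forces $a_{i,m}\alpha_{\lambda} \in F_{\lambda-1}$ and hence $a_{i,m}=0$, the full block-triangular shape following by induction on $\lambda$, after a preliminary case split that reduces $b \neq 0$ to $b = 0$ via $\phi(0)=b \in \AA$. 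You instead encode $\phi(\AA) \subseteq \AA$ algebraically as the congruence $\ell_i^{d_i} \equiv \ell_i \pmod{\I(\AA)}$ for each coordinate form $\ell_i$, expand by Frobenius (valid since each $d_i$ is a power of $p$), reduce using $\I(\AA) = \langle X_1^{d_1}-X_1, \ldots, X_m^{d_m}-X_m \rangle$ from equation \eqref{I_A} --- noting $X_j^{d_i} \equiv X_j$ for $j \le s_{t_i}$ (indeed $X_j^{d_j}-X_j$ divides $X_j^{d_i}-X_j$ because $d_j-1 \mid d_i-1$), while for $j > s_{t_i}$ the monomial $X_j^{d_i}$ is already reduced and distinct from every degree-one monomial since $2 \le d_i < d_j$ --- and then compare the unique reduced representatives coefficient by coefficient. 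This buys you several things at once: no induction on $\lambda$, no primitive elements, no case split on $b$ (the constant term yields $b_i^{d_i}=b_i$, i.e.\ $b_i \in F_{t_i}$, in the same stroke as the conditions on $A$), and a marginally stronger conclusion, since you only use $\phi(\AA)\subseteq\AA$ rather than surjectivity. The one delicate point --- that the monomials surviving in the reduced form of $\ell_i^{d_i}$ are pairwise distinct, so the comparison is legitimate --- you flag and settle explicitly; your closing remark also correctly isolates where the nested-subfield hypothesis enters, to which one should add that the binomial description \eqref{I_A} of $\I(\AA)$ itself depends on the $A_i$ being subfields.
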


\begin{proof}
Let the right hand side be denoted by $\H_{\AA}$, i.e.,
\begin{equation}\label{Aff_H}
    \H_{\AA} := \big\{ \phi: \AA \rightarrow \AA \big| \hspace{0.05 in} \phi = \sigma_{A,b}|_{\AA} \normalfont\text{ where } \sigma_{A,b} \in \AGL(m, \Fq), \text{ for some } A \in \G, b\in \AA \big \}.
\end{equation}
Without loss of generality, we consider elements of $\AA$ to be column vectors. We first show that $\H_{\AA} \subseteq \Aff(\AA)$. To see this, let $\phi \in \H_{\AA}$. Then $\phi : \AA \rightarrow \F_q^{m}$ is given by $\phi (\X) = A \X + b$ for some $A \in \G$, $b \in \AA$. We need to show that $\phi$ is a bijection. Since $\AA$ is a finite set, it suffices to show that $\phi$ is injective and $\phi(\AA) \subseteq \AA$. Let $\X = (X_1, \ldots, X_m) \in \AA$. It implies $X_i \in F_{t_i}$ for $s_{{t_i}-1} < i \leq s_{t_i} $ and for $1 \leq t_i \leq \lambda $. For $1 \leq i \leq m$, let $\X_i$ denotes the $i^{th}$ component of $\X= (X_1, \ldots, X_m)$. Then,  
    \begin{equation*}
            (\phi(\X))_i = (A \X+b)_i = \sum_{j=1}^{m} a_{i,j} X_j + b_i.
    \end{equation*}
Since $A \in \G$, we have $a_{i,j} = 0$ for $j >s_{t_i} $ for $ 1 \leq t_i \leq \lambda$. Therefore,
$$
    (\phi(\X))_i = \sum\limits_{j=1}^{s_{t_i} } a_{i,j} X_j +b_i \in F_{t_i}.
$$ 
Hence, $\phi(\X) \in \AA$. Now, we need to show that $\phi$ is injective. Let $\phi(\X) = \phi(\Y)$ for some $\X, \Y \in \AA$. Therefore, it follows that $A \X +b = A \Y +b$ which implies $A(\X-\Y) =0$, but $A \in \GL(m, \Fq)$ and hence $\X= \Y$. Thus, $\phi : \AA \rightarrow \AA$ is a bijection and therefore, $\phi \in \Aff(\AA)$.

On the other hand, to show $\H_{\AA} \subseteq \Aff(\AA)$,  let $\phi \in \Aff(\AA)$, i.e., $\phi : \AA \rightarrow \AA$ is given by $\phi(\X) = A \X+b$ where $A \in \GL(m, \Fq)$ and $b \in \F_q^{m}$. We need to show that $\phi \in \H_{\AA}$. Let $C_1, \ldots, C_m$ denotes the columns of matrix $A$. We consider two cases:

 \medskip
 
\noindent
{\bf Case 1:}
$\underline{b=0}:$\\
    In this case, $\phi: \AA \rightarrow \AA$ is given by $\phi(\X) =A\X$. Let $e_1, \ldots, e_m$ denote the standard basis vectors of $\Fqm$. Clearly, $e_i \in \AA$ for all $1 \leq i \leq m$, and hence $\phi(e_i) =C_i \in \AA $ for all $i \in \{1, \ldots,m \}$. Thus, every column of matrix $A$ is an element of $\AA$. It follows that for $1 \leq t_i \leq \lambda$ such that $s_{{t_i}-1}  < i\leq s_{t_i} $, we get $a_{i,j} \in F_{t_i}$ for all $ 1\leq j \leq m$. Now, it remains to show that for $1 \leq t_i \leq \lambda $, $a_{i,j} =0 $ for $j > s_{t_i} $. We prove the result by using induction on $\lambda$. If $\lambda =1$, there is nothing to prove. Let $\lambda =2$. Then $s_2  = \mu_1 + \mu_2 =m$. Let 
    $$
        A = \begin{bmatrix}
        A_{11}& A_{12}\\
        A_{21}& A_{22}\\
    \end{bmatrix} \in \GL(m, \Fq),
    $$
    be a block matrix where $A_{ij}$ is a matrix of size $\mu_i \times \mu_j$ with entries from $F_i$ for $i, j \in \{1, 2 \}$. We need to show $A_{12} =0$, i.e., $a_{i,j}=0$ for $1 \leq j \leq \mu_1$, $j > \mu_1$. Let $\X^{(1)}=(\underbrace{\a_1, \ldots, \a_1}_{\text{m-1 times}}, \a_2)^{T} \in \AA$. Then $\phi(\X^{(1)}) =A \X^{(1)} \in \AA$. But, for $1 \leq i \leq m$, we can write
    $$
        (A \X^{(1)})_i = \sum_{j=1}^{m}a_{i,j} \X^{(1)}_j = \sum_{j=1}^{m-1} a_{i,j} \a_1 + a_{i,m} \a_2.
    $$
    If $1 \leq i \leq \mu_1$, then $\sum\limits_{j=1}^{m-1} a_{i,j} \a_1 + a_{i,m} \a_2 \in F_1$, but $a_{i,j}, \a_1 \in F_1$ which implies that $a_{i,m} \a_2 \in F_1$. If $a_{i,m } \neq 0$ for some $i$, then $\a_2 \in F_1$, which is a contradiction. Hence, $a_{i,m} = 0$ for $1 \leq i \leq \mu_1$. Now let $\X^{(2)} = (\underbrace{\a_1, \ldots, \a_1}_{\text{m-2 times}},\a_2, \a_2)^{T} \in \AA$. Then for $1 \leq i \leq \mu_1$, $(A \X^{(2)})_i = \sum\limits_{j=1}^{m-2} a_{i,j} \a_1 + a_{i,(m-1)} \a_2 \in F_1$. Thus, $a_{i,(m-1)} \a_2 \in F_1$, which is true if and only if $a_{i,(m-1)} =0$ for $1 \leq i \leq \mu_1$. Thus, proceeding inductively, we get $A_{12}=\bf{0}$.

    Now let us assume the result holds for $\lambda-1$.
    Consider $\X^{(1)} = (\underbrace{\a_1, \ldots, \a_1}_{\text{m-1 times}}, \a_{\ld})^{T} \in \AA$. Then $\phi(\X^{(1)})= A \X^{(1)} \in \AA$. Therefore, 
    $(A \X^{(1)})_i \in F_{\ld-1}$ for $s_{\ld-2}  < i \leq s_{\ld-1} $. It implies $\sum\limits_{j=1}^{m-1} a_{i,j} \a_1 + a_{i,m} \a_{\ld} \in F_{\ld-1}$. But, for $s_{{t_i}-1} < i \leq s_{t_i} $, $(A \X^{(1)})_i \in F_{t_i} \subseteq F_{\ld-1}$  for all $t_i \leq \ld -1$. Thus, we get $a_{i,m } \a_{\ld} \in F_{\ld -1}$ which is possible if and only if $a_{i,m }=0$ for all $1 \leq i \leq s_{\ld-1} $. Continuing as in the case for $\ld =2$, we get $a_{i,j}=0 $ for all $1 \leq i \leq s_{\ld -1} $ and $j > s_{\ld -1} $. Hence, we can write the matrix $A$ as 
    $$
        A = \begin{bmatrix}
            A'_{11}& \bf{0}\\
            A'_{21}& A'_{22}\\
        \end{bmatrix},
    $$
    where $A'_{11} \in \GL(s_{\ld -1} , F_{\ld -1})$, $A'_{22} \in \GL(\mu_{\ld}, F_{\ld})$ and $A'_{21} \in M_{\mu_{\ld} \times s_{\ld -1} }(F_{\ld})$. 
     Consider 
     $$
        \AA' = F_1 \times \cdots \times F_1 \times \cdots \cdots \times F_{\ld-1} \times \cdots \times F_{\ld-1}  \subseteq \F_q^{s_{\ld-1} }.
    $$
    For $\X \in \AA$, Let $\X' \in \AA'$ obtained from $\X$ by truncating last $\mu_{\ld}$  components. Since $A \X \in \AA$  for all $\X \in \AA$, we obtain $A'_{11} \X'  \in \AA'$ for all $\X' \in \AA'$. Then, by induction hypothesis, $A'_{11}$ has the required form, i.e.,
     $$
        A'_{11} = \begin{bmatrix}
        B_{11} & \bf{0}& \ldots & \bf{0} \\
        B_{21} & B_{22} & \ldots & \bf{0} \\
        \vdots & \vdots & \ddots &\vdots\\
        B_{(\ld-1) 1}& B_{(\ld-1) 2}& \ldots & B_{(\ld-1)(\ld-1) }\\
        \end{bmatrix},
    $$
    where $B_{ij} \in M_{\mu_i \times \mu_j}(F_i) $ and $\det(B_{ii}) \neq 0$. Hence, 
    $$
        A = \begin{bmatrix}
        B_{11} & \bf{0} & \ldots & \bf{0} & \bf{0} \\
        B_{21} & B_{22} & \ldots & \bf{0} & \bf{0} \\
        \vdots & \vdots & \ddots &\vdots& \vdots \\
        B_{(\lambda-1) 1}& B_{(\lambda-1) 2}& \ldots & B_{(\lambda-1)(\lambda-1) }& \bf{0}\\
        A_{\lambda 1} & A_{\lambda_2}& \ldots& A_{\lambda(\lambda-1)} & A_{\lambda \lambda}\\
         \end{bmatrix},
    $$
where $A'_{21} = \begin{bmatrix}
    A_{\lambda 1} & A_{\lambda_2} & \ldots & A_{\lambda(\lambda-1)}\\
\end{bmatrix}$ and $A'_{22} = A_{\lambda \lambda}$. Hence, $\phi \in \H_{\AA}$ and thus, $\Aff(\AA) \subseteq \H_{\AA}$.

\medskip
\noindent
{\bf Case 2:} 
    $\underline{b \neq 0}:$\\
    Let $\phi \in \Aff(\AA)$, i.e., $\phi: \AA \rightarrow \AA$ is a bijection given by $\phi(\X) = A \X+b$ where $A \in \GL(m, \Fq)$, $b \in \F_q^m$. We need to show that $A \in \G$ and $b \in \AA$. Note that $\X_0 = (0, 0, \ldots, 0)^{T} \in \AA$ and by the given hypothesis, $\phi(\X_0) \in \AA$, but $\phi(\X_0) = A X_0 +b = b$. Thus, $b  \in \AA$. Now, let $\Y= A\X+b$, then $\Y \in \AA$ for all $\X \in \AA$. Hence, $A\X =\Y - b \in \AA$ for all $\X \in \AA$. Consider $\psi: \AA \rightarrow \AA$ given by $\psi(\X) =A\X$. Then, $\psi$ is  a bijection and hence by Case 1, $A \in \G$. Thus, we have proved that $A \in \G$, $b \in \AA$. Hence, $\phi \in \H_{\AA}$.
\end{proof}

\begin{cor} \label{|Aff_AA|}
With $\AA$ as defined in equation $\eqref{eqn_AA}$, we have
\begin{equation}
    |\Aff(\AA)| =  \prod_{i=1}^{\lambda} \left( d_{s_i }^{ ^{\mu_i(s_{i-1}  +1)}}  \prod_{t=0}^{\mu_{i}-1} (d_{s_i}^{ ^{\mu_{i}}} - d_{s_i}^{ ^t}) \right).
\end{equation}
\end{cor}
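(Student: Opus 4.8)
The plan is to reduce the computation of $|\Aff(\AA)|$ to a count of the pairs $(A,b)$ parametrizing its elements, and then to count $\G$ by exploiting the block lower triangular shape of its members. By Proposition~\ref{prop_affine group}, every element of $\Aff(\AA)$ has the form $\sigma_{A,b}|_{\AA}$ with $A \in \G$ and $b \in \AA$, so I would first study the map $\Theta : \G \times \AA \to \Aff(\AA)$, $(A,b) \mapsto \sigma_{A,b}|_{\AA}$. I claim $\Theta$ is a bijection: surjectivity is exactly the content of Proposition~\ref{prop_affine group}, while for injectivity I would use that each $F_t$, being a field with at least two elements, contains $0$ and $1$, so that the origin $\mathbf{0}$ and all standard basis vectors $e_1,\dots,e_m$ lie in $\AA$. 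Given $\phi = \sigma_{A,b}|_{\AA}$, one then recovers $b = \phi(\mathbf{0})$ and the $j$-th column of $A$ as $\phi(e_j) - \phi(\mathbf{0})$, so $(A,b)$ is determined by $\phi$. Hence $|\Aff(\AA)| = |\G|\cdot|\AA| = n\,|\G|$, where $n = |\AA| = \prod_{t=1}^{\lambda} d_{s_t}^{\mu_t}$.

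It then remains to compute $|\G|$. Writing $A \in \G$ in the block form displayed after Definition~\ref{def_G}, with diagonal blocks $A_{tt}$ of size $\mu_t \times \mu_t$ having entries in $F_t$ and all blocks strictly above the diagonal equal to $\mathbf{0}$, the determinant factors as $\det A = \prod_{t=1}^{\lambda} \det A_{tt}$. Since each $\det A_{tt}$ lies in $F_t \subseteq \Fq$, the matrix $A$ is nonsingular over $\Fq$ if and only if every diagonal block $A_{tt}$ is nonsingular over $F_t$. Consequently an element of $\G$ is built freely and independently, block row by block row: for each $t$ one chooses an invertible diagonal block $A_{tt} \in \GL(\mu_t, F_t)$ together with the subdiagonal part $(A_{t1},\dots,A_{t,t-1})$, which is an arbitrary $\mu_t \times s_{t-1}$ matrix over $F_t$ (recall $\mu_1 + \cdots + \mu_{t-1} = s_{t-1}$).

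Using $|F_t| = d_{s_t}$, the number of choices available in block row $t$ is
\[
    d_{s_t}^{\mu_t s_{t-1}} \prod_{a=0}^{\mu_t - 1}\bigl(d_{s_t}^{\mu_t} - d_{s_t}^{a}\bigr),
\]
where the first factor counts the subdiagonal entries and the product is $|\GL(\mu_t, F_t)|$. Multiplying over $t$ gives $|\G|$, and therefore
\[
    |\Aff(\AA)| = n\,|\G| = \prod_{t=1}^{\lambda} d_{s_t}^{\mu_t} \cdot \prod_{t=1}^{\lambda}\Bigl( d_{s_t}^{\mu_t s_{t-1}} \prod_{a=0}^{\mu_t-1}\bigl(d_{s_t}^{\mu_t} - d_{s_t}^{a}\bigr)\Bigr).
\]
Absorbing each factor $d_{s_t}^{\mu_t}$ into the subdiagonal power via $d_{s_t}^{\mu_t}\,d_{s_t}^{\mu_t s_{t-1}} = d_{s_t}^{\mu_t(s_{t-1}+1)}$ then yields the claimed formula (after renaming the outer index to $i$ and the inner index to $t$).

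The individual steps are short, so the proof is mostly a matter of bookkeeping. I expect the two delicate points to be the injectivity of $\Theta$, which rests on $\AA$ containing the full affine frame $\mathbf{0}, e_1, \dots, e_m$, and the block-size accounting in the count of $\G$ — in particular the reduction of nonsingularity over $\Fq$ to nonsingularity of the diagonal blocks over the smaller fields $F_t$, together with the identification $|F_t| = d_{s_t}$. The exponent arithmetic in the final simplification is where a careless slip is most likely, so I would verify it against the degenerate cases $\lambda = 1$ (a single $\GL(m,F_1)$ factor) and general matching with $n\,|\G|$.
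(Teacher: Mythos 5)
Your proposal is correct and follows essentially the same route as the paper: both invoke Proposition \ref{prop_affine  group} to write $|\Aff(\AA)| = |\AA|\cdot|\G|$ and then count $\G$ block row by block row as $\prod_t |\GL(\mu_t,F_t)|\cdot|M_{\mu_t\times s_{t-1}}(F_t)|$, with the same exponent bookkeeping at the end. Your explicit verification that $(A,b)\mapsto \sigma_{A,b}|_{\AA}$ is injective (recovering $b=\phi(\mathbf{0})$ and the columns of $A$ from $\phi(e_j)-\phi(\mathbf{0})$) is a small step the paper leaves implicit, and is a worthwhile addition rather than a departure.
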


\begin{proof}
We know that $\Aff(\AA) = \{ \sigma_{A,b} \in \AGL(m, \Fq): A \in \G, b \in \AA\}$. Thus, 
\begin{equation*}
    \begin{split}
        \left| \Aff(\AA) \right| &= |\AA|\cdot |\G|\\
        &= |\AA| \cdot \left( \prod_{i=1}^{\lambda} \left|  \GL(\mu_i, F_i) \right| \right) \left(\prod_{t=1}^{\lambda} \left| M_{\mu_t \times s_{t-1}}(F_t) \right| \right) \\ 
        &= \left (\prod_{i=1}^{\lambda} d_{s_i}^{ ^{\mu_i}} \right)  \left( \prod_{i=1}^{\lambda} \left \{ \prod_{t=0}^{\mu_{i}-1} (d_{s_i}^{ ^{\mu_{i}}} - d_{s_i}^{ ^t})\right\} \right) \left(\prod_{i=1}^{\lambda}d_{s_{i}}^{ ^{\mu_{i}   s_{i-1} }} \right)\\
        &= \prod_{i=1}^{\lambda} \left( d_{s_i}^{ ^{\mu_i(s_{i-1} +1)}}  \prod_{t=0}^{\mu_{i}-1} (d_{s_i}^{ ^{\mu_{i}}} - d_{s_i}^{ ^t}) \right)
    \end{split}
\end{equation*}
\end{proof}

\section{Enumeration of minimum weight codewords}\label{section_enumeration}

We have seen from Theorem \ref{thm_carvalho} that minimum weight codewords of $\AC_{q}(u,\AA)$ are given by evaluations of polynomials of $\S_{\leq u}(\AA)$ of the form $f(\X) = \g \cdot h_k^{\Omega}(\sigma(\X))$ for some $\g \in \F_q^{*}, \sigma \in \Aff(\AA)$ and $h_k^{\Omega}(\X)$ is given by 
\begin{equation}\label{eqn_h-k-Omega}
    h_k^{\Omega}(X_1, \ldots,X_m) = \prod_{\underset{i\neq k}{i=1}}^{j+1}(1-X_{i}^{d_i-1})\prod_{\a=1}^{d_{k}-(d_{j+1}-\ell)}(X_k-\w_{\a}),
\end{equation}
for some $1\leq k\leq j+1$ such that $d_k \geq d_{j+1}-\ell $ and $\Omega:= \{ \w_1, \ldots, \w_{d_{k}-(d_{j+1}-\ell)} \} \subseteq A_k$. Using the characterization of the affine group $\Aff(\AA)$ from Proposition \ref{prop_affine  group}, we can write the polynomial $f(\X)$ as follows:
\begin{equation}\label{eqn_f_min}
    f(X_1, \ldots,X_m) = \g  \prod_{\underset{i\neq k}{i=1}}^{j+1}(1-(L_{i}+ b_i)^{d_i-1}) \cdot \prod_{\a=1}^{d_{k}-(d_{j+1}-\ell)}(L_k+b_k-\w_{\a}),
\end{equation}
for some $ \g \in \F_q^{*}$, $1 \leq k \leq j+1$ such that $d_{k} \geq d_{j+1}-\ell$, and linearly independent linear homogeneous polynomials $L_1, \dots , L_{j+1} \in \Fq[X_1, \dots , X_m]$, and scalars $b_i, \omega_i \in \Fq$ such that $ \Omega:= \{ \w_1,\ldots, \w_{d_{k}-(d_{j+1}-\ell)} \} \subseteq A_k$ and 
$$
    L_i \in F_{t_i}[X_1, \dots , X_{s_{t_i}}] \text{ and } b_i \in F_{t_i} \quad \text{whenever } s_{{t_i}-1} < i \leq s_{t_i} \text{ for } 1\leq t_i \leq \lambda.
$$   
Thus, the support of a minimum weight codeword of $\AC_q(u, \AA)$ is given by 
\begin{equation*}\label{eqn_supp_f}
    \supp_{\AA}(\c_f) = \bigg( \bigcap_{\underset{i\neq k}{i=1}}^{j+1} Z_{\AA}\left(L_i + b_{i} \right) \bigg) \bigcap  \bigg( \bigcup_{\w \in A_k \setminus \Omega} Z_{\AA}(L_{k}+b_k-\w) \bigg).  
\end{equation*}
Throughout the article, we will use the notations of equation $\eqref{eqn_f_min}$ to describe polynomials corresponding to minimum weight codewords. Let $\N_{q}(u,\AA)$ denotes the set of minimum weight codewords of $\AC_{q}(u,\AA)$. Then 
$$
    \N_{q}(u, \AA) = \bigcup_{\substack{k=1\\ d_{k} \geq d_{j+1}-\ell}}^{j+1}  \N_{q}^{(k)}(u, \AA),
$$
where
\begin{equation}\label{eqn_N-k}
    \N_{q}^{(k)}(u, \AA) := \bigcup_{\substack{\Omega \subseteq A_k\\ |\Omega| = d_{k}-(d_{j+1}-\ell)}} \N_{q}^{(k, \Omega)}(u, \AA), 
\end{equation}
and for a fixed $k$ and $ \Omega$, $N_{q}^{(k, \Omega)}(u, \AA)$ is given by 
\begin{equation}\label{eqn_N-k-Omega}
    \N_{q}^{(k, \Omega)}(u, \AA)  := \left\{ \c_{f}: f(\X)= \gamma \cdot  h_{k}^{\Omega}(\sigma(\X))  \text{ for some }\g \in \F_q^* \text{ and } \sigma \in \Aff(\AA) \right\}. 
\end{equation}
Therefore, 
\begin{equation}\label{eqn_N}
    \N_{q}(u, \AA) = \bigcup_{\substack{k=1\\ d_{k} \geq d_{j+1}-\ell}}^{j+1}  \left( \bigcup_{\substack{\Omega \subseteq A_k\\|\Omega| = d_{k}-(d_{j+1}-\ell)}}  \N_{q}^{(k, \Omega)}(u, \AA) \right).
\end{equation}
We determine $\left| \N_{q}(u,\AA) \right|$ for $-1 \leq u \leq K$. We know, by convention $\AC_{q}(-1, \AA) = \{ \bf{0}\}$. Therefore, $\N_{q}(-1, \AA) = \emptyset$, and hence 
$$
    \left| \N_{q}(-1, \AA) \right|= 0.
$$
If $u=0$, then $\AC_{q}(u, \AA) = \{ (\lambda, \ldots, \lambda): \lambda\in \Fq \}$. Note that $\d_{q}(0, \AA) =n$. Therefore, 
$\N_{q}(u, \AA) = \left| \{ \lambda (1, \ldots, 1): \lambda \in \F_q^*\} \right|$, and hence 
\begin{equation}\label{eqn_N_0}
    \left| \N_{q}(0, \AA) \right|= q-1.    
\end{equation}
Now, consider $1 \leq u \leq K$. Write 
\begin{equation}\label{eqn_u}
    u = \sum_{i=1}^{j}(d_i-1)+ \ell, \qquad  0 \leq j < m,  \quad 0 < \ell \leq d_{j+1}-1. 
\end{equation}
Define 
\begin{equation}\label{eqn_G_A}
    \GA: = \F_q^* \times  \Aff(\AA) = \{ (\g, \sigma_{A,b}): \g \in \F_q^*, \sigma_{A,b} \in \Aff(\AA) \text{ for some } A \in \G, b \in \AA \},
\end{equation}
where $\G$ is given by definition \ref{def_G}. Then, $\GA$ is a group with respect to component-wise operation. Moreover, 
\begin{equation}\label{eqn_|G_A|}
    |\GA| = |\F_q^*|\cdot |\Aff(\AA)| =(q-1) \prod_{i=1}^{\lambda} \left( d_{s_i}^{ ^{\mu_i(s_{i-1} +1)}}  \prod_{t=0}^{\mu_{i}-1} (d_{s_i}^{ ^{\mu_{i}}} - d_{s_i}^{ ^t}) \right).
\end{equation}
Fix $k, \Omega$ where $1 \leq k \leq j+1$ such that $d_{k} \geq d_{j+1}- \ell$ and  $\Omega \subseteq A_k$ of size $d_{k}-(d_{j+1}-\ell)$. Consider the group action $\Phi$ of the group  $ \GA$ acting on the set  $\N_{q}^{(k, \Omega)}(u, \AA)$ given by 
\begin{equation}\label{eqn_group-action}
    \Phi: \GA \times \N_{q}^{(k, \Omega)}(u, \AA) \rightarrow \N_{q}^{(k, \Omega)}(u, \AA) \quad \text{ given by }  \quad ((\g, \sigma_{A,b}), \c_f) \mapsto \g \cdot \c_g,
\end{equation}
where $g (\X)=f( \sigma_{A,b}(\X)).$ Note that it is a well-defined group action and moreover, it is easy to observe that 
$$
    \N_{q}^{(k, \Omega)}(u,\AA) = \{ \Phi((\g, \sigma_{A,b}), \c_{h_{k}^{\Omega}}): (\g, \sigma_{A,b}) \in \GA\}.
$$
Therefore, we can say 
\begin{equation*}\label{eqn_N=orb}
    \mathcal{N}_{q}^{(k, \Omega)} \left(u,\mathcal{A} \right) = \mathrm{orb}_{\GA}(\c_{h_{k}^{\Omega}}).
\end{equation*}
To determine $\left| \N_{q}^{(k, \Omega)}(u,\AA) \right|$ for all possible choices  for $k$ and $\Omega$, it suffices to determine $\left| \stab_{\GA}(\c_{h_{k}^{\Omega}})\right|$ where 
\begin{equation}\label{eqn_stab_h-k-omega}
    \stab_{\GA}(\c_{h_{k}^{\Omega}}) : = \left\{ (\g, \sigma_{A,b}) \in \GA: \g \cdot h_{k}^{\Omega}(A\X+b) = h_{k}^{\Omega}(\X) \right \}.
\end{equation}

Recall that for $1 \leq i \leq j+1$,  $t_i$ is the unique integer such that $s_{t_{i}-1} < i \leq s_{t_i}$. Define $r := t_{j+1}$.   The following lemmas provide some fundamental conclusions $\stab_{\GA}(\c_{h_{k}^{\Omega}})$. 

\begin{lemma}\label{lemma_stab_1}
For $1 \leq u \leq K$, let $u = \sum\limits_{i=1}^{j}(d_i -1)+ \ell$ for unique $j, \ell$ such that $0\leq j <m$ and $0 <\ell \leq d_{j+1}-1$. Let $k, \Omega, \GA$ be as defined before. If 
    $(\g, \sigma_{A,b}) \in \stab_{\GA}(\c_{h_{k}^{\Omega}})$, then the following holds: 
    \begin{itemize}
        \item[(i)] $a_{i,k} =0$ for $1 \leq i \leq j+1, i \neq k$.
        \item[(ii)] $b_i = 0$ for $1 \leq i \leq j+1, i \neq k$.
        \item[(iii)] $a_{k,k}\neq 0$ and $\g = (a_{k,k})^{-s}$ where $s =d_k -(d_{j+1}-\ell)$. 
    \end{itemize}
\end{lemma}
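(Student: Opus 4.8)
The plan is to turn the stabilizer relation into a pointwise identity on $\AA$ and then test it along a single coordinate axis. Writing $\phi=\sigma_{A,b}$, the condition $(\g,\sigma_{A,b})\in\stab_{\GA}(\c_{h_{k}^{\Omega}})$ says that the relation $\g\,h_{k}^{\Omega}(A\X+b)=h_{k}^{\Omega}(\X)$ holds at every point $\X\in\AA$, since equality of the two codewords is equality of their entries. By Proposition \ref{prop_affine  group} we have $A\in\G$ and $b\in\AA$, so $\phi(\AA)=\AA$; thus for $\X\in\AA$ the $i$-th coordinate of $\phi(\X)$ lies in $A_i=F_{t_i}$, and I shall repeatedly use that any $v\in F_{t_i}$ satisfies $v^{d_i}=v$, so that $1-v^{d_i-1}$ is nonzero exactly when $v=0$.

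First I would restrict to the axis $\{\w e_k:\w\in A_k\}\subseteq\AA$, where $e_k$ is the $k$-th standard basis vector. Setting $\pi(T):=\prod_{\a=1}^{s}(T-\w_\a)$ with $s=d_k-(d_{j+1}-\ell)$, every factor $1-X_i^{d_i-1}$ with $i\neq k$ equals $1$ at $X_i=0$, so $h_{k}^{\Omega}(\w e_k)=\pi(\w)$. On the other hand $\phi(\w e_k)=\w C_k+b$, where $C_k$ is the $k$-th column of $A$, so its $i$-th coordinate is $a_{i,k}\w+b_i$, and the identity becomes
$$\g\prod_{\substack{i=1\\ i\neq k}}^{j+1}\bigl(1-(a_{i,k}\w+b_i)^{d_i-1}\bigr)\,\pi(a_{k,k}\w+b_k)=\pi(\w),\qquad \w\in A_k.$$

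To prove (i) and (ii) I would evaluate this at the $d_{j+1}-\ell$ values $\w\in A_k\setminus\Omega$, for which $\pi(\w)\neq 0$. Since $\g\neq 0$, the left-hand side is then nonzero, which forces every factor $1-(a_{i,k}\w+b_i)^{d_i-1}$ to be nonzero; as $a_{i,k}\w+b_i\in F_{t_i}$, this means $a_{i,k}\w+b_i=0$ for each $i\in\{1,\dots,j+1\}\setminus\{k\}$. Evaluating this affine relation at two distinct elements $\w,\w'$ of $A_k\setminus\Omega$ and subtracting gives $a_{i,k}(\w-\w')=0$, hence $a_{i,k}=0$ and then $b_i=0$; this is (i) and (ii). Substituting back, the displayed identity collapses to $\g\,\pi(a_{k,k}\w+b_k)=\pi(\w)$ for all $\w\in A_k$. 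Both sides are polynomials in $\w$ of degree at most $s<d_k$ agreeing at all $d_k$ points of $A_k$, hence are equal as polynomials. Comparing degrees forces $a_{k,k}\neq 0$ (otherwise the left side would be constant while $\pi$ has degree $s\ge 1$), and comparing the leading coefficients of the monic polynomial $\pi$ gives $\g\,a_{k,k}^{s}=1$, i.e.\ $\g=a_{k,k}^{-s}$, which is (iii).

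The hard part is the passage from $a_{i,k}\w+b_i=0$ to $a_{i,k}=0$: it needs two distinct admissible values of $\w$, i.e.\ $|A_k\setminus\Omega|=d_{j+1}-\ell\ge 2$, equivalently $\ell<d_{j+1}-1$. In the extremal case $\ell=d_{j+1}-1$ the set $A_k\setminus\Omega$ is a singleton, the single relation no longer separates $a_{i,k}$ from $b_i$, and the stabilizer can in fact be strictly larger; this is precisely the situation recorded by the separate first branch of Theorem \ref{thm_main}. The other point to watch is the degree argument for (iii), which presumes $s\ge 1$; the degenerate possibility $s=0$, where $h_{k}^{\Omega}$ does not involve $X_k$, must be set aside. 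Beyond these boundary cases, the argument uses only that $\phi$ preserves $\AA$ and the identity $v^{d_i}=v$ on $F_{t_i}$.
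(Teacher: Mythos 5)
Your core maneuver --- restricting the stabilizer identity to the coordinate axis $\{\w e_k:\w\in A_k\}$ --- is the pointwise form of exactly what the paper does: there one substitutes $X_i=0$ for all $i\neq k$ with $i\le s_r$ and compares the two sides as univariate polynomials modulo $X_k^{d_k}-X_k$, extracting (i) from a degree count and (ii)--(iii) from the coefficient of $X_k^{s}$. In the main case $\ell<d_{j+1}-1$, $s\ge 1$, your version is correct and in fact tidier: evaluating at two distinct points of $A_k\setminus\Omega$ (which exist precisely because $d_{j+1}-\ell\ge 2$) yields (i) and (ii) simultaneously and avoids the exponent bookkeeping modulo $X_k^{d_k}-X_k$, and the interpolation plus leading-coefficient step for (iii) is clean since both sides have degree at most $s\le d_k-2<|A_k|$.

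However, your proof does not yet establish the lemma as stated, in two places. First, the sub-case $s=0$ (i.e.\ $d_k=d_{j+1}-\ell$) is squarely within the lemma's scope --- it is exactly the situation of Theorem \ref{thm_d_k=d_j+1 -l}, where the paper invokes the conclusion $\g=a_{k,k}^{0}=1$ --- so it cannot simply be set aside. It is easily closed: your axis identity still gives (i) and (ii), because $A_k\setminus\Omega=A_k$ has $d_k\ge 2$ elements, and then $\g\cdot 1=1$; the assertion $a_{k,k}\neq 0$ follows not from degrees but from nonsingularity. Indeed $d_k=d_{j+1}-\ell<d_{j+1}$ forces $t_k<r$, hence $s_{t_k}\le s_{r-1}\le j$, so every row index $i\neq k$ of the diagonal block $A_{t_k t_k}$ satisfies $i\le j+1$, and (i) kills $a_{i,k}$; column $k$ of that invertible block then has $a_{k,k}$ as its only possibly nonzero entry. (The paper's degree comparison has the same blind spot at $s=0$, where the relevant sum is empty, equals $s$, and yields no contradiction.)

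Second, the lemma as stated includes $\ell=d_{j+1}-1$, which your proof excludes --- but here your diagnosis is accurate, and the statement itself is overstated. For $\AA=\F_q^{2}$, $u=2(q-1)$, $k=2$, $\Omega=\F_q^{*}$, one has $h_{2}^{\Omega}=-(1-X_1^{q-1})(1-X_2^{q-1})$, and the coordinate swap $\sigma_{A,0}$ with $\g=1$ lies in $\stab_{\GA}(\c_{h_{2}^{\Omega}})$ although $a_{1,2}=1\neq 0$, contradicting (i). The paper's own proof covers this case only through the phrase ``without loss of generality, we assume that $i=k$'', which renormalizes the stabilizing element by a symmetry of $h_{k}^{\Omega}$ rather than proving (i) for every element; what the degree count actually shows there (and what the later use via Lemma \ref{lemma_BLT} requires) is only that exactly one row $i$ with $s_{t_k-1}<i\le s_{t_k}$ has $a_{i,k}\neq 0$. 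So the right repair is: prove the main case as you did, add the two-line $s=0$ patch above, and record the extremal case $\ell=d_{j+1}-1$ in this weaker block form instead of claiming (i) literally.
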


\begin{proof}
Let  $(\g, \sigma_{A,b}) \in \stab_{\GA}(\c_{h_{k}^{\Omega}})$. Then it must follow that $\g\cdot h_{k}^{\Omega}(A\X+b) = h_{k}^{\Omega}(\X)$, where $A = (a_{i,j}) \in \G$ and  $b =(b_1, \ldots, b_m)^{T} \in \AA$. Since $A \in \G$, we get $a_{i,j} = 0$ whenever $j > s_{t_i}$ for $1 \leq t_i \leq \lambda$. Hence, for $\Omega = \{ \w_1, \ldots, \w_s\}$ where $s =d_{k}-(d_{j+1}-\ell)$, we have   
$$
    \g \prod_{\substack {i=1 \\ i\neq k} }^{j+1}\left(1 - (L_i+b_i)^{d_i-1}\right) \prod_{\a=1}^{s}(L_k + b_k -\omega_{\a}) \pmod{\I(\AA)} = \prod_{\substack {i=1 \\ i\neq k} }^{j+1}\left(1 - X_i^{d_i-1}\right) \prod_{\a=1}^{s}(X_k -\omega_{\a}),
$$
where $L_1, \ldots, L_{j+1}$ are linearly independent linear homogeneous polynomials such that $L_i \in F_{s_t}[X_1, \ldots, X_{s_t}]$ is given by  
$$
    L_{i}(X_1, \ldots, X_{s_{t_i}}) = \sum_{\zeta=1}^{s_{t_i}} a_{i,\zeta } X_{\zeta}, 
$$ 
and $ b_i \in F_{s_{t_i}}$ for $1 \leq i \leq j+1$. Substituting  $X_1= \cdots= X_{k-1}=X_{k+1}= \cdots= X_{s_r} =0 $, we get
$$
  \g  \prod_{\substack {i=1 \\ i\neq k} }^{j+1}\left(1 - (a_{i,k} X_k+b_i)^{d_i-1}\right) \prod_{\a=1}^{s}(a_{k,k} X_k + b_k -\omega_{\a}) \pmod{(X_k^{d_k}-X_k)}=  \prod_{\a=1}^{s}(X_k -\omega_{\a}).
$$
Now, if $\ell < d_{j+1}-1$, then $s \leq d_{k}-2$.  If we assume $a_{k,k} = 0$, then upon comparing the degrees, we get 
$$
    \sum_{\substack{i=s_{t_{k} -1} +1 \\ i \neq k\\ a_{i,k} \neq 0}}^{j+1}\left( d_{i}-1 \pmod{d_{k}}\right) = s,
$$
which is not possible since $\left(d_{i}-1 \pmod{d_{k}}\right) = d_{k}-1$ for all $ s_{t_{k}-1}+1 \leq i  \leq  j+1$. Therefore, we must have $a_{k,k} \neq 0$  which further implies that $a_{i,k} = 0 $ for all $ i \in \{s_{t_{k}-1}+1, \ldots, j+1\}\setminus \{k\}$.  On the other hand, if $\ell = d_{j+1}-1$, then $s =d_{k}-1$ and hence, upon comparing degrees, we get 
$$
   \sum_{\substack{i=s_{t_{k} -1} +1 \\ a_{i,k} \neq 0}}^{j+1}\left( d_{i}-1 \pmod{d_{k}}\right) = d_{k}-1,
$$
which holds true if and only if there exists a unique $i$ such that $s_{t_{k}  -1} < i \leq s_{t_k }$ and $a_{i,k} \neq 0$. Without loss of generality, we assume that $i=k$. Note that $a_{i,k} = 0$ for all $i \in \{1,\ldots, s_{t_{k-1}}\}$ since $A \in \G$. Thus, it follows that $a_{i,k} = 0$ for $ 1 \leq i \leq j+1, i \neq k$. Now, comparing the coefficient of $X_k^s$ on both sides, we get 
$$
    \g \prod_{\substack {i=1 \\ i\neq k} }^{j+1}\left(1 - ( b_i)^{d_i-1}\right) a_{k,k}^s = 1.
$$
Since $b_i \in A_i = \F_{d_i}$, Lemma \ref{lemma_a^s} tells us that $(1 - ( b_i)^{d_i-1} )\neq 0$ if and only if $b_i=0$. Therefore, we must have  $b_i = 0$ for $1 \leq i \leq j+1, i \neq k$ and $\g =  (a_{k,k})^{-s}$.  Hence, the proof is complete. 
\end{proof}

\begin{lemma}\label{lemma_stab_2}
With notations as before, let $(\g, \sigma_{A,b}) \in \stab_{\GA}(\c_{h_{k}^{\Omega}})$. Then  $a_{i,\zeta} =0$ for $s_{r-1} < i \leq j+1$ and $j+2 \leq \zeta\leq s_r.$
\end{lemma}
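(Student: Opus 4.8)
The plan is to exploit the stabilizer identity \eqref{eqn_stab_h-k-omega}, written as an equality of functions on $\AA$,
$$
\g \prod_{\substack{i=1\\ i\neq k}}^{j+1}\bigl(1-(L_i+b_i)^{d_i-1}\bigr)\prod_{\alpha=1}^{s}(L_k+b_k-\omega_\alpha)=\prod_{\substack{i=1\\ i\neq k}}^{j+1}\bigl(1-X_i^{d_i-1}\bigr)\prod_{\alpha=1}^{s}(X_k-\omega_\alpha),
$$
after feeding in the normalizations of Lemma \ref{lemma_stab_1}: $b_i=0$ and $a_{i,k}=0$ for $i\in\{1,\dots,j+1\}\setminus\{k\}$, together with $a_{k,k}\neq 0$ and $s=d_k-(d_{j+1}-\ell)$. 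The idea is to restrict this identity to the special points $P\in\AA$ with $P_\nu=0$ for $\nu\in\{1,\dots,j+1\}\setminus\{k\}$ and $P_\eta=0$ for $\eta>s_r$, leaving $P_k$ and the block-$r$ coordinates $P_{j+2},\dots,P_{s_r}$ free; these are exactly the variables whose coefficients the lemma concerns. On such points every factor with $i\le s_{r-1}$ collapses to $1$, each $L_i$ with $s_{r-1}<i\le j+1$ and $i\neq k$ reduces to the $F_r$-linear form $\tilde L_i:=\sum_{\eta=j+2}^{s_r}a_{i,\eta}X_\eta$ (the $X_k$-term being absent by Lemma \ref{lemma_stab_1}(i)), and the right-hand side becomes $\prod_{\alpha}(X_k-\omega_\alpha)$.

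First I would settle the indices $i\neq k$. Fix $v=(P_{j+2},\dots,P_{s_r})\in F_r^{\,s_r-j-1}$ and suppose $\tilde L_i(v)\neq 0$ for some such $i$. Since $d_i=|F_r|$ for $s_{r-1}<i\le j+1$, Lemma \ref{lemma_a^s} gives $\tilde L_i(v)^{d_i-1}=1$, so the factor $1-\tilde L_i^{d_i-1}$ vanishes and the entire left-hand side is $0$ for every value of $P_k$. But the right-hand side is $\prod_\alpha(P_k-\omega_\alpha)$, which is nonzero for $P_k\in A_k\setminus\Omega$, a nonempty set because $|\Omega|=s\le d_k-1<|A_k|$. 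This contradiction forces $\tilde L_i(v)=0$ for all $v$, and a linear form vanishing identically on $F_r^{\,s_r-j-1}$ is the zero form, so $a_{i,\eta}=0$ for all $s_{r-1}<i\le j+1$ with $i\neq k$ and all $j+2\le\eta\le s_r$. Note that this argument is insensitive to whether $k$ lies in block $r$ or below it, since the vanishing factor kills the left-hand side regardless of the $k$-factor.

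It remains to treat the index $i=k$ in the case $s_{r-1}<k\le j+1$ (if $k\le s_{r-1}$ there is nothing further to prove). With all $\tilde L_i\equiv 0$ for $i\neq k$, the restricted identity collapses to $\g\prod_\alpha\bigl(a_{k,k}P_k+\tilde L_k(v)+b_k-\omega_\alpha\bigr)=\prod_\alpha(P_k-\omega_\alpha)$, where $\tilde L_k=\sum_{\eta=j+2}^{s_r}a_{k,\eta}X_\eta$ and here $s=\ell\ge 1$. Setting $P_k=0$ and letting $v$ vary, if $\tilde L_k\not\equiv 0$ then $\tilde L_k(v)$ ranges over all of $F_r$, whence $\g\prod_\alpha(c+b_k-\omega_\alpha)$ would be constant in $c\in F_r$. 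But this is a univariate polynomial of degree $s$ with $1\le s\le |F_r|-1$ and nonzero leading coefficient $\g$, hence a nonconstant function on $F_r$, a contradiction. Therefore $\tilde L_k\equiv 0$, i.e. $a_{k,\eta}=0$ for $j+2\le\eta\le s_r$, completing the proof.

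The step I expect to be the main obstacle is the separate treatment of $i=k$: the clean vanishing-indicator contradiction used for $i\neq k$ no longer applies, because the $k$-factor is not an indicator, and one must instead extract the contradiction from the rigidity of low-degree polynomials over $F_r$, keeping careful track of the facts that $|\Omega|<|A_k|$ and that $s=\ell\ge 1$ in precisely this case.
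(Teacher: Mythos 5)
Your proof is correct, and it shares the paper's basic skeleton---after the normalizations of Lemma \ref{lemma_stab_1}, evaluate the stabilizer identity at points of $\AA$ supported only on the coordinates $k$ and $j+2,\dots,s_r$---but it organizes the endgame differently, and arguably better. The paper splits on $s>0$ versus $s=0$ and handles all rows $\rho\in\{s_{r-1}+1,\dots,j+1\}$ (including $\rho=k$) with a single evaluation point whose $\eta$-th coordinate is $a_{\rho,\eta}^{-1}(\omega_1-b_k-a_{k,k}\widehat{\omega})$; that point is tailored to the row $\rho=k$, and as printed it is slightly off: one needs $p_k=\widehat{\omega}$ rather than $p_k=1$ for the $\alpha=1$ factor to vanish, and when $\rho\neq k$ one must separately exclude the degenerate case $\omega_1-b_k-a_{k,k}\widehat{\omega}=0$, where $p_\eta=0$ and no contradiction results. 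You instead split on $i\neq k$ versus $i=k$: for $i\neq k$ your indicator argument is airtight and uniform in $s$ (though note that $x^{d_i-1}=1$ for $x\in F_r^{*}$ is just the order of the multiplicative group, not Lemma \ref{lemma_a^s}, which concerns power sums), and for $i=k$---which arises only when $t_k=r$, so that $s=\ell\geq 1$---you get a clean contradiction from the surjectivity of a nonzero $F_r$-linear form combined with the fact that a polynomial of degree $s$ with $1\leq s\leq d_{j+1}-1$ and leading coefficient $\g\neq 0$ cannot be constant on $F_r$. So what your route buys is robustness at exactly the spot you flagged as delicate: the diagonal row $i=k$ is dispatched by polynomial rigidity rather than by a single carefully engineered point, sidestepping the typo and corner case in the paper's choice.
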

 
\begin{proof}
Let $(\g, \sigma_{A,b}) \in \stab_{\GA}(\c_{h_{k}^{\Omega}})$. Let $s = d_{k}-(d_{j+1}-\ell)$. Then, it follows that 
\begin{multline*}
    \g \prod_{\substack {i=1 \\ i\neq k} }^{j+1}\left(1 - (\sum_{\zeta =1}^{s_{t_i}} a_{i, \zeta} X_{\zeta}+b_i)^{d_i-1}\right) \prod_{\a=1}^{s} \left(\sum_{\zeta =1}^{s_{t_{k}}} a_{k, \zeta} X_{\zeta} + b_k -\omega_{\a}\right) \pmod{\I(\AA)} = \\ 
    \prod_{\substack {i=1 \\ i\neq k} }^{j+1}\left(1 - X_i^{d_i-1}\right) \prod_{\a=1}^{s}(X_k -\omega_{\a}),
\end{multline*}
where $1 \leq t_i \leq r$ is the unique integer such that $s_{t_{i}-1}< i \leq s_{t_i}$ and $\Omega = \{\omega_1, \ldots, \omega_{s}\}\subseteq A_k$.  Using Lemma \ref{lemma_stab_1}, the above equation reduces to 
\begin{multline*}
    \g \prod_{\substack {i=1 \\ i\neq k} }^{j+1}\left(1 - \left(\sum_{\substack{\zeta =1\\ \zeta \neq k}}^{s_{t_i}} a_{i, \zeta} X_{\zeta}\right)^{d_i-1}\right) \prod_{\a=1}^{s} \left(\sum_{\substack{\zeta =1\\\zeta \neq k}}^{s_{t_{k}}} a_{k, \zeta} X_{\zeta} + a_{k,k} X_k + b_k -\omega_{\a}\right) \pmod{\I(\AA)} = \\ 
    \prod_{\substack {i=1 \\ i\neq k} }^{j+1}\left(1 - X_i^{d_i-1}\right) \prod_{\a=1}^{s}(X_k -\omega_{\a}),
\end{multline*}
where $a_{k,k} \neq 0$. 
Since $s =d_k - (d_{j+1}-\ell)$ and $0< \ell \leq d_{j+1}-1$, it follows that $0 \leq s \leq d_{k}-1$ and hence $A_{k}\setminus\{\Omega\} \neq \emptyset$. Let $\widehat{\omega} \in A_{k}\setminus\{\Omega\}$. We first consider the case when $s> 0$. If $a_{\rho, \eta} \neq 0 $ for some $\rho \in \{s_{r-1}+1, \ldots, j+1\}$ and $\eta \in \{j+2, \ldots, s_{r}\}$, then evaluating the above equation at $P = (p_1, \ldots, p_m) \in \AA$ given by 
$$
p_{\zeta} = \begin{cases}
    1 &{\text{ if } \zeta =k},\\
    \frac{1}{a_{\rho, \eta}}\left(\omega_1 -b_k -a_{k,k} \widehat{\omega}\right) &{\text{ if } \zeta=\eta},\\
    0 &{\text{ otherwise, } }\\
\end{cases}
$$
we obtain that $LHS = 0$ whereas $RHS \neq 0$, which is a contradiction. On the other hand, if $s =0$, then the equation becomes 
$$
    \g \prod_{\substack {i=1 \\ i\neq k} }^{j+1}\left(1 - \left(\sum_{\substack{\zeta =1\\ \zeta \neq k}}^{s_{t_i}} a_{i, \zeta} X_{\zeta}\right)^{d_i-1}\right) \pmod{\I(\AA)} =  \prod_{\substack {i=1 \\ i\neq k} }^{j+1}\left(1 - X_i^{d_i-1}\right). 
$$
If $a_{\rho, \eta} \neq 0 $ for some $\rho \in \{s_{r-1}+1, \ldots, j+1\}$ and $\eta \in \{j+2, \ldots, s_{r}\}$, then evaluating the above equation at $e_{\eta} = (0, \ldots, 0, 1, 0, \ldots, 0)$, where $1$ occurs at $\eta^{\textrm{th}}$ position, then $LHS=0$, whereas $RHS=1$, which again is a contradiction. Hence, we must have 
$a_{i, \zeta} = 0$ for all $i \in \{ s_{r-1}+1, \ldots, j+1 \} $ and $\zeta \in \{j+2, \ldots, s_r\}$. 
\end{proof}

\begin{lemma}\label{lemma_stab_3}
With notations as before, let $(\g, \sigma_{A,b}) \in \stab_{\GA}(\c_{h_{k}^{\Omega}})$. Consider the  matrix $A' = (a_{ij}^{'}) \in \G$ where
$$
    a'_{i,j} = \begin{cases}
        a_{i,j} & \text{ if } i \neq k,\\
        a_{k,j} & \text{ if } i = k \text{ and } j \in \{s_{t_k}+1, \ldots, m \} \cup \{k\}.\\
    \end{cases}
$$ 
Then $(\g, \sigma_{A',b}) \in \stab_{\GA}(\c_{h_{k}^{\Omega}})$. 
\end{lemma}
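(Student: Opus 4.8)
The plan is to establish two things: that $A' \in \G$, so that $\sigma_{A',b} \in \Aff(\AA)$ and hence $(\g, \sigma_{A',b}) \in \GA$; and that $\g\, h_{k}^{\Omega}(A'\X + b) \equiv h_{k}^{\Omega}(\X) \pmod{\I(\AA)}$. The guiding observation is that $A'$ agrees with $A$ in every row except the $k$-th, and that $b$ is unchanged. Since $A \in \G$ forces $a_{k,\z} = 0$ for $\z > s_{t_k}$, the linear form attached to row $k$ collapses from $L_k = \sum_{\z} a_{k,\z} X_{\z}$ to $a_{k,k} X_k$ after the modification, so of the two product factors defining $h_{k}^{\Omega}$ only the last, $\prod_{\a}(X_k - \w_{\a})$, is affected by replacing $A$ by $A'$.

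For $A' \in \G$ only nonsingularity needs checking, the block-lower-triangular shape and the subfield conditions being immediate from the construction. As $A'$ is block lower triangular, $\det A' = \prod_{t} \det A'_{tt}$, and only the diagonal block indexed by $t_k$ is altered. Inside this block consider the principal submatrix $B$ on the rows and columns of (global) index in $\{s_{t_k - 1} + 1, \ldots, \min(j+1, s_{t_k})\}$. By Lemma \ref{lemma_stab_1}(i) the column of $B$ corresponding to $k$ carries only its diagonal entry $a_{k,k} \ne 0$, and the modification makes the corresponding row of $B$ equally clean; expanding along that column for $B$ and along that row for the modified $B'$ gives $\det B' = \det B$. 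When $t_k < r$, $B$ is the whole block and we are done. When $t_k = r$, Lemma \ref{lemma_stab_2} shows the block has the shape $\left(\begin{smallmatrix} B & 0 \\ C & D \end{smallmatrix}\right)$ with $D$ nonsingular and with $C, D$ (the rows of index $> j+1$) untouched by the change, so $\det A'_{t_k t_k} = \det B' \det D = \det B \det D = \det A_{t_k t_k} \ne 0$.

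For the congruence I would first isolate the top degree in $X_k$. By Lemma \ref{lemma_stab_1} we have $b_i = 0$ and $a_{i,k} = 0$ for $i \ne k$, so the factor $\prod_{i \ne k}(1 - (L_i + b_i)^{d_i - 1})$ involves no $X_k$, while $\prod_{\a}(L_k + b_k - \w_{\a})$ has $X_k$-degree exactly $s := d_k - (d_{j+1} - \ell) \le d_k - 1$ with leading coefficient $a_{k,k}^{s}$. Comparing coefficients of $X_k^{s}$ in the stabilizer identity $\g\, h_{k}^{\Omega}(A\X + b) \equiv h_{k}^{\Omega}(\X) \pmod{\I(\AA)}$ and using $\g = a_{k,k}^{-s}$ from Lemma \ref{lemma_stab_1}(iii) yields
$$
    \prod_{\underset{i \ne k}{i = 1}}^{j+1}\bigl(1 - L_i^{\,d_i - 1}\bigr) \;\equiv\; \prod_{\underset{i \ne k}{i = 1}}^{j+1}\bigl(1 - X_i^{\,d_i - 1}\bigr) \pmod{\I(\AA)}.
$$
In particular both products are nonzero on exactly the same subset of $\AA$, namely the set $S$ of points at which $X_i = 0$ for every $i \ne k$ with $i \le j+1$. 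At a point of $S$ all $X_{\z}$ with $\z \ne k$ and $\z \le j+1$ vanish, while Lemma \ref{lemma_stab_2} kills every coefficient $a_{k,\z}$ with $j + 2 \le \z \le s_r$ (vacuously so when $t_k < r$, since then $s_{t_k} \le j$); hence $L_k$ reduces to $a_{k,k} X_k$ there and the factors $\prod_{\a}(L_k + b_k - \w_{\a})$ and $\prod_{\a}(a_{k,k} X_k + b_k - \w_{\a})$ coincide. Off $S$ the common factor already vanishes, so $h_{k}^{\Omega}(A'\X + b) \equiv h_{k}^{\Omega}(A\X + b) \pmod{\I(\AA)}$, and multiplying by $\g$ gives the claim.

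The heart of the matter is the joint use of the two previous lemmas. Lemma \ref{lemma_stab_1} makes the $k$-th column of the relevant block clean, which is simultaneously what rescues the nonsingularity of $A'$ and, through the displayed congruence, what identifies the two zero loci and lets me pass from ``$L_i = 0$'' to ``$X_i = 0$''; Lemma \ref{lemma_stab_2} then ensures that no surviving off-diagonal entry of row $k$ disturbs the product factor on $S$. I expect the coefficient comparison producing the displayed congruence, together with the bookkeeping separating $t_k = r$ from $t_k < r$, to be the step requiring the most care.
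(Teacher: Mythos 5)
Your proposal is correct, and it diverges from the paper's proof at the decisive step. The paper sets $S = \supp(\c_{h_{k}^{\Omega}})$, checks directly that $g(\X) = \g\, h_{k}^{\Omega}(A'\X+b)$ agrees with $f(\X)=\g\, h_{k}^{\Omega}(A\X+b) $ on $S$, and then handles $\AA\setminus S$ by contradiction: if $g(P)\neq 0$ there, then $L_i(P)=0$ for all $i\in\{1,\ldots,j+1\}\setminus\{k\}$, and since (by Lemmas \ref{lemma_stab_1} and \ref{lemma_stab_2}) these forms constitute a homogeneous system of $j$ linearly independent equations in the $j$ variables $X_{\z}$, $\z\in\{1,\ldots,j+1\}\setminus\{k\}$, the point must satisfy $p_{\z}=0$ for all such $\z$, forcing $g(P)=\prod_{\a}(p_k-\w_{\a})=0$. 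Your extraction of the coefficient of $X_k^{s}$ from the stabilizer identity — legitimate modulo $\I(\AA)$ because both sides have $X_k$-degree $s< d_k$, so no reduction in $X_k$ ever occurs and reduced representatives are unique — replaces that linear-independence contradiction by the congruence $\prod_{i\neq k}(1-L_i^{d_i-1})\equiv\prod_{i\neq k}(1-X_i^{d_i-1}) \pmod{\I(\AA)}$, which identifies the two zero loci in one stroke and reduces everything to comparing the row-$k$ factors on the common support; this is the same device the paper uses inside the proof of Lemma \ref{lemma_stab_1} (there to pin down $\g$ and the $b_i$), pushed one step further, and it makes the off-support case trivial since the common factor of $f$ and $g$ already vanishes there. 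Beyond this, your write-up buys something the paper omits: the verification that $A'$ is nonsingular, hence genuinely lies in $\G$ — the paper's proof simply asserts $A'\in\G$, whereas your expansion of the $(t_k,t_k)$ diagonal block along its clean $k$-th column (clean by Lemma \ref{lemma_stab_1}(i)) and modified $k$-th row, with the $t_k=r$ case handled via the $\left(\begin{smallmatrix} B & 0\\ C & D\end{smallmatrix}\right)$ shape supplied by Lemma \ref{lemma_stab_2}, closes that small gap. Both routes lean on Lemmas \ref{lemma_stab_1} and \ref{lemma_stab_2} in the same way and yield the same conclusion; yours is slightly cleaner and more self-contained.
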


\begin{proof}
We first note that $A'$ differs from $A$ only in some entries of $k^{th}$-row. Since $A'\in \G$, we have $(\g, \sigma_{A',b}) \in \GA$. Let $g(\X) \in \Fq[X_1, \ldots, X_m]$ be given by $g(\X) = \g \cdot h_{k}^{\Omega} (A' \X+b)$. Then $\c_g \in \N^{(k,\Omega)}(u, \AA)$. To show $(\g, \sigma_{A',b}) \in \stab_{\GA}(\c_{h_k^{\Omega}})$, it suffices to show that $f(P) = g(P) = h_{k}^{\Omega}(P)$ for all $P \in \AA$ where $f(\X) = \g \cdot h_{k}^{\Omega}(A \X+b)$. Since $(\g, \sigma_{A,b}) \in \stab_{\GA}(\c_{h_k^{\Omega}})$, Lemma \ref{lemma_stab_1} gives $a_{i,k} = 0 = b_i$ for all $i \in [j+1]\setminus \{k\}$,  $a_{k,k} \neq 0$ and $\g = (a_{k,k})^{-s}$ where $s = d_k -(d_{j+1}-\ell)$. Therefore, We can write 
\begin{equation*}
    \begin{split}
        g(\X) &=  \g  \prod_{\substack{i=1\\ i \neq k}}^{j+1} \left( 1-\left(\sum_{\zeta=1}^{s_{t_i}} a'_{i, \zeta} X_{\zeta}\right)^{d_i-1} \right) \prod_{\a=1}^{s}\left( \sum_{\zeta =1}^{s_{t_k}} a'_{k ,\zeta } X_\zeta + b_k -\omega_{\a}\right) \\
        &=  (\g a_{k,k}^s ) \prod_{\substack {i=1 \\ i\neq k} }^{j+1}\left(1 - \left( \sum_{\substack {\zeta=1 \\ \zeta \neq k} }^{s_{t_i}} a_{i, \zeta } X_\zeta \right)^{d_i-1}\right) \prod_{\a=1}^{s}\left( \left(\sum_{\substack{\zeta =1\\\zeta \neq k}}^{s_{t_k}} \frac{a'_{k, \zeta }}{a_{kk}} X_\zeta \right) +X_k -\omega_{\a}\right),\\
    \end{split}
\end{equation*}
where $\g a_{k,k}^s = 1$. We know that $(\g, \sigma_{A,b}) \in \stab_{\GA}(\c_{h_{k}^{\Omega}})$ which implies $\c_{f} = \c_{h_{k}^{\Omega}}$ and hence $f(P) = h_{k}^{\Omega}(P)$ for all $P \in \AA$. Let $S = 
 \supp(\c_{h_{k}^{\Omega}})$. Then  
$$
    S = \{ P = (p_1, \ldots, p_m) \in \AA: p_i =0 \text{ for  all } i \in [j+1]\setminus\{k\}, p_k \in A_k \setminus \Omega\}.
$$
Now, if $P \in S$, then  
$$
    g(P) =  \prod_{\a=1}^{s}(p_k -\omega_{\a}) = h_{k}^{\Omega}(P) = f(P).
$$ 
If $P \in \AA \setminus S$, we must have $h_{k}^{\Omega}(P) = f(P) = 0$ since 
\begin{multline*}
    \AA \setminus S = \{(0, \ldots,0,  p_k, 0,\ldots, 0, p_{j+2}, \ldots, p_m): p_k \in \Omega \} \\
    \bigcup \{(p_1, \ldots, p_m): (p_1, \ldots,p_{k-1},p_{k+1}, \ldots, p_{j+1}) \neq (0,\ldots, 0)   \}. 
\end{multline*}
Therefore, it suffices to show that $g(P) = 0$  for all $P  \in \AA \setminus S$. Let there exists $P \in \AA \setminus S$ such that $g(P) \neq 0$. But, it holds true if and only if $\sum\limits _{\substack{\zeta=1 \\ \zeta \neq k}}^{s_{t_i}} a_{i, \zeta} p_{\zeta} = 0 $ for  $1 \leq i \leq j+1, i \neq k $ and $\sum\limits_{\substack{\zeta =1 \\ \zeta \neq k}}^{s_{t_k}} \frac{a'_{k, \zeta }}{a_{k,k}} p_\zeta \neq p_k -\omega_{\a}$ for all $1 \leq \a \leq  s$. It follows from Lemma \ref{lemma_stab_2} that 
$a_{i,\zeta} =0$ for $s_{r-1} < i \leq j+1$ and $j+2 \leq \zeta\leq s_r$. Therefore, the system of equations  $\sum\limits _{\substack{\zeta=1 \\ \zeta \neq k}}^{s_t} a_{i, \zeta} X_{\zeta} = 0 $ is a homogeneous system of $j$ linearly independent equations in $j$ variables, namely $X_1, \ldots, X_{k-1}, X_{k+2}, \ldots, X_{j+1}$. Hence, it admits a trivial solution, i.e., $p_{\zeta} = 0$ for  $ 1 \leq \zeta \leq j+1, \z \neq k$. Now, substituting these values in $g(P)$, we get 
$$
    g(P) = \prod_{\a=1}^{s}(p_k -\omega_{\a})=0,
$$
since $P \in \AA\setminus S$ and $p_i = 0$ for all $i \in [j+1]\setminus\{k\}$, which is a contradiction. Hence, $g(P) = 0 = h_{k}^{\Omega}(P)$ for all $P \in \AA$. Therefore, $(\g, \sigma_{A',b}) \in \stab_{\GA}(\c_{h_k^{\Omega}})$.
\end{proof}

We now determine $\left| \N_{q}^{(k, \Omega)}(u,\AA) \right|$ for all possible choices of $k$ and $\Omega$ and consider different cases accordingly. 

\subsection{Case $\ell = d_{j+1}-1$}\label{subsection_case_1}
For $1 \leq u \leq K$, we write 
$ u = \sum\limits_{i=1}^{j}(d_i -1) + \ell$, for uniquely determined integers $j, \ell$ such that $0 \leq j < m$ and $0 < \ell \leq d_{j+1}-1$. In this section, we consider the case when $\ell = d_{j+1}-1$. Then, we can write  
$$
    u = \sum_{i=1}^{j+1}(d_i-1).
$$
Hence, for $\Omega = \{\w_1, \ldots, \w_{d_{k}-1} \}$, we have 
$$
    h_{k}^{\Omega}(\X) = \prod_{\substack{i=1\\i \neq k}}^{j+1} (1 - X_i^{d_i-1}) \prod_{\a =1}^{d_{k}-1}(X_k -\w_{\a}).
$$
Let $\Omega' = \{\w'_1, \ldots, \w'_{d_k -1}\} \subseteq A_k$ such that $\Omega \neq \Omega'$. Then there exist $a_{k,k}, b_k \in A_k$ such that $a_{k,k} \neq 0$ and $\Omega' = a_{k,k} \Omega+ b_k = \{a_{k,k} \w_i +b_{k}: 1 \leq i \leq d_{k}-1 \}$. Indeed, define 
$$
     a_{k,k} = 1, \quad b_k = \w'_0 - \w_0, 
$$
where $\{ \w_0\} = A_k \setminus \Omega$ and  $\{ \w'_0\} = A_k \setminus \Omega' $. Then, it follows that for $\g=1, A =I_m$ and $b= (0, \ldots, 0, b_k, 0 \ldots, 0)$, we get  $\g \cdot h_{k}^{\Omega}(A\X+b)= h_{k}^{\Omega'}(\X)$ and hence
$$
    \N_{q}^{(k, \Omega)}(u, \AA) = \N_{q}^{(k, \Omega')}(u, \AA).
$$
Therefore, without loss of generality, let $\Omega = A_k \setminus\{0\}$ and since $A_k$ is a subfield of $\Fq$ of size $d_k$, it follows that 
$$ 
    \prod_{\a =1}^{d_k-1}(X_k -\w_{\a}) = 1 - X_k^{d_k -1}.
$$
Hence, we can write 
$$
    h_{k}^{\Omega}(\X) = \prod_{\substack{i=1\\i \neq k}}^{j+1} (1 - X_i^{d_i-1}) \prod_{\a =1}^{d_{k}-1}(X_k -\w_{\a}) = \prod_{i=1}^{j+1} (1 - X_i^{d_i-1}). 
$$
Note that $h_{k}^{\Omega}(\X)$ is independent of $k$ and $\Omega$. Thus, every minimum weight codeword $\c_f$ of $\AC_{q}(u, \AA)$ is given by $\c_f^{} = \g \cdot \c_{h_{j+1} \circ \sigma}^{}$ for some $\g \in \F_q^*$ and $\sigma \in \Aff(\AA)$ and $h_{j+1}(\X) \in \S_{\leq u}(\AA)$ is given by

\begin{equation}\label{eqn_h}
    h_{j+1}(\X) = \prod_{i=1}^{j+1} (1 - X_i^{d_i-1}).
\end{equation}
Thus, 
\begin{equation}\label{eqn_N(l=d-1)}
     \N_{q}(u, \AA) = \orb_{\GA}(\c_{h_{j+1}}).
\end{equation}

\begin{defn}
    Let $t$ be a positive integer such that $1 \leq t \leq m$. Define 
    $$\M_{q}(t,m): = \{ A =(a_{i,j}) \in \GL(m , \Fq): a_{i,j} = 0 \text{ for } 1 \leq i \leq t, t+1 \leq j \leq m \}.$$
\end{defn}

\noindent
Note that $\M_{q}(t,m)$ is a collection of block lower triangular matrices with non-singular diagonal blocks of size $t \times t$ and $(m-t) \times (m-t)$  respectively. Moreover, $M_{q}(t,m)$ is a subgroup of $\GL(m, \Fq)$ and 

\begin{equation}\label{eqn_M_q_t}
    |\M_{q}(t,m)| = q^{t(m-t)} \left(\prod_{i=0}^{t-1}(q^t -q^i) \right) \left( \prod_{\z=0}^{m-t-1}(q^{m-t} -q^{\z})\right).
\end{equation}
Before stating the main result, we prove a lemma that will be used later in this section. 

\begin{lemma}\label{lemma_BLT}
For a positive integer $t$ such that $ 1 \leq t \leq m$, let $h_{t} \in \S$ be given by 
$$
    h_t(X_1, \ldots, X_m) = \prod_{i=1}^{t}(1-X_i^{q -1}).
$$
Consider the set $\mathcal{V}_{q}(t,m) := \left\{(\g, \sigma_{A,b})\in \F_q^* \times \AGL(m, \Fq) : \g \cdot \c_{ h_{t}\circ \sigma_{A,b}} = \c_{h_{t}} \right\}$. Then,
$$
    (\g, \sigma_{A,b}) \in \mathcal{V}_{q}(t,m) \iff    \g = 1, A \in \M_{q}(t,m), \text{ and } b_1 = \cdots =b_t =0.
$$
\end{lemma}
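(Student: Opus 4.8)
The plan is to exploit the fact that $h_t$ is, up to reduction modulo $\langle X_i^q - X_i\rangle$, the indicator function of the linear subspace
$$
    V := \{X = (X_1, \ldots, X_m) \in \F_q^m : X_1 = \cdots = X_t = 0\}.
$$
Indeed, since $1 - a^{q-1}$ equals $1$ for $a = 0$ and $0$ for $a \in \Fq^*$, the value $h_t(P)$ is $1$ if $P \in V$ and $0$ otherwise; in particular $\supp(\c_{h_t}) = V$ and $h_t$ is $\{0,1\}$-valued on $\F_q^m$. Because the evaluation is taken over all of $\F_q^m$ and $h_t$ already has degree $< q$ in each variable, the defining condition $\g \cdot \c_{h_t \circ \sigma_{A,b}} = \c_{h_t}$ is equivalent to the pointwise identity $\g\, h_t(AX + b) = h_t(X)$ for every $X \in \F_q^m$. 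I would prove the two implications separately.

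For the implication $(\Leftarrow)$, I assume $\g = 1$, $A \in \M_q(t,m)$ and $b_1 = \cdots = b_t = 0$, and verify the pointwise identity directly. Writing $A$ in block lower-triangular form with nonsingular diagonal blocks $A_{11}$ (of size $t$) and $A_{22}$, the first $t$ coordinates of $AX + b$ are $(A_{11}(X_1,\dots,X_t)^{T})_i$ for $1 \le i \le t$, since $a_{i,j} = 0$ for $j > t$ and $b_i = 0$. Hence $h_t(AX+b) = 1$ iff $A_{11}(X_1,\dots,X_t)^{T} = 0$, and as $A_{11}$ is nonsingular this happens iff $X_1 = \cdots = X_t = 0$, i.e. iff $h_t(X) = 1$; on the complementary set both sides vanish. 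This yields $h_t(AX+b) = h_t(X)$ for all $X$.

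The substance is in $(\Rightarrow)$. First I would evaluate the identity $\g\, h_t(AX+b) = h_t(X)$ at $X = 0$, which gives $\g\, h_t(b) = h_t(0) = 1$. Since $h_t$ is $\{0,1\}$-valued and $\g \neq 0$, we are forced to have $h_t(b) = 1$ and therefore $\g = 1$; moreover $h_t(b) = 1$ means precisely $b_1 = \cdots = b_t = 0$. With these in hand the identity reduces to $h_t(AX+b) = h_t(X)$, so the two supports coincide: $\supp(\c_{h_t \circ \sigma_{A,b}}) = V$. Because $b_i = 0$ for $i \le t$, a point $X$ lies in this support iff $(AX)_i = \sum_{j=1}^m a_{i,j}X_j = 0$ for all $1 \le i \le t$; call this set $W$, so $W = V$. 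To extract the zero pattern of $A$ I would use $V \subseteq W$: each standard basis vector $e_j$ with $t < j \le m$ lies in $V$, hence in $W$, which forces $a_{i,j} = (A e_j)_i = 0$ for all $1 \le i \le t$ and $j > t$. This is exactly the condition $A \in \M_q(t,m)$, completing the argument.

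The only delicate point is this last step, translating the equality $W = V$ of zero sets into the algebraic vanishing $a_{i,j} = 0$ for $i \le t < j$. Probing with the basis vectors $e_j$ ($j > t$), which lie in $V$ and therefore in $W$, makes this immediate, so I do not anticipate a genuine obstacle; the main care lies simply in recording that $h_t$ is an indicator function and that equality of the two evaluation vectors amounts to equality of functions on $\F_q^m$.
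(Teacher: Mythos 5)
Your proposal is correct, but it takes a genuinely different route from the paper, which disposes of Lemma \ref{lemma_BLT} in one line by citing the general stabilizer lemmas, Lemmas \ref{lemma_stab_1} and \ref{lemma_stab_2}, themselves proved by reducing the identity $\g\, h_k^{\Omega}(A\X+b)\equiv h_k^{\Omega}(\X)\pmod{\I(\AA)}$ and comparing degrees and coefficients, together with evaluations at crafted points. You instead exploit that $h_t$ is the $\{0,1\}$-valued indicator of the coordinate subspace $V=\{X_1=\cdots=X_t=0\}$, so that the codeword identity is the pointwise identity $\g\,h_t(AX+b)=h_t(X)$ on $\F_q^m$: evaluating at $X=0$ forces $\g=1$ and $b_1=\cdots=b_t=0$ at a stroke, and probing the resulting support equality with the standard basis vectors $e_j$, $j>t$, gives $a_{i,j}=0$ for $i\le t<j$, hence $A\in\M_{q}(t,m)$, invertibility of $A$ being built into $\sigma_{A,b}\in\AGL(m,\Fq)$; the converse is the routine verification you supply. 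Your route buys completeness and robustness: Lemmas \ref{lemma_stab_1} and \ref{lemma_stab_2} are stated as necessary conditions only, so the ($\Leftarrow$) direction of the equivalence is nowhere checked in the paper's one-liner; part (ii) of Lemma \ref{lemma_stab_1} does not deliver $b_k=0$; and in the relevant case $s=d_k-1$, part (i) holds only up to the ``without loss of generality $i=k$'' normalization inside its proof --- necessarily so, since for the symmetric function $h_t$ the whole diagonal block $A_{11}\in\GL(t,\Fq)$ must remain unconstrained and its columns can carry several nonzero entries --- all wrinkles that your indicator-function and support argument sidesteps entirely. What the paper's route buys is uniformity, since the same two lemmas drive all the later stabilizer computations (Theorems \ref{thm_l=d-1}, \ref{thm_tk=r}, \ref{thm_d_k=d_j+1 -l} and \ref{thm_d_k>d_j+1 -l}), whereas your argument uses the special shape of $h_t$ over $\AA=\F_q^m$. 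One cosmetic point: your remark that $h_t$ has degree $<q$ in each variable is not needed, as equality of the evaluation vectors over $\F_q^m$ is by definition pointwise equality of functions.
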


\begin{proof}
The proof directly follows from Lemma \ref{lemma_stab_1} and Lemma \ref{lemma_stab_2}. 
\end{proof}
  
\begin{cor}\label{cor_V-t}
With notations as above, 
$$
    |\mathcal{V}_{q}(t,m)| = q^{(t+1)(m-t)} \left(\prod_{i=0}^{t-1}(q^t -q^i) \right)\left(\prod_{\z=0}^{m-t-1}(q^{m-t} -q^{\z}) \right).
$$
\end{cor}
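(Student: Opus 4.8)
The plan is to read off the counting directly from the characterization supplied by Lemma~\ref{lemma_BLT}, since that lemma already does all the structural work. By Lemma~\ref{lemma_BLT}, a triple $(\g, \sigma_{A,b})$ lies in $\mathcal{V}_{q}(t,m)$ if and only if three independent conditions hold: $\g = 1$, the matrix $A$ belongs to $\M_{q}(t,m)$, and the first $t$ coordinates of the translation vector $b$ vanish, i.e.\ $b_1 = \cdots = b_t = 0$. The key observation is that these three conditions constrain $\g$, $A$, and $b$ separately, so the total count factors as a product of three independent counts.

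First I would count each factor. The scalar $\g$ is forced to equal $1$, contributing a factor of $1$. The matrix $A$ ranges freely over $\M_{q}(t,m)$, contributing a factor of $|\M_{q}(t,m)|$, which is given explicitly by equation~\eqref{eqn_M_q_t}. The translation vector $b = (b_1, \ldots, b_m)^{T} \in \F_q^m$ is required to satisfy $b_1 = \cdots = b_t = 0$ while the remaining coordinates $b_{t+1}, \ldots, b_m$ may be chosen arbitrarily in $\Fq$; this contributes a factor of $q^{m-t}$, the number of such vectors.

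Multiplying these together gives
$$
    |\mathcal{V}_{q}(t,m)| = 1 \cdot |\M_{q}(t,m)| \cdot q^{m-t} = q^{m-t} \cdot q^{t(m-t)} \left(\prod_{i=0}^{t-1}(q^t -q^i) \right) \left( \prod_{\z=0}^{m-t-1}(q^{m-t} -q^{\z})\right),
$$
and it remains only to simplify the power of $q$. Combining the two prefactors via $q^{m-t} \cdot q^{t(m-t)} = q^{(m-t)(1+t)} = q^{(t+1)(m-t)}$ yields the asserted formula. The only point requiring care is this exponent bookkeeping, which is routine; there is no substantive obstacle, as the entire content of the statement is already encoded in Lemma~\ref{lemma_BLT}.
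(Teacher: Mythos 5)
Your proposal is correct and matches the paper's proof essentially verbatim: both invoke Lemma~\ref{lemma_BLT} to reduce the count to independent choices ($\g = 1$ forced, $A$ ranging over $\M_{q}(t,m)$ as counted in equation~\eqref{eqn_M_q_t}, and $q^{m-t}$ choices for $b$ with $b_1 = \cdots = b_t = 0$), then multiply and combine the powers of $q$. No discrepancies to report.
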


\begin{proof}
It follows from Lemma \ref{lemma_BLT}, that $(\g, \sigma_{A,b}) \in \mathcal{V}_{q}(t,m)$ if $\g =1, A \in \M_{q}(t,m)$, and $b_i = 0 $ for $1 \leq i \leq t$. Now, it is easy to observe that there is a unique choice for $\g$, $q^{m-t}$ choices for $b$. Since $A \in \M_{q}(t,m)$, $A$ can have as many choices as  $|\M_{q}(t)|$. Therefore, 
$$
    |\mathcal{V}_{q}(t,m)| = q^{m-t} |\M_{q}(t)| =q^{m-t} \left[q^{t(m-t)} \left(\prod_{i=0}^{t-1}(q^t -q^i) \right)\left(\prod_{i=0}^{m-t-1}(q^{m-t} -q^i) \right)\right].
$$
\end{proof}


\begin{theorem}\label{thm_l=d-1}
Let $\AA$ be as defined in equation \eqref{eqn_AA} and let $1\leq u \leq K = \sum\limits_{i=1}^{m} (d_i-1)$ be such that $u =\sum\limits_{i=1}^{j+1}(d_i -1)$ for some  $0\leq j < m$. 
Let $r =t_{j+1}$. Then 
$$
    |\N_{q}(u, \AA)| = (q-1) \left( \prod_{i=1}^{j+1} d_i \right) \stirling{\mu_r}{j+1 - s_{r-1}}_{d_{j+1}}.
$$
\end{theorem}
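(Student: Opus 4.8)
The plan is to compute the order of the orbit in equation~\eqref{eqn_N(l=d-1)} via the orbit-stabilizer theorem (Theorem~\ref{thm_orb-stab}), applied to the group action $\Phi$ of $\GA$ on $\N_{q}^{(k,\Omega)}(u,\AA)$. Since the case $\ell = d_{j+1}-1$ has been reduced to $\N_q(u,\AA) = \orb_{\GA}(\c_{h_{j+1}})$, where $h_{j+1}$ is the $k$-independent polynomial $\prod_{i=1}^{j+1}(1 - X_i^{d_i-1})$ of equation~\eqref{eqn_h}, I need only determine $|\stab_{\GA}(\c_{h_{j+1}})|$ and then apply
$$
    |\N_q(u,\AA)| = |\orb_{\GA}(\c_{h_{j+1}})| = \frac{|\GA|}{|\stab_{\GA}(\c_{h_{j+1}})|},
$$
using the formula for $|\GA|$ from equation~\eqref{eqn_|G_A|}.

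\emph{First} I would characterize the stabilizer explicitly. A pair $(\g,\sigma_{A,b})$ fixes $\c_{h_{j+1}}$ precisely when $\g \cdot h_{j+1}(A\X+b) = h_{j+1}(\X) \pmod{\I(\AA)}$. Lemmas~\ref{lemma_stab_1} and~\ref{lemma_stab_2} already supply the core constraints: comparing degrees and leading coefficients forces $\g = 1$ (here $s = d_k - 1$ but the normalization gives $\g=1$ as in Lemma~\ref{lemma_BLT}), forces $b_i = 0$ for $1 \le i \le j+1$, and forces the matrix $A$ to be block lower triangular with respect to the splitting at index $j+1$, i.e.\ $a_{i,\zeta}=0$ for $1\le i \le j+1$ and $\zeta > j+1$. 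Conversely, any $A \in \G$ satisfying these vanishing conditions together with $b \in \AA$ having $b_1 = \cdots = b_{j+1}=0$ does stabilize $\c_{h_{j+1}}$, since $h_{j+1}$ depends only on $X_1,\dots,X_{j+1}$ and the top-left $(j+1)\times(j+1)$ block acts by permuting the factors among themselves through a nonsingular change of variables preserving the hyperplane arrangement. So the stabilizer is parametrized by: the block structure of $A$ in the first $j+1$ coordinates (constrained both by membership in $\G$ and by the new vanishing conditions from Lemma~\ref{lemma_stab_2}), the free entries of $A$ in rows $> j+1$, and the free components $b_{j+1+1},\dots,b_m$ of $b$ lying in the appropriate subfields.

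\emph{Next} I would count this stabilizer carefully. The entries $a_{i,\zeta}$ with $s_{r-1} < i \le j+1$ and $j+2 \le \zeta \le s_r$ are forced to vanish by Lemma~\ref{lemma_stab_2}, which is what produces a Grassmannian-type count rather than a full general-linear count on the block straddling the index $j+1$ inside the repeated subfield $F_r$. The upper-left $(j+1)\times(j+1)$ portion that survives, intersected with the block-lower-triangular shape of $\G$, contributes a factor counting nonsingular matrices respecting the partition $s_1 < \cdots < s_{r-1} < j+1$, while the rows below $j+1$ are free within the constraints of $\G$. Forming the ratio $|\GA|/|\stab_{\GA}(\c_{h_{j+1}})|$, the many subfield-size powers telescope, and the surviving combinatorial factor should consolidate into the Gaussian binomial $\stirling{\mu_r}{\,j+1-s_{r-1}\,}_{d_{j+1}}$, which counts $(j+1-s_{r-1})$-dimensional subspaces of an $\mu_r$-dimensional space over $F_r = \F_{d_{j+1}}$; the factor $(q-1)\prod_{i=1}^{j+1}d_i$ arises from the scalar $\g \in \F_q^*$ and from the translation freedom $b \in \AA$ in the first $j+1$ coordinates (note $\prod_{i=1}^{j+1}d_i = \prod_{i=1}^{j+1}|A_i|$).

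\emph{The hard part will be} the bookkeeping in the orbit-stabilizer ratio: matching the telescoping product of subfield-size powers in $|\GA|$ against the analogous product in $|\stab_{\GA}(\c_{h_{j+1}})|$ and verifying that the residual factor is \emph{exactly} the Gaussian binomial $\stirling{\mu_r}{j+1-s_{r-1}}_{d_{j+1}}$ rather than a product of such coefficients across several blocks. The conceptual reason only one Gaussian binomial survives is that the vanishing conditions of Lemma~\ref{lemma_stab_2} confine the Grassmannian freedom to the single block indexed by the subfield $F_r$ that contains coordinate $j+1$; all other blocks contribute full matrix groups in both numerator and denominator and hence cancel. As a sanity check I would specialize to $\lambda = 1$ (all subfields equal, so $d_1 = \cdots = d_m = q$, $r=1$, $s_{r-1}=0$, $\mu_r = m$) and confirm the formula collapses to $(q-1)\,q^{\,j+1}\stirling{m}{j+1}_q$, recovering the Delsarte--Goethals--MacWilliams count for Reed--Muller codes in the case $s=0$ (with $t=j+1$).
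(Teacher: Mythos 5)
Your proposal is correct, but it takes a genuinely different route from the paper. You compute $|\stab_{\GA}(\c_{h_{j+1}})|$ in one shot for arbitrary $r$ --- identifying the stabilizer as all $(\g,\sigma_{A,b})$ with $\g=1$, $b_1=\cdots=b_{j+1}=0$, and $A\in\G$ block lower triangular at index $j+1$ --- and then divide $|\GA|$ by it, with every block except the $r$-th cancelling so that only $|\GL(\mu_r,d_{j+1})|\,/\,|\M_{d_{j+1}}(j+1-s_{r-1},\mu_r)| = \stirling{\mu_r}{j+1-s_{r-1}}_{d_{j+1}}$ survives; carrying out this bookkeeping does yield exactly the stated formula. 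The paper instead inducts on $r$: the stabilizer is computed (via Lemma \ref{lemma_BLT}) only in the base case $r=1$, and the inductive step fixes the first $s_1$ coordinates (the $\prod_{i=1}^{s_1}d_i$ solutions of $L_i+b_i=0$ for $i\leq s_1$) and reduces to the smaller grid $\AA'=F_2^{\mu_2}\times\cdots\times F_\ld^{\mu_\ld}$. Your approach buys a single transparent computation that makes visible why exactly one Gaussian binomial appears (only the block straddling index $j+1$ loses freedom), at the cost of counting $\G$ intersected with the triangularity condition block by block; the paper's induction trades that bookkeeping for the reduction argument. One caution in executing your plan: in the boundary case $\ell=d_{j+1}-1$, part (i) of Lemma \ref{lemma_stab_1} holds only after the normalization ``without loss of generality $i=k$'' made in its proof; a stabilizer element may have several nonzero entries in column $k$ among the rows of block $r$ (e.g.\ for $\AA=\F_q^2$ and $j+1=2$, the unipotent matrix with rows $(1,1)$ and $(0,1)$ stabilizes $\c_{h_2}$ yet has $a_{1,2}\neq 0$). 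So you should, as you in effect do, extract from Lemmas \ref{lemma_stab_1}--\ref{lemma_stab_2} only $\g=1$, the vanishing of $b_1,\dots,b_{j+1}$, and the triangularity $a_{i,\zeta}=0$ for $i\leq j+1<\zeta$, leaving the top-left $(j+1)\times(j+1)$ block free within the shape of $\G$ --- or, cleaner, derive the full characterization directly from the observation that $\c_{h_{j+1}}$ is the indicator of the coordinate subspace $\{P\in\AA: p_1=\cdots=p_{j+1}=0\}$, in the spirit of Lemma \ref{lemma_BLT}. With that reading, your converse inclusion and telescoping count are sound, and your $\lambda=1$ sanity check correctly recovers the Delsarte--Goethals--MacWilliams count $(q-1)q^{j+1}\stirling{m}{j+1}_q$.
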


\begin{proof}
It follows from equation \eqref{eqn_N(l=d-1)} that 
\begin{equation*}
    \N_{q}(u, \AA) = \orb_{\GA} (\c_{h_{j+1}}),
\end{equation*}
where $h_{j+1}(\X)$ is given by equation $\eqref{eqn_h}$.  Using Theorem \ref{thm_orb-stab}, it suffices to determine $|\stab_{\GA}(\c_{h_{j+1}})|$ where 
$$
    \stab_{\GA}(\c_{h_{j+1}}) = \{(\g, \sigma_{A,b}) \in \GA : \g \cdot h_{j+1}(A \X+b) = h_{j+1}(\X)\}.
$$
We prove the result by induction on $r$. If $r=1$, then $j+1 \leq s_1 =\mu_1$ and hence, for $\c_f \in \N_{q}(u, \AA)$, $f$ is given by 
$$
    f(\X) =  \g \prod_{i=1}^{j+1}\left( 1-(L_i +b_i)^{d_i-1}\right),
$$
where $L_1, \ldots, L_{j+1} \in F_1[X_1,\ldots, X_{s_1 }] $ are linearly independent linear homogeneous polynomials and $b_i \in F_1$ for $1 \leq i \leq j+1$. Then, it follows from  Lemma \ref{lemma_BLT} that $(\g, \sigma_{A,b}) \in \stab_{\GA}(\c_{h_{j+1}})$ if and only if $\g = 1, A_{11} \in \M_{d_{s_1}}(j+1, m)$ and $b_1 = \cdots= b_{j+1} =0$. Therefore,  using equation $\eqref{eqn_M_q_t}$ and equation \eqref{eqn_|G_A|}, we get 
$$
    \left| \stab_{\GA}(\c_{h_{j+1}}) \right| = \frac{d_{s_1}^{m-(j+1)}  \left | \M_{d_{s_1 }}(j+1, \mu_1)  \right | \left|\GA\right|} { \left| \GL(\mu_1, d_{s_1}) \right|}, 
$$
and hence
$$
    \left| \N_{q}(u, \AA) \right| = \frac{|\GA|}{\left| \stab_{\GA}(\c_{h_{j+1}}) \right|} = \frac{(q-1) \left( d_{s_1 }^{j+1} \right)|\GL(\mu_1, d_{s_1 })|} {|\M_{d_{s_1 }}(j+1,\mu_1)|}.
$$
Since $d_1 = \cdots = d_{s_1 }$, we get 
\begin{equation*}
        \left| \N_{q}(u, \AA) \right| 
        = (q-1) \left(\prod_{i=1}^{j+1} d_i \right) \frac{|\GL(\mu_1, d_{s_1 })|}{ |\M_{d_{s_1 }}(j+1,\mu_1)|} = (q-1) \left(\prod_{i=1}^{j+1} d_i \right) \stirling{\mu_1}{j+1}_{d_{j+1}}.
\end{equation*}
Thus, the result is proved for $r=1$. Now, let us assume that the result holds for $r-1$. Since $s_{r-1} < j+1 \leq s_r$, we can write  
$$
    f = \g \prod_{i=1}^{s_1} \left( 1-(L_i +b_i)^{d_i-1}\right)  \prod_{i=s_1+1}^{j+1}\left( 1-(L_{i} +b_{i})^{d_{i}-1}\right),
$$
where $L_i  = \sum\limits_{\zeta=1}^{s_{t_i}} a_{i,\zeta} X_{\zeta} \in F_{t_i}[X_1, \ldots, X_{s_{t_i}}]$ and $b_i \in F_{t_i}$. Note that for $1 \leq i \leq s_1$, $L_i +b_i =0$ is a system of $s_1$ linearly independent equations in $s_1$ variables over $F_1$. Hence, it has a unique solution for each choice of the vector $(\v_1, \ldots, \v_{s_1})^{T}$.  Now, since $X_1, \ldots, X_{s_1}$ have fixed values, we can rewrite the system of linear equations as $L'_{i} =-b'_i$ for $s_1 +1 \leq i \leq j+1$ where $L'_i = \sum\limits_{\zeta=s_1 +1}^{s_{t_i}} a'_{i, \zeta} X_\zeta$ for $2 \leq t_i \leq \lambda$. Define
$$
    \AA' = F_2 \times\cdots \times F_2 \times \cdots \times F_{\lambda} \times \cdots \times F_{\lambda}.
$$
Then, we can write
$$
    f = \g  \left(\prod\limits_{i=1}^{s_1} \left( 1-(L_i +b_i)^{d_i-1}\right) \cdot g\right),
$$ 
where 
$$
    g =  \prod\limits_{i=s_1+1}^{j+1}\left( 1-(L_{i} +b_{i})^{d_{i}-1}\right) \in \S_{(u-\mu_1(d_1 -1))}(\AA).
$$
Hence, for fixed $\nu_1, \ldots, \nu_{s_1} \in F_1 = \F_{d_{s_1}}$, we have 
$$
    g = \prod\limits_{i=s_1+1}^{j+1}\left( 1-(L'_{i} +b'_{i})^{d_{i}-1}\right) \in \S_{(u-\mu_1(d_1 -1))}(\AA').
$$
which corresponds to a minimum weight codeword of $\AC_{q}(u-\mu_1(d_1 -1), \AA')$. Therefore, by induction hypothesis, 
$$
    |\N_{q}(u-\mu_1(d_1-1), \AA')| = (q-1)\left( \prod_{i=s_1 +1}^{j+1} d_i\right) \stirling{\mu_r}{j+1-s_{r-1}}_{d_{j+1}}.
$$
But, there are $\prod\limits_{i=1}^{s_1}d_i $ choices for the vector $(\nu_1, \ldots, \nu_{s_1})^{T}$. Therefore, 
\begin{equation}\label{eqn_result_l=d-1}
    |\N_{q}(u, \AA)| = (q-1)\left( \prod_{i=1}^{j+1} d_i\right) \stirling{\mu_r}{j+1-s_{r-1}}_{d_{j+1}}.
\end{equation}
Hence, the proof is complete.
\end{proof}

\subsection{Case $\ell < d_{j+1}-1$} \label{subsection_case_2}

In this section, we consider the case when $1 \leq u \leq K$ is given by  
\begin{equation}\label{eqn_u_l <d-1}
    u = \sum\limits_{i=1}^{j}(d_i -1)+ \ell \text{ for some } 0 \leq j < m \text{ and } 0 < \ell < d_{j+1} -1.
\end{equation}
It follows from Theorem \ref{thm_carvalho} and Proposition \ref{prop_affine  group} that every minimum weight codeword is given by $\c_f$ for some $f \in \S_{\leq u}(\AA)$ given by equation $\eqref{eqn_f_min}$. Let $1 \leq k_{0} \leq j+1$ be the least integer $k$ such that $d_{k} \geq d_{j+1}-\ell$. For each $k_0 \leq k \leq j+1$, let $\N_{q}^{(k)}(u,\AA)$ be as defined in equation \eqref{eqn_N-k}. Then, we can write 
\begin{equation}\label{eqn_union_N_k}
    \N_{q}(u, \AA) = \bigcup_{k=k_0}^{j+1} \N_{q}^{(k)}(u, \AA).
\end{equation}
Recall, for $1 \leq k \leq m$, let $t_k$ be the unique integer such that $s_{t_{k} -1} < k \leq s_{t_{k}}$ and let $r=t_{j+1}$. 

\begin{prop}\label{prop_empty-intersection}
Let $u$ be given by equation \eqref{eqn_u_l <d-1}. With notations as before, let $r =t_{j+1}$ and let $1 \leq k_{0} \leq j+1$ be the least integer $k$ such that $d_{k} \geq d_{j+1}-\ell$. Let $ k_0 \leq k_1 <  k_2 \leq j+1$. Then, 
$$
    \N_{q}^{(k_1)}(u, \AA) = \N_{q}^{(k_2)}(u, \AA) \text{ if and only if } t_{k_1}  = t_{k_2} .
$$
Furthermore, when $t_{k_1}  < t_{k_2} $,
$$
    \N_{q}^{(k_1)}(u, \AA) \bigcap \N_{q}^{(k_2)}(u, \AA) =\emptyset.
$$
\end{prop}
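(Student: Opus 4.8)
The plan is to reduce the statement to a comparison of supports and then exploit the block lower triangular structure of $\G$. First observe that since each $\N_q^{(k)}(u,\AA)$ is a union of the orbits $\orb_{\GA}(\c_{h_k^{\Omega}})$ (see \eqref{eqn_N-k} and \eqref{eqn_N-k-Omega}) and two orbits are either equal or disjoint, we have $\N_q^{(k_1)}(u,\AA)\cap\N_q^{(k_2)}(u,\AA)\neq\emptyset$ if and only if $\c_{h_{k_2}^{\Omega_2}}\in\N_q^{(k_1)}(u,\AA)$ for some admissible $\Omega_1,\Omega_2$, i.e.\ if and only if there exist $\g\in\F_q^*$ and $\sigma_{A,b}\in\Aff(\AA)$ (so $A\in\G$, $b\in\AA$) with $\g\cdot h_{k_1}^{\Omega_1}(A\X+b)\equiv h_{k_2}^{\Omega_2}(\X)\pmod{\I(\AA)}$. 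Taking $\AA$-supports of both sides and using that $\sigma_{A,b}$ is a bijection of $\AA$, this is equivalent to $A\cdot S_{k_2}+b=S_{k_1}$, where, as computed in the proof of Lemma \ref{lemma_stab_3}, $S_k:=\supp_{\AA}(\c_{h_k^{\Omega}})=\{P\in\AA: p_i=0 \text{ for } i\in\{1,\dots,j+1\}\setminus\{k\},\ p_k\in A_k\setminus\Omega\}$.

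For the direction $t_{k_1}=t_{k_2}\Rightarrow\N_q^{(k_1)}=\N_q^{(k_2)}$, here $A_{k_1}=A_{k_2}=F_{t_{k_1}}$ and $d_{k_1}=d_{k_2}$. I would take the permutation matrix $\tau$ transposing coordinates $k_1$ and $k_2$. Since $k_1$ and $k_2$ lie in the same block, $\tau$ satisfies the zero pattern defining $\G$, so $\sigma_{\tau,0}\in\Aff(\AA)$, and a direct substitution shows $h_{k_1}^{\Omega}\circ\sigma_{\tau,0}=h_{k_2}^{\Omega}$ for every admissible $\Omega$: the factor $1-X_{k_2}^{d_{k_2}-1}$ turns into $1-X_{k_1}^{d_{k_1}-1}$ and the factor $\prod_\a(X_{k_1}-\w_\a)$ into $\prod_\a(X_{k_2}-\w_\a)$. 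Hence $\orb_{\GA}(\c_{h_{k_1}^{\Omega}})=\orb_{\GA}(\c_{h_{k_2}^{\Omega}})$ for each $\Omega$, and taking the union over the common index set of subsets $\Omega\subseteq F_{t_{k_1}}$ of the prescribed size gives $\N_q^{(k_1)}(u,\AA)=\N_q^{(k_2)}(u,\AA)$.

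For disjointness when $t_{k_1}<t_{k_2}$, suppose toward a contradiction that $A\cdot S_{k_2}+b=S_{k_1}$ for some $A\in\G$, $b\in\AA$, and project onto coordinate $k_1$. On $S_{k_1}$ the $k_1$-coordinate ranges over all of $A_{k_1}\setminus\Omega_1$, a set of cardinality $d_{k_1}-|\Omega_1|=d_{j+1}-\ell\geq 2$, where the hypothesis $\ell<d_{j+1}-1$ is essential. On the other hand, for $P\in S_{k_2}$ we have $(AP+b)_{k_1}=\sum_{\zeta}a_{k_1,\zeta}P_\zeta+b_{k_1}$; since $A\in\G$ the sum runs only over $\zeta\leq s_{t_{k_1}}$, and because $t_{k_1}\leq t_{k_2}-1\leq r-1$ we have $s_{t_{k_1}}\leq s_{r-1}<j+1$ together with $k_2>s_{t_{k_2}-1}\geq s_{t_{k_1}}$. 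Thus every such index satisfies $\zeta\in\{1,\dots,j+1\}\setminus\{k_2\}$, which forces $P_\zeta=0$, so $(AP+b)_{k_1}=b_{k_1}$ is constant over $P\in S_{k_2}$. This contradicts the fact that the $k_1$-coordinate takes at least two values on $S_{k_1}=A\cdot S_{k_2}+b$. The contradiction rules out $\c_{h_{k_2}^{\Omega_2}}\in\N_q^{(k_1)}(u,\AA)$ for all $\Omega_1,\Omega_2$, proving disjointness; since each $\N_q^{(k)}(u,\AA)$ is nonempty, disjointness also forces $\N_q^{(k_1)}\neq\N_q^{(k_2)}$, which completes the equivalence.

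I expect the disjoint case to be the main obstacle, since the degree of $h_k^{\Omega}$ and its individual partial degrees are the same across different choices of $k$ and are in any case not $\Aff(\AA)$-invariant, so a naive degree comparison fails. The key point is that projection onto the special coordinate $k_1$ of the smaller field $F_{t_{k_1}}$ is constant along $A\cdot S_{k_2}+b$ purely because $k_1$ sits in an earlier block than $k_2$ and $A$ is block lower triangular, whereas it is non-constant on $S_{k_1}$ precisely because $\ell<d_{j+1}-1$; isolating this single coordinate invariant is what makes the argument go through.
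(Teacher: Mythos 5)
Your proof is correct, and while your first half (equality when $t_{k_1}=t_{k_2}$, via the transposition matrix exchanging coordinates $k_1$ and $k_2$, which lies in $\G$ because both indices sit in the same block) is essentially identical to the paper's, your disjointness argument takes a genuinely different route. The paper reduces the hypothetical identity $h_{k_1}^{\Omega_1}(\X)=\g\cdot h_{k_2}^{\Omega_2}(\sigma(\X)) \pmod{\I(\AA)}$ to a comparison of the unique degree-$u$ monomial on the left with the degree-$u$ monomials on the right, assembles the exponents into an $s_r\times s_r$ block lower triangular matrix, and derives a contradiction from nonnegativity of the block sums $S_{\rho,\eta}$, with the hypothesis $\ell<d_{j+1}-1$ entering through the inequality $d_{k_1}-(d_{j+1}-\ell)<d_{s_{t_{k_1}}}-1$. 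You instead pass to supports: from $\sigma(S_{k_2})=S_{k_1}$, where $S_k=\supp_{\AA}(\c_{h_k^{\Omega}})$ is the coordinate flat computed in the proof of Lemma \ref{lemma_stab_3}, the block lower triangularity of $A\in\G$ together with $\zeta\leq s_{t_{k_1}}\leq s_{t_{k_2}-1}<k_2$ and $s_{t_{k_1}}\leq s_{r-1}\leq j$ forces the $k_1$-coordinate of $\sigma(P)$ to equal the constant $b_{k_1}$ on all of $S_{k_2}$, contradicting the fact that this coordinate takes $d_{j+1}-\ell\geq 2$ values on $S_{k_1}$ --- the same and only place where $\ell<d_{j+1}-1$ is used. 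I verified the numerology: $t_{k_2}\leq r$ since $k_2\leq j+1\leq s_r$, so all the inequalities you invoke hold. Your version is more elementary and arguably more robust, since it sidesteps the delicate point that reduction modulo $\I(\AA)$ can disturb top-degree terms (something the paper's exponent bookkeeping must implicitly control), replacing it with a single invariant --- constancy of one early-block coordinate on the support --- while the paper's computation yields finer information about the full exponent pattern, in the spirit of the stabilizer analysis of Lemmas \ref{lemma_stab_1} and \ref{lemma_stab_2}. One point worth making explicit in your write-up: rearranging $\c_f=\g_1\cdot\c_{h_{k_1}^{\Omega_1}\circ\sigma_1}=\g_2\cdot\c_{h_{k_2}^{\Omega_2}\circ\sigma_2}$ into your single functional equation uses that $\Aff(\AA)$ is closed under inversion; this is true (inverses of bijections of $\AA$ restricting affine maps are again such) and is also used tacitly by the paper, but it deserves a sentence.
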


\begin{proof}
For the sake of notation, let $\N^{(k_i)}:= \N_{q}^{(k_i)}(u, \AA)$ for $i=1,2$. Let $t_{k_1}  = t_{k_2} $. It suffices to show that $\N^{(k_1)} \subseteq \N^{(k_2)}$. Let $\c_f \in \N^{(k_1)}$. Then $f =\g_1\cdot  (h_{k_1}^{\Omega_1} \circ \sigma_1)$ for some $\g_1 \in \F_q^{*}$ and $\sigma_1 \in \Aff(\AA)$ and some $\Omega_1 \subseteq A_{k_1}$ such that $ |\Omega_1| = d_{k_1}-(d_{j+1}-\ell)$. Consider a permutation $\tau \in S_{n}$ such that $\tau$ acts on $\F_q^m$ as follows:
$$
\tau (X_1,\ldots,X_m) = (X_1, \ldots,X_{k_{1} -1}, X_{k_2},X_{k_{1} +1}, \ldots \ldots, X_{k_{2}-1}, X_{k_1}, X_{k_{2} +1}\ldots,X_m).
$$
Note that $\tau$ interchanges the variables $X_{k_1}$ and $X_{k_2}$. The permutation $\tau$ can be expressed in the matrix form by $\tau (X) = P_{k_1, k_2} X$ where $P_{k_1, k_2}$ is an $m \times m$ permutation matrix. Since $t_{k_1}  = t_{k_2} $, $P_{k_1, k_2}$ is a non singular block lower triangular matrix and thus, $P_{k_1, k_2} \in \G$. Hence, $\tau \in \Aff(\AA)$. Therefore, 
$$
  \c_{f \circ \tau} = \g_1 \cdot \c_{h_{k_1}^{\Omega_1} \circ \sigma_1 \circ \tau} = \g_1 \cdot \c_{h_{k_2}^{\Omega_1} \circ \sigma_1}   \in \N^{(k_2)}.
$$
The above equality is independent of the choice of  $\Omega_1 \subseteq A_{k_1}$. Hence, $\c_f \in \N^{(k_2)}$ and it follows that $\N^{(k_1)} = \N^{(k_2)}$ whenever $t_{k_1}  = t_{k_2} $.

On the other hand, let $t_{k_1}  < t_{k_2} $. We need to show that $\N^{(k_1)} \cap \N^{(k_2)} = \emptyset$. Assuming on the contrary, let $\c_f \in \N^{(k_1)} \cap \N^{(k_2)}$ for some $f (\neq 0) \in S_{\leq u}(\AA)$. Then $\c_f = \g_1 \cdot \c_{g_1}$ where $g_1(\X) = h_{k_1}^{\Omega_1}(\sigma_1(\X))$  and $\c_f  = \g_2 \cdot \c_{g_2}$ where $g_2(\X) = h_{k_2}^{\Omega_2}(\sigma_2(\X))$ for some $\g_1, \g_2 \in \F_q^{*}$,  $\sigma_1, \sigma_2 \in \Aff(\AA)$ and $\Omega_i \subseteq A_{k_i}$ of size $d_{k_i}-(d_{j+1}-\ell)$ for $i=1,2$. Then, $\c_f = \g_1 \cdot \c_{g_1} = \g_2 \cdot \c_{g_2}$ and hence, it must follow that 
$$ 
    g_1(\X) = \g \cdot g_2(\X) \pmod{\I(\AA)},
$$
for some $\g \in \F_q^*$. Then, we can write
\begin{equation}\label{eqn_compare_hk1_hk2}
    h_{k_1}^{\Omega_1}(\X) = \g \cdot  h_{k_2}^{\Omega_2}(\sigma(\X))  \pmod{\I(\AA)} 
\end{equation}
for some  $\g \in \F_q^*$ and $\sigma \in \Aff(\AA)$. Using Proposition \ref{prop_affine  group}, we get 
\begin{equation}\label{eqn_h_k_2}
    \g \cdot  h_{k_2}^{\Omega_2}(\sigma(\X))  = \prod_{\substack{i = 1 \\ i\neq k_2} }^{j+1}  \left( 1 -(L_i +b_i)^{d_i -1} \right) \prod_{\a=1}^{d_{k_2}-(d_{j+1}-\ell)} (L_{k_2}+b_{k_2} -\omega_{\a}), 
\end{equation}
for some linearly independent linear polynomials $L_1, \ldots, L_{j+1}$ such that $L_i \in F_{t_i}[X_1, \ldots, X_{s_{t_i}}]$ and $b_i \in F_{t_i}$  for $1 \leq t_i \leq r$ and  $ \Omega = \{ \w_1,\ldots, \w_{d_{k_2}-(d_{j+1}-\ell)} \} \subseteq A_{k_2}$.

We compare monomials of degree $u$ in equation $\eqref{eqn_compare_hk1_hk2}$. Let $\a_{i,\d}$ denotes the exponent of $X_{\d}$ in the function $ \left( 1 -(L_i +b_i)^{d_i -1} \right) \pmod{\I(\AA)}$ for $i \in \{1, \ldots, j+1 \} \setminus \{k_2\}$ and $\a_{k_2,\d}$ denotes the exponent of $X_{\d}$ in the function $\prod\limits_{s\a=1}^{d_{k_2}-(d_{j+1}-\ell)} (L_{k_2} -\omega_{\a})\pmod{\I(\AA)}$. Note that, $\a_{i, \d} =0 $ for all $\d> s_{{t_i}}.$ Thus, monomials of degree $u$ in $\g \cdot h_{k_2}^{\Omega_2}(\sigma(\X)) \pmod{\I(\AA)}$ are given by 
$$
    \left\{ \prod_{\underset{i\neq k_2}{i=1}}^{j+1} \left(  \prod_{\z =1}^{s_t} X_{\z}^{\a_{i,\z}}\right) \right\} \prod_{\z=1}^{s_{t_{k_2} } } X_{k_2}^{\a_{k_2,\z}},
$$
where $1 \leq t_i \leq r $ such that  
\begin{equation}\label{eqn_sum1}    
    \sum_{\z=1}^{s_t} \a_{i,\z} = \begin{cases}
        d_i -1 & \text{ if } i \in \{1, \ldots, j+1 \} \setminus \{ k_2\}, \\
        d_{k_2}-(d_{j+1}-\ell) & \text{ if }i = k_2.\\
    \end{cases}
\end{equation}
Note that there is only one monomial of degree $u$ in $h_{k_1}^{\Omega_1}(\X)$ and is given by 
$$
    \left(\prod\limits_{\substack{\zeta = 1\\ \zeta \neq k_1}}^{j+1} X_{\zeta}^{d_{\zeta-1}} \right)X_{k_1}^{d_{k_1}-(d_{j+1}-\ell)}.
$$
Comparing with monomials of $\g \cdot h_{k_2}^{\Omega_2}(\sigma(\X)) \pmod{\I(\AA)}$, we get 
\begin{equation}\label{eqn_sum2}
    \sum_{i=1}^{j+1} \a_{i,\z} = 
    \begin{cases}
        d_{\z} -1 & \text{ if } \z \in \{1, \ldots, j+1 \} \setminus \{ k_1\}, \\
        d_{k_1}-(d_{j+1}-\ell) & \text{ if } \z = k_1,\\
        0 & \text{ if } \z \in \{j+2, \ldots, s_r \}.
    \end{cases}
\end{equation}
Writing the exponents in matrix form, we get $C' := (\a_{i,\z})$, which is a $j+1 \times s_r$ matrix. Extend $C'$ to $C$ by adding $s_r -(j+1)$ rows of zeroes at the bottom such that $C$ is an $s_r \times s_r$ square matrix. It follows that $C = [C_{\rho, \eta }]_{1\leq \rho, \eta \leq r}$ is a block lower triangular matrix with block $C_{\rho, \eta } $ of size $\mu_{\rho} \times \mu_{\eta}$. Define by $S_{\rho,\eta}$, the sum of entries of block $C_{\rho, \eta}$, i.e.,
$$
    S_{\rho, \eta} = \sum_{\z=s_{\eta -1} +1 }^{s_{\eta}} \ \sum_{i=s_{\rho-1}+1 }^{s_{\rho}} \a_{i, \z}.
$$
Note that $S= (S_{\rho, \eta})$ is a $r \times r$ lower triangular matrix, i.e., $S_{\rho, \eta } = 0$ for $\eta > \rho$. Comparing equations \eqref{eqn_sum1} and \eqref{eqn_sum2}, we get
\begin{eqnarray*}
    S_{1,1} &= \mu_1 (d_{s_1}-1).\\
    S_{1,1}+S_{2,1}+\cdots+S_{r,1} &= \mu_1 (d_{s_1}-1). 
\end{eqnarray*}
But $S_{\rho, \eta} \geq 0$ for all $1 \leq \rho, \eta \leq r$. Therefore, $S_{2,1} = \cdots = S_{r,1} = 0$. Moreover, we have   
\begin{eqnarray*}
    S_{2,1} +S_{2,2} &= \mu_2 (d_{s_2}-1).\\
    S_{2,2}+S_{3,2}+\cdots+S_{r,2} &= \mu_2 (d_{s_2}-1). 
\end{eqnarray*}
But $S_{2,1} = 0$ implies $S_{\rho ,2} = 0$ for $\rho > 2$. 
Continuing this way,  for $\rho < t_{k_1}$, we get 
$$
    S_{\rho, \eta } = \begin{cases}
    \mu_{\rho} (d_{s_{\rho}}-1) & \text{ if } \rho = \eta,\\
    0 & \text{ if } \rho \neq \eta.\\   
\end{cases}
$$
Again, we observe that 
$$
    S_{t_{k_1},t_{k_1}}  = \mu_{t_{k_1}} (d_{s_{t_{k_1}}}-1) \quad \text{and} \quad  \sum_{\rho = t_{k_1}+1}^{r} S_{\rho, t_{k_1}} = (\mu_{t_{k_1}} -1) (d_{s_{t_{k_1}}}-1) + \left(d_{k_1}- (d_{j+1}-\ell)\right).
$$
But $\ell < d_{j+1}-1$ implies $d_{k_1}- (d_{j+1}-\ell) < (d_{s_{t_{k_1}}}-1)$. Hence, it follows that
$$
    \sum_{\rho = t_{k_1}+1}^{r} S_{\rho, t_{k_1}} < 0, 
$$
which is a contradiction since $S_{\rho ,\eta} \geq 0$ for all $1 \leq \rho ,\eta \leq r$. Hence, $\N^{(k_1)} \cap \N^{(k_2)} =\emptyset$.
\end{proof}

Applying Proposition \ref{prop_empty-intersection}, we can rewrite equation \eqref{eqn_union_N_k} as

$$  
    \N_{q}(u, \AA) = \N_{q}^{({j+1})}(u,\AA) \bigcup \left(\bigcup_{t=t_{k_0}}^{r-1} \N_{q}^{(s_{t} )}(u,\AA)\right),
$$
and moreover, it follows that
\begin{equation}\label{eqn_N_sum}
    | \N_{q}(u, \AA)| =|\N_{q}^{(j+1)}(u,\AA)| +  \sum_{t=t_{k_0}}^{r-1} |\N_{q}^{(s_{t} )}(u,\AA)|.
\end{equation}
The next step is to determine $|\N_{q}^{(k)}(u,\AA)|$ for all $1 \leq k \leq j+1$ where 
$$
   \N_{q}^{(k)}(u,\AA) = \bigcup_{\substack{\Omega \subseteq A_k \\ |\Omega| =d_{k}-(d_{j+1}-\ell)}} \N_{q}^{(k, \Omega)}(u, \AA)  
$$
Therefore, we determine $\left| \N_{q}^{(k, \Omega)}(u,\AA) \right|$ for all possible choices  for $k$ and $\Omega$. 
We further consider two sub-cases when $t_k = r$ and $t_k < r$ respectively. 

\subsubsection{Subcase $t_k =r$}

Let $1 \leq u \leq K$ be such that $u =\sum\limits_{i=1}^{j}(d_i -1) + \ell$, $0 \leq j < m$ and $ 0 < \ell < d_{j+1} -1$. If $t_k =r$, $d_k =d_{j+1}$. Then, it follows from Proposition $\ref{prop_empty-intersection}$ that
$$
    \N_{q}^{(k)}(u, \AA) = \N_{q}^{(j+1)}(u, \AA).
$$
Therefore, every minimum weight codeword is given by the evaluation $\c_f$ for the polynomial $f \in S_{\leq u}(\AA)$ given by 
$$
    f(\X) = \g \cdot h_{j+1}^{\Omega}(A\X+b) = \g \prod_{i=1}^{j}\left(1 - \left(\sum_{\z=1}^{s_{t_i}} a_{i, \z} X_{\z} +b_i \right)^{d_i-1} \right) \prod_{\a =1}^{\ell} \left(\sum_{\zeta =1}^{s_r} a_{j+1, \zeta} X_{\zeta} +b_{j+1} - \w_{\a}\right), 
$$
for some $\g \in \F_q^*, A =(a_{i,j}) \in \G, b \in \AA$ and  $\Omega = \{\w_1, \ldots, \w_{\ell} \} \subseteq A_{j+1}$.
The next theorem gives $|\N_{q}^{(k)}(u,\AA)|$ when $t_k =r$.

\begin{theorem}\label{thm_tk=r}
Let $1 \leq u \leq K$ be such that $u =\sum\limits_{i=1}^{j}(d_i -1) + \ell$, $0 \leq j < m$ and $ 0 < \ell < d_{j+1} -1$. Let $r, k,t_k, k_0$ be as before. If $t_{k}=r$, then  
$$
    |\N_{q}^{(k)} (u, \AA)| = (q-1) \bigg(\prod\limits_{\underset{i\neq k}{i=1}}^{j+1} d_i \bigg) \displaystyle{\stirling{\mu_r}{j-s_{r-1}}_{d_{j+1}}  \stirling{s_{r}-j}{1}_{d_k}} \binom{d_{k}}{d_{j+1}-\ell}.
$$
\end{theorem}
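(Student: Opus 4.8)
The plan is to apply the orbit--stabilizer theorem to each individual orbit $\N_q^{(k,\Omega)}(u,\AA) = \orb_{\GA}(\c_{h_k^\Omega})$ and then sum over the admissible subsets $\Omega$ by means of the orbit-counting identity \eqref{eqn_stab-reciprocal-sum}, now read over the subfield $A_k$. Since $t_k = r$ we have $d_k = d_{j+1}$ and $|\Omega| = d_k - (d_{j+1}-\ell) = \ell$, and Proposition \ref{prop_empty-intersection} lets me take $k = j+1$. For a fixed $\Omega \subseteq A_k$ of size $\ell$, Theorem \ref{thm_orb-stab} gives $|\N_q^{(k,\Omega)}(u,\AA)| = |\GA| / |\stab_{\GA}(\c_{h_k^\Omega})|$, so everything reduces to understanding the stabilizer, which is controlled by Lemmas \ref{lemma_stab_1}--\ref{lemma_stab_3}.

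The central step is to exhibit the factorization $|\stab_{\GA}(\c_{h_k^\Omega})| = C\,|\Delta_\Omega|$, where $\Delta_\Omega \le \AGL(1,A_k)$ is the setwise stabilizer of $\Omega$ as in \eqref{eqn_stab1} and $C$ is independent of $\Omega$. I would consider the map sending $(\g,\sigma_{A,b}) \in \stab_{\GA}(\c_{h_k^\Omega})$ to the affine transformation $\w \mapsto a_{k,k}\w + b_k$ of $A_k$. Restricting the identity $\g\, h_k^\Omega(A\X+b) = h_k^\Omega(\X)$ to the support of $\c_{h_k^\Omega}$, where $X_i = 0$ for all $i \in [j+1]\setminus\{k\}$, and using $\g = (a_{k,k})^{-\ell}$ from Lemma \ref{lemma_stab_1}, the factor $\prod_{\a}(L_k + b_k - \w_\a)$ collapses to $\prod_{\a}(a_{k,k}X_k + b_k - \w_\a)$; comparing with $\prod_\a(X_k - \w_\a)$ forces $a_{k,k}\Omega + b_k = \Omega$, so the image of this map is exactly $\Delta_\Omega$. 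The fiber over a fixed pair $(a_{k,k},b_k)$ collects the remaining data: the off-diagonal entries of the $k$-th row, which are free by Lemma \ref{lemma_stab_3}; the admissible configurations of the rows indexed by $[j+1]\setminus\{k\}$ that reproduce $\prod_{i\ne k}(1 - X_i^{d_i-1})$; and the unconstrained lower rows and components $b_i$ with $i > j+1$. None of these depend on $\Omega$, so every fiber has the same cardinality $C$.

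Because the sets $\N_q^{(k,\Omega)}(u,\AA)$ are $\GA$-orbits, two of them are either equal or disjoint, and the same support computation shows $\c_{h_k^\Omega}$ and $\c_{h_k^{\Omega'}}$ lie in a common orbit precisely when $\Omega' = a\Omega + b$ for some $(a,b) \in \AGL(1,A_k)$. Thus $\N_q^{(k)}(u,\AA)$ is the disjoint union of $\N_q^{(k,\Omega)}(u,\AA)$ over representatives of the $\AGL(1,A_k)$-orbits in $\P_\ell(A_k)$, and
\begin{equation*}
    |\N_q^{(k)}(u,\AA)| = \sum_{[\Omega]} \frac{|\GA|}{C\,|\Delta_\Omega|} = \frac{|\GA|}{C}\sum_{[\Omega]} \frac{1}{|\Delta_\Omega|} = \frac{|\GA|}{C}\cdot\frac{1}{d_k(d_k-1)}\binom{d_k}{\ell},
\end{equation*}
where the last equality is the identity \eqref{eqn_stab-reciprocal-sum} read over $A_k$, of size $d_k$. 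Substituting $|\GA|$ from \eqref{eqn_|G_A|}, using $d_k = d_{j+1}$ and $\binom{d_k}{\ell} = \binom{d_k}{d_{j+1}-\ell}$, then yields the asserted expression.

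I expect the main obstacle to be the explicit evaluation of the constant fiber size $C$, equivalently the verification that $|\GA|/\big(C\,d_k(d_k-1)\big)$ collapses to $(q-1)\big(\prod_{i\ne k}d_i\big)\stirling{\mu_r}{j-s_{r-1}}_{d_{j+1}}\stirling{s_r-j}{1}_{d_k}$. This requires disentangling the block-lower-triangular constraints imposed by $\G$ (Definition \ref{def_G}) together with Lemmas \ref{lemma_stab_2} and \ref{lemma_stab_3} from the nonsingularity condition, and showing that the count of admissible row configurations indexed by $[j+1]\setminus\{k\}$ inside the field tower reorganizes into the stated product. This is a refinement of the enumeration in Theorem \ref{thm_l=d-1}, with the distinguished variable $X_k$ removed from the top block and replaced by a single line direction, which accounts for the extra factor $\stirling{s_r-j}{1}_{d_k}$.
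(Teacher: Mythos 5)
Your overall architecture --- reduce to $k=j+1$ via Proposition \ref{prop_empty-intersection}, apply the orbit--stabilizer theorem to each $\N_{q}^{(k,\Omega)}(u,\AA)$, factor $|\stab_{\GA}(\c_{h_k^{\Omega}})| = C\,|\Delta_{\Omega}|$ with $C$ independent of $\Omega$, and sum over $\AGL(1,A_k)$-orbit representatives via \eqref{eqn_stab-reciprocal-sum} read over $A_k$ --- is exactly the paper's strategy, and your constant-fiber claim is defensible (cleanest: on the stabilizer the map $(\g,\sigma_{A,b})\mapsto \sigma_{a_{k,k},b_k}$ is a group homomorphism onto $\Delta_{\Omega}$, since the cross terms in $(A'A)_{k,k}$ and $(A'b+b')_k$ vanish by Lemmas \ref{lemma_stab_1} and \ref{lemma_stab_2}, so the fibers are cosets of its kernel). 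The genuine gap is precisely where you flag ``the main obstacle'': the evaluation of $C$ is never carried out, and that is where the entire content of the stated formula --- the factors $\stirling{\mu_r}{j-s_{r-1}}_{d_{j+1}}$, $\stirling{s_r-j}{1}_{d_k}$ and the products of the $d_i$ --- resides. Asserting that the count of admissible rows ``reorganizes into the stated product'' as a ``refinement of Theorem \ref{thm_l=d-1}'' states the problem rather than solves it: the rows indexed by $[j+1]\setminus\{k\}$ must reproduce $\prod_{i\neq k}(1-X_i^{d_i-1})$ modulo $\I(\AA)$ subject to the block-lower-triangular constraints of $\G$, and these rows live in \emph{different} subfields of the tower, so the admissible configurations do not form a single $\M$-type subgroup over one field and cannot be read off from Lemma \ref{lemma_BLT} alone.

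The paper closes this gap by induction on $r$, and some such device is indispensable. In the base case $r=1$ (all of $X_1,\dots,X_{j+1}$ in the top block) the stabilizer is pinned down completely: Lemmas \ref{lemma_stab_1}--\ref{lemma_stab_3} together with Lemma \ref{lemma_BLT} show it is cut out by $\sigma_{a_{j+1,j+1},b_{j+1}}\in\Delta_{\Omega}$, free entries in the first $j$ columns of row $j+1$, and $(\widehat{A_{j+1}})_{11}\in\M_{d_{s_1}}(j,\mu_1-1)$; then $C$ is computed explicitly from \eqref{eqn_M_q_t} and \eqref{eqn_|G_A|}, and the Gaussian-binomial identity $\stirling{\mu_1-1}{j}_{d}\,(d^{\mu_1}-1)=\stirling{\mu_1}{j}_{d}\,(d^{\mu_1-j}-1)$ is what produces the factor $\stirling{\mu_1-j}{1}_{d}$. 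For $r>1$ the paper factors the codeword into the $s_1$ affine factors over $F_1$ (whose common zero set contributes the $\prod_{i=1}^{s_1}d_i$ choices of the fixed values $\nu_1,\dots,\nu_{s_1}$) times a minimum weight codeword of $\AC_{q}\bigl(u-\mu_1(d_1-1),\AA'\bigr)$ with $\AA'=F_2^{\mu_2}\times\cdots\times F_{\lambda}^{\mu_{\lambda}}$, and invokes the induction hypothesis. Without this reduction along the tower (or an equivalently explicit direct count), your argument only yields $|\N_{q}^{(k)}(u,\AA)| = |\GA|\binom{d_k}{\ell}\big/\bigl(C\,d_k(d_k-1)\bigr)$ with $C$ undetermined, which falls short of the theorem.
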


\begin{proof}
We have $t_k =r$, and thus $d_k =d_{j+1}$. Then, it follows from Proposition $\ref{prop_empty-intersection}$ that
$$
    \N_{q}^{(k)}(u, \AA) = \N_{q}^{(j+1)}(u, \AA).
$$
Moreover, equation $\eqref{eqn_N-k}$ implies
$$
    \N_{q}^{(j+1)}(u, \AA) = \bigcup_{\substack{\Omega \subseteq A_{j+1}\\ |\Omega| = \ell}}  N_{q}^{(j+1, \Omega)}(u, \AA).
$$
With respect to group action $\Phi$ defined by equation $\eqref{eqn_group-action}$, we know that
$$
    N_{q}^{(j+1, \Omega)}(u, \AA) = \orb_{\GA}(\c_{h_{j+1}^{\Omega}}),
$$
for a fixed subset $\Omega = \{\w_1, \ldots, \w_{\ell} \}$ of $A_{j+1}$ of size $\ell$ and $h_{j+1}^{\Omega}$ is given by 
$$
  h_{j+1}^{\Omega}(\X) = \prod_{i=1}^{j}(1-X_i^{d_i-1}) \prod_{\a=1}^{\ell}(X_{j+1}-\w_{\a}).  
$$
Therefore, using Theorem \ref{thm_orb-stab}, it suffices to determine $|\stab_{\GA}(\c_{h_{j+1}^{\Omega})}|$. We prove the result by induction on $r$. If $r=1$, then $j+1 \leq s_1 =\mu_1$, and hence for $\c_f \in \N_{q}^{(j+1, \Omega)}(u, \AA)$, $f$ is given by
$$
    f(X) = \g \prod_{i=1}^{j} \left( 1-(L_i +b_i)^{d_i-1}\right) \prod_{\a=1}^{\ell}(L_{j+1}+b_{j+1}-\w_{\a}),
$$
where $L_1, \ldots, L_{j+1}$ are linearly independent linear homogeneous polynomials such that $L_i \in F_1[X_1, \ldots, X_{s_1}]$, $b_i \in F_1$ and $\Omega = \{\w_1, \ldots,\w_{\ell}\} \subseteq A_{j+1}$. Let $(\g, \sigma_{A,b}) \in \stab_{\GA}(\c_{h_{j+1}^{\Omega}})$, then 
$$
 \g  \prod_{i=1}^{j} \left( 1-(L_i +b_i)^{d_i-1}\right) \prod_{\a=1}^{\ell}(L_{j+1}+b_{j+1}-\w_{\a}) \pmod{\I(\AA)} = \prod_{i=1}^{j}(1-X_i^{d_i-1}) \prod_{\a=1}^{\ell}(X_{j+1}-\w_{\a}).
$$
By Lemma \ref{lemma_stab_1}, it follows that $a_{i,j+1} =0 =b_i$ for $1 \leq i \leq j$, $a_{j+1,j+1} \neq 0$ and $\g = (a_{j+1,j+1})^{- \ell}$. 
Moreover, Lemma \ref{lemma_stab_2} gives $a_{i, \z} =0$ for $1\leq i \leq j+1$ and $j+2 \leq \z \leq s_1$. Therefore, we get 
\begin{multline*}
    \g  \prod_{i=1}^{j}\left(1 - \left( \sum_{\substack{\zeta=1\\ \zeta \neq j+1}}^{j} a_{i, \zeta} X_{\zeta} \right)^{d_i-1}\right) \prod_{\a=1}^{\ell}\left( \sum_{\z=1}^{j+1} a_{j+1, \z } X_{\z} + b_{j+1} -\omega_{\a}\right) \pmod{\I(\AA)} =\\
    \prod_{i=1}^{j}\left(1 - X_i^{d_i-1}\right) \prod_{\a=1}^{\ell}(X_{j+1} -\omega_{\a}).
\end{multline*}
Upon substituting $X_1 = \cdots = X_{j} =X_{j+2} =\cdots = X_{s_1} =0$, we get 
$$ 
    \g  (a_{j+1, j+1})^{\ell} \prod_{\a=1}^{\ell}\left[ X_{j+1}- \left( \frac{\omega_{\a}-b_{j+1}}{a_{j+1, j+1}}\right) \right] =  \prod_{\a=1}^{\ell}(X_{j+1} -\omega_{\a}),
$$
which holds true if and only if $\g  (a_{j+1 ,j+1})^{\ell} =1$ and $\left\{ \frac{\omega_{1}-b_{j+1}}{a_{j+1, j+1}}, \ldots, \frac{\omega_{\ell}-b_{j+1}}{a_{j+1, j+1}} \right\} = \{\omega_1, \ldots, \omega_{\ell} \}$. This is equivalent to determining $\sigma_{a_{j+1, j+1}, b_{j+1}} \in \AGL(1, F_{1})$ such that $a_{j+1, j+1} \Omega + b_{j+1} =\Omega$. Define the set 
$$
    \Delta_{\Omega} := \left\{\sigma_{a_{j+1, j+1},b_{j+1}} \in\AGL(1, F_{1}) : \{a_{j+1, j+1}\w_1 + b_{j+1}, \ldots, a_{j+1, j+1 }\w_{\ell} + b_{j+1}\} = \{\w_1, \ldots, \w_{\ell} \} \right\}.
$$
Therefore, our equation becomes 
\begin{multline*}
    \prod_{i=1}^{j}\left(1 - \left( \sum_{\substack {\zeta=1 \\ \zeta \neq j+1} }^{j} a_{i, \zeta} X_{\zeta} \right)^{d_i-1}\right) \prod_{\a=1}^{\ell}\left( \left(\sum_{\zeta =1}^{j} \frac{a_{j+1,  \zeta }}{a_{j+1, j+1}} X_\zeta \right) +X_{j+1} -\omega_{\a}\right) \pmod{\I(\AA)} = \\
    \prod_{i=1}^{j}\left(1 - X_i^{d_i-1}\right) \prod_{\a=1}^{\ell}(X_{j+1} -\omega_{\a}).
\end{multline*}

Define a matrix $A' = (a_{i,\z}^{'}) \in M_{m \times m}(\Fq)$ given by 
$$
    a'_{i, \z} = \begin{cases}
        a_{i, \z}, & \text{ if } i \neq j+1,\\
        a_{j+1, \z} & \text{ if } i = j+1 \text{ and } \z \in \{j+1, \ldots, m \}. \\
    \end{cases}
$$
Note that $A'$ differs from $A$ only in first $j$ entries of $(j+1)^{th}$-row.  Since $(\g, \sigma_{A,b}) \in \stab_{\GA}(\c_{h_{j+1}^{\Omega}})$, Lemma \ref{lemma_stab_3} gives $(\g, \sigma_{A',b}) \in \stab_{\GA}(\c_{h_{j+1}^{\Omega}})$. Define 
$$
    \widehat{\AA_{j+1}} = A_1 \times \cdots \times A_j \times A_{j+2} \times \cdots \times A_m \subseteq \F_q^{m-1}.
$$
Also, define $\widehat{\X_{j+1}} = (X_1, \ldots, X_j, X_{j+2}, \ldots, X_m) \in \F_q^{m-1}$. Then, we can write
\begin{multline*}
    \stab_{\GA}(\c_{h_{j+1}^{\Omega}}) = \{ (\g, \sigma_{A,b}) \in \GA: a_{i, j+1}=0 = b_i \text{ for } 1 \leq i \leq j,  \\
    a_{j+1,\z} =0 \text{ for } j+2 \leq \z \leq s_1, \sigma_{a_{j+1, j+1}, b_{j+1}} \in \Delta_{\Omega} \text{ and } \widehat{h_{j+1}} (\sigma_{\widehat{A_{j+1}},{\bf 0}}(\widehat{\X_{j+1}})) = \widehat{h_{j+1}}(\widehat{\X_{j+1}}) \}.
\end{multline*}
where $\widehat{A_{j+1}}$ is obtained from $A$ by removing the $(j+1)^{\rm th}$-row and $(j+1)^{\rm th}$-column of $A$ and $\sigma_{\widehat{A_{j+1}},{\bf 0}} \in \Aff(\widehat{\AA_{j+1}})$ and $\widehat{h_{j+1}} \in \Fq[\widehat{\X_{j+1}}]$ is given by 
\begin{eqnarray*}
    \widehat{h_{j+1}}(\widehat{\X_{j+1}}) = \prod_{i=1} ^{j}\left(1 - X_i^{d_i-1}\right). 
\end{eqnarray*}

Note that $\widehat{h_{j+1}}$ is a polynomial of degree $(u-\ell)$ and corresponds to a minimum weight codeword of $\AC_{q}(u-\ell, \widehat{\AA_{j+1}})$. For $(\g, \sigma_{A,b}) \in \stab_{\GA}(\c_{h_{j+1}^{\Omega}})$, it should necessarily hold that $\widehat{h_{j+1}} (\sigma_{\widehat{A_{j+1}},{\bf 0}}(\widehat{\X_{j+1}})) = \widehat{h_{j+1}}(\widehat{\X_{j+1}})$. But Lemma \ref{lemma_BLT} tells us that this is true if and only if $(\widehat{A_{j+1}})_{11} \in \M_{d_{s_1}}(j, \mu_1-1)$, where $(\widehat{A_{j+1}})_{11}$ is the first diagonal block of $\widehat{A_{j+1}}$.  Therefore, 
\begin{multline*}
    \stab_{\GA}(\c_{h_{j+1}^{\Omega}}) = \{ (\g, \sigma_{A,b}) \in \GA: \g (a_{j+1, j+1})^{\ell} =1,  \sigma_{a_{j+1, j+1}, b_{j+1}} \in \Delta_{\Omega}, a_{i, j+1}=0 = b_i \\ \text{ for } 1 \leq i \leq j,
    a_{j+1, \z} =0    \text{ for } j+2 \leq \z \leq s_1,  (\widehat{A_{j+1}})_{11} \in \M_{d_{s_1}}(j, \mu_1-1) \}, 
\end{multline*}
and hence, it follows from equation \eqref{eqn_|G_A|} and \eqref{eqn_M_q_t} that 
$$
     |\stab_{\GA}(\c_{h_{j+1}^{\Omega}})| =  (d_{j+2} \cdots d_m) \frac{ d_{j+1}^{j} (d_{j+2} \cdots d_{s_1})|\Delta_{\Omega}| |\M_{d_{s_1}}(j, \mu_1-1)| |\GA| } {|\GL(\mu_1, d_{s_1})|}. 
$$
Thus, we have 
\begin{equation*}
    \begin{split}
       \left| \N_{q}^{(j+1, \Omega)}(u, \AA) \right|
       &= \frac{|\GA|}{|\stab_{\GA}(\c_{h_{j+1}^{\Omega}})|}\\
       &= \frac{(q-1) (d_{1}\cdots d_{m}) |\GL(\mu_1, d_{s_1})| }{d_{j+1}^{j} (d_{j+2}\cdots d_m) (d_{j+2} \cdots d_{s_1})  |\Delta_{\Omega}| |\M_{d_{s_1}}(j, \mu_1-1)| }\\
       &= \frac{(q-1) (d_{1}\cdots d_{m}) (d_{j+1}^{\mu_1}-1) d_{j+1}^{\mu_1 -1} |\GL(\mu_1-1, d_{j+1})| }{d_{j+1}^{j} (d_{j+2}\cdots d_m) (d_{j+2} \cdots d_{s_1})  |\Delta_{\Omega}| |\M_{d_{s_1}}(j, \mu_1-1)| }\\
       &= \frac{(q-1) d_{j+1} (d_{j+1}^{\mu_1}-1) }{|\Delta_{\Omega}|} \left(\prod\limits_{i=1}^{j} d_i\right) \stirling{\mu_1-1}{j}_{d_{j+1}} \\
        &= \frac{(q-1) d_{j+1} (d_{j+1}^{\mu_1-j}-1) }{  |\Delta_{\Omega}|} \left(\prod\limits_{i=1}^{j} d_i\right) \stirling{\mu_1}{j}_{d_{j+1}}\\
       &= \frac{\b_{j+1}}{|\Delta_{\Omega}|},
    \end{split}
\end{equation*}
where 
$$
    \b_{j+1} = (q-1) d_{j+1}  (d_{j+1}^{\mu_1-j}-1) \left(\prod\limits_{i =1}^{j} d_i\right) \stirling{\mu_1}{j}_{d_{j+1}}.
$$
Now, observe that $\b_{j+1}$ is independent of choice of $\Omega$ and hence $|\N_{q}^{(j+1, \Omega)}|$ depends only on the size of the set $\Delta_{\Omega}$. Since there are $\eta$ distinct orbits $\Omega_1, \ldots, \Omega_{\eta}$, we must have
$$
    \left| \N_{q}^{(j+1)}(u, \AA)\right| = \sum_{i=1}^{\eta} \left| \N_{q}^{(j+1, \Omega_i)}(u, \AA)\right| =\sum_{i=1}^{\eta} \frac{\b_{j+1}}{|\Delta_{\Omega_i}|}.  
$$
Using equation \eqref{eqn_stab-reciprocal-sum}, we get 
$$
    \left| \N_{q}^{(j+1)}(u, \AA)\right| = \frac{\b_{j+1}}{d_{j+1}(d_{j+1} -1)} \binom{d_{j+1}}{\ell}.
$$
Therefore,
$$
    \left| \N_{q}^{(j+1)}(u, \AA)\right|  = (q-1) \bigg(\prod\limits_{i=1}^{j} d_i \bigg) \displaystyle{\stirling{\mu_1}{j}_{d_{j+1}}  \stirling{\mu_1-j}{1}_{d_{j+1}}}\binom{d_{j+1}}{d_{j+1}-\ell}. 
$$
Hence, the result if proved for $r=1$. Let us assume that the result is true for $r-1$. Since $s_{r-1} < j+1 \leq s_r$, we can write  
$$
    f = \g \prod_{i=1}^{s_1} \left( 1-(L_i +b_i)^{d_i-1}\right)  \prod_{i=s_1+1}^{j}\left( 1-(L_{i} +b_{i})^{d_{i}-1}\right) \prod_{\a=1}^{\ell}(L_{j+1}+b_{j+1}-\w_{\a}),
$$
where $L_i  = \sum\limits_{\zeta=1}^{s_{t_i}} a_{i\zeta} X_{\zeta} \in F_{t_i}[X_1, \ldots, X_{s_{t_i}}]$ and $b_i \in F_{t_i}$ for  $1 \leq t_i \leq r$. Note that for $1 \leq i \leq s_1$, $L_i +b_i =0$ is a system of $s_1$ linearly independent equations in $s_1$ variables over $F_1$. Hence, it has a unique solution for each choice of the vector $(\v_1, \ldots, \v_{s_1})^{T}$.  Now, since $X_1, \ldots, X_{s_1}$ have fixed values, we can rewrite this system of linear equations to be $L'_{i} =-b'_i$ for $s_1 +1 \leq i \leq j$ where $L'_i = \sum\limits_{\zeta=s_1 +1}^{s_{t_i}} a'_{i, \zeta} X_\zeta$  for $2 \leq t_i \leq \lambda$ and similarly replace $(L_{j+1}+b_{j+1}-\w_{\a})$ by $(L'_{j+1}+b'_{j+1}-\w_{\a})$ where $L'_{j+1} = \sum\limits_{\zeta=s_1+1}^{r} a_{j+1, \zeta} X_{\zeta}$. Let
$$
    \AA' = F_2 \times\cdots \times F_2 \times \cdots \times F_{\lambda} \times \cdots \times F_{\lambda}.
$$
Then, 
$$
    f = \g  \left(\prod\limits_{i=1}^{s_1} \left( 1-(L_i +b_i)^{d_i-1}\right) \cdot g\right),
$$ 
where 
$$
    g = \prod\limits_{i=s_1+1}^{j+1}\left( 1-(L_{i} +b_{i})^{d_{i}-1}\right) \left( \prod_{\a=1}^{\ell}(L_{j+1}+b_{j+1}-\w_{\a}) \right)\in \S_{(u-\mu_1(d_1 -1))}(\AA).
$$
Hence, for fixed $\nu_1, \ldots, \nu_{s_1}$, we have 
$$
    g = \prod\limits_{i=s_1+1}^{j+1}\left( 1-(L'_{i} +b'_{i})^{d_{i}-1}\right)\prod_{\a=1}^{\ell}(L'_{j+1}+b'_{j+1}-\w_{\a})  \in \S_{(u-\mu_1(d_1 -1))}(\AA').
$$
By induction hypothesis, 
$$
    |\N_{q}^{(j+1)}(u-\mu_1(d_1-1), \AA')| = (q-1)\left( \prod_{i=s_1 +1}^{j} d_i\right) \stirling{\mu_r}{j-s_{r-1}}_{d_{j+1}} \stirling{s_r -j}{ 1}_{d_{j+1}} \binom{d_k}{d_{j+1}-\ell}.
$$
But, there are $\left(\prod\limits_{i=1}^{s_1}d_i\right)$ choices for the vector $(\nu_1, \ldots, \nu_{s_1})^{T}$. Hence, 
\begin{equation}\label{eqn_result_tk=r}
    |\N_{q}^{(j+1)}(u, \AA)| = (q-1)\left( \prod_{i=1}^{j} d_i\right) \stirling{\mu_r}{j-s_{r-1}}_{d_{j+1}} \stirling{s_r -j}{ 1}_{d_{j+1}} \binom{d_k}{d_{j+1}-\ell}.
\end{equation}
\end{proof}

\subsubsection{Subcase $t_k <r$}
We are now left with the case when  $\ell < d_{j+1} -1$ and $t_{k} < r$. To consider this case, we further divide it into two sub-cases, i.e., $d_k=d_{j+1}-\ell$ and $d_k> d_{j+1}-\ell$  respectively and consider them separately. We first consider the case when  $d_{k} = d_{j+1}-\ell$. Then, we have 
$$
    u = \sum\limits_{\substack{i=1 \\i \neq k}}^{j+1} (d_i-1),
$$
and thus every minimum weight codeword is given by $\c_f = \g \cdot \c_{\widehat{h_{k}} \circ \sigma}$ for some $\g \in \F_q^*, \sigma \in \Aff(\AA) $ where $\widehat{h_k}(\X)$ is given by 
$$ 
    \widehat{h_{k}}(\X) = \sum\limits_{\substack{i=1 \\i \neq k}}^{j+1} \left( 1 - X_{i}^{d_i-1} \right).
$$
The next theorem determines $|\N_{q}^{(k)}(u, \AA)|$ in this case.  

\begin{theorem} \label{thm_d_k=d_j+1 -l}
Let $u = \sum\limits_{i=1}^{j}(d_i-1)+\ell $ for some $0 \leq j <m$, $0 < \ell < d_{j+1}-1$. Let $r, t_k, k_0$ be defined as before, let $1\leq k \leq j+1 $ be such that $t_{k}< r$ and $d_{k} = d_{j+1} -\ell$, then  
$$
    \N_{q}^{(k)}(u, \AA) = (q-1) \bigg(\prod\limits_{\underset{i \neq k}{i=1}}^{j+1} d_i \bigg) \bigg(\prod\limits_{i=s_{t_k}+1}^{j+1} d_i \bigg) \displaystyle{\stirling{\mu_r}{j+1-s_{r-1}}_{d_{j+1}}  \stirling{\mu_{t_k}}{1}_{d_k}}.
$$
\end{theorem}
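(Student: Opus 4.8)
The plan is to exploit that $d_k=d_{j+1}-\ell$ forces $s:=d_k-(d_{j+1}-\ell)=0$, so that the empty set is the only admissible $\Omega$ and the product over $\a$ in \eqref{eqn_h-k-Omega} is vacuous. Hence, by \eqref{eqn_N-k}, the set $\N_q^{(k)}(u,\AA)=\N_q^{(k,\emptyset)}(u,\AA)=\orb_{\GA}(\c_{\widehat{h_k}})$ is a single orbit, where $\widehat{h_k}(\X)=\prod_{\substack{i=1\\ i\neq k}}^{j+1}(1-X_i^{d_i-1})$. By the orbit--stabilizer theorem (Theorem \ref{thm_orb-stab}) it then suffices to compute $|\stab_{\GA}(\c_{\widehat{h_k}})|$ and to divide $|\GA|$, given by \eqref{eqn_|G_A|}, by it.

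The decisive feature is that the variable $X_k$ does not occur in $\widehat{h_k}$. Applying Lemma \ref{lemma_stab_1} in the degenerate case $s=0$ to any $(\g,\sigma_{A,b})\in\stab_{\GA}(\c_{\widehat{h_k}})$ yields $\g=1$ together with $a_{i,k}=0$ and $b_i=0$ for every $i\in\{1,\dots,j+1\}\setminus\{k\}$. Because $A\in\G$ has invertible diagonal blocks while column $k$ of the block $A_{t_kt_k}$ is now $a_{k,k}$ times a standard basis vector, this forces $a_{k,k}\neq0$. Lemma \ref{lemma_stab_2} next annihilates the entries $a_{i,\zeta}$ with $s_{r-1}<i\le j+1$ and $j+2\le\zeta\le s_r$, and Lemma \ref{lemma_stab_3} shows that the entries of row $k$ in the columns $\{1,\dots,s_{t_k}\}\setminus\{k\}$ may be altered arbitrarily without leaving the stabilizer. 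Thus all the data attached to coordinate $k$ --- namely $b_k\in F_{t_k}$, the row-$k$ off-diagonal entries, $a_{k,k}\in F_{t_k}^{*}$, and the column-$k$ entries in the tail rows $i>j+1$ --- are free and decoupled from the rest of $A$.

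This structure invites passing to $\widehat{\AA_{k}}$, the Cartesian product obtained from $\AA$ by deleting the $k$-th coordinate, i.e.\ one copy of $F_{t_k}$. Deleting row $k$ and column $k$ of $A$ and the entry $b_k$ of $b$ should define a surjection $\stab_{\GA}(\c_{\widehat{h_k}})\to\stab_{\GG_{\widehat{\AA_{k}}}}(\c_{\widehat{h_k}})$ all of whose fibres have the same size $C:=d_k^{s_{t_k}}(d_k-1)\prod_{i=j+2}^{m}d_i$, recording precisely the free data isolated above. Here I must verify that the reduced matrix lies in the analogue of $\G$ for $\widehat{\AA_{k}}$, that invertibility persists through $\det A_{t_kt_k}=\pm\,a_{k,k}\det(\widehat{A_k})_{t_kt_k}$, and that the defining identity $\g\cdot\widehat{h_k}(A\X+b)=\widehat{h_k}(\X)$ is equivalent to its counterpart on $\widehat{\AA_{k}}$, which holds because $X_k$ is absent and $a_{i,k}=0$ for $i\neq k$. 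On $\widehat{\AA_{k}}$ the polynomial $\widehat{h_k}$ is a \emph{full} product $\prod(1-X_i^{d_i-1})$, i.e.\ it falls under the case $\ell=d_{j+1}-1$; thus Theorem \ref{thm_l=d-1} applies and, after noting that deleting coordinate $k$ moves $j+1$ to $j$, keeps block $r$ and $\mu_r$ intact, and replaces $s_{r-1}$ by $s_{r-1}-1$, gives $|\orb_{\GG_{\widehat{\AA_{k}}}}(\c_{\widehat{h_k}})|=(q-1)\big(\prod_{\substack{i=1\\ i\neq k}}^{j+1}d_i\big)\stirling{\mu_r}{j+1-s_{r-1}}_{d_{j+1}}$.

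Finally I would assemble $|\N_q^{(k)}(u,\AA)|=\dfrac{|\GA|}{C\,|\GG_{\widehat{\AA_{k}}}|}\,\big|\orb_{\GG_{\widehat{\AA_{k}}}}(\c_{\widehat{h_k}})\big|$ and evaluate the group ratio with Corollary \ref{|Aff_AA|} and the elementary identity $|\GL(\mu,\F_{d})|/|\GL(\mu-1,\F_{d})|=d^{\mu-1}(d^{\mu}-1)$; using $s_{t_k}=s_{t_k-1}+\mu_{t_k}$ the powers of $d_k$ cancel and one obtains $\dfrac{|\GA|}{C\,|\GG_{\widehat{\AA_{k}}}|}=\big(\prod_{i=s_{t_k}+1}^{j+1}d_i\big)\stirling{\mu_{t_k}}{1}_{d_k}$, which together with the previous display produces the asserted count. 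The main obstacle is the bookkeeping in the reduction step: establishing that the fibre size is exactly $C$ demands a careful comparison of the block-lower-triangular pattern of $\G$ before and after deleting coordinate $k$, and a check through Lemmas \ref{lemma_stab_1}--\ref{lemma_stab_3} that none of the freed entries can influence whether $\widehat{h_k}$ is preserved. A minor separate case is $\mu_{t_k}=1$, where block $t_k$ disappears from $\widehat{\AA_{k}}$ and $\stirling{\mu_{t_k}}{1}_{d_k}=1$ makes the formula degenerate consistently.
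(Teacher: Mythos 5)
Your proposal is correct and follows essentially the same route as the paper's proof: observe that $d_k=d_{j+1}-\ell$ forces $s=0$ so that $\N_q^{(k)}(u,\AA)$ is the single orbit $\orb_{\GA}(\c_{\widehat{h_k}})$, pin down $\stab_{\GA}(\c_{\widehat{h_k}})$ via Lemmas \ref{lemma_stab_1}--\ref{lemma_stab_3}, delete the $k$-th coordinate to land in the case $\ell=d_{j+1}-1$ on $\widehat{\AA_k}$ handled by Theorem \ref{thm_l=d-1}, and assemble with the orbit--stabilizer theorem. The only difference is presentational: you compress the paper's three-step stabilizer comparison (equations \eqref{GA_to_G}, \eqref{eqn_stab_G} and \eqref{eqn_stab_GA}) into one deletion map with constant fibre size $C=d_k^{s_{t_k}}(d_k-1)\prod_{i=j+2}^{m}d_i$, which agrees exactly with the composite of the paper's factors, and your index bookkeeping ($j+1\mapsto j$, $s_{r-1}\mapsto s_{r-1}-1$, $\mu_r$ unchanged) and the final ratio $|\GA|/\bigl(C\,|\GG_{\widehat{\AA_{k}}}|\bigr)=\bigl(\prod_{i=s_{t_k}+1}^{j+1}d_i\bigr)\stirling{\mu_{t_k}}{1}_{d_k}$ both check out.
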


\begin{proof}
Recall group action  $\Phi$ defined in equation \eqref{eqn_group-action}, i.e., 
$$
    \Phi:  \GA  \times \N_{q}^{(k, \Omega)(u, \AA)} \rightarrow \N_{q}^{(k, \Omega)}(u, \AA) \text{ given by } \Phi((\g, \sigma_{A,b}), \c_f) = \g \cdot \c_{f \circ \sigma_{A,b}}.
$$
We have already seen that 
$$
    \N_{q}^{(k, \Omega)}(u, \AA) = \orb_{\GA}(\c_{h_{k}^{\Omega}})   
$$
for $h_{k}^{\Omega}(\X)$ defined by equation \eqref{eqn_h-k-Omega}. But $d_{k} = d_{j+1}-\ell$ gives $h_{k}^{\Omega}(\X) = \widehat{h_k}(\X)$ and hence is independent of the choice of $\Omega$. Thus, it follows from equation \eqref{eqn_N-k} that $\N_{q}^{(k)}(u, \AA) =  \N_{q}^{(k, \Omega)}(u, \AA)$. Then, by Theorem \ref{thm_orb-stab}, it suffices to determine  $|\stab_{\GA}(\c_{\widehat{h_k}})|$ where 
$$
    \stab_{\GA}(\c_{\widehat{h_k}}) := \{(\g, \sigma_{A,b}) \in \GA:  \g \cdot \widehat{h_{k}}(\sigma_{A,b}(\X)) \pmod{\I(\AA)}=\widehat{h_{k}}(\X)\}.
$$
Let $(\g, \sigma_{A,b}) \in \stab_{\GA}(\c_{\widehat{h_k}})$.  It follows from Lemma \ref{lemma_stab_1} that $a_{i,k} =0 =b_i$ for $1 \leq i \leq j+1, i\neq k$ and $\g = a_{k,k}^{0} = 1$.  Moreover, \begin{equation}\label{GA_to_G}
    \left| \stab_{\GA}(\c_{h_k}) \right| = d_k \left(\prod_{i=j+2}^{m}d_i\right) \left|\stab_{\G}(\c_{\widehat{h_k}}) \right|,
\end{equation}
where $\stab_{\G}(\c_{\widehat{h_k}}): = \{A \in \G: \Phi((1, \sigma_{A,{\bf 0}} ), \c_{\widehat{h_k}}) = \c_{\widehat{h_k}} \} = \{ A \in \G: \c_{\widehat{h_k}}(A\X) = \c_{\widehat{h_k}}(\X)\}$.

Now, consider the matrix $\widehat{A_{k}}$ obtained from $A$ be removing the $k^{\rm th}$-row and $k^{\rm th}$-column of $A$.  Define 
$$
    \widehat{\AA_{k}}: = F_1^{\mu'_1} \times \cdots \times F_1^{\mu'_{\lambda}} \subseteq \F_{q}^{m-1},
$$ 
such that for $1 \leq i \leq \lambda$,
$$
    \mu'_i = \begin{cases}
    \mu_i & {\text{ if } i \neq t_k},\\
    \mu_{t_{k} }-1 & {\text{ if } i = t_k}.
    \end{cases}
$$
Define $s'_0 = 0$ and $s'_t : = \mu'_1 + \cdots + \mu'_t$  for $1 \leq t \leq \lambda$. Define 
$$
    \widehat{\G_{k}}:= \{ \widehat{A_{k}}: A\in \G\} ;
$$  
$$
    \Aff( \widehat{\AA_{k}}):= \left \{\sigma_{\widehat{A_{k}},\widehat{b_{k}}} \in \AGL(m-1, \Fq):  \widehat{A_{k}} \in  \widehat{\G_{k}}, \hspace{1mm}  \widehat{b_{k}} \in  \widehat{\AA_{k}}\right \} ;
$$
and 
$$
    \GG_{\widehat{\AA_{k}}}:= \F_q^{*} \times \Aff(\widehat{\AA_{k}}).
$$
Note that  $\deg(\widehat{h_k}(\X)) = \sum\limits_{\substack{i=1\\ i\neq k}}^{j+1}(d_i-1) =u$ and hence $\c_{\widehat{h_k}} \in \AC_{q}(u,  \widehat{\AA_{k}})$. Define a map 
$$
    \xi: \stab_{\G}(\c_{\widehat{h_k}}) \rightarrow \stab_{\widehat{\G_{k}}}(\c_{\widehat{h_k}}) \text{  given by } \xi(A) =  \widehat{A_{k}}.
$$
Note that $\xi$ is a surjective homomorphism. Moreover, it follows from part $(i)$ of Lemma \ref{lemma_stab_1} that
$$
    \left|\ker(\xi) \right| = d_{k}^{s_{t_k}-1} (d_k -1) \left( \prod\limits_{i=j+2}^{m}d_i \right).
$$ 
Hence, 
\begin{equation}\label{eqn_stab_G}
    \left| \stab_{\G}(\c_{\widehat{h_k}}) \right| = d_{k}^{s_{t_k}-1} (d_k -1) \left( \prod\limits_{i=j+2}^{m}d_i \right) \left| \stab_{\widehat{\G_{k}}}(\c_{\widehat{h_k}}) \right|.
\end{equation}
Note that, $\c_{\widehat{h_{k}}} \in \AC_{q}(u, \widehat{\AA_k})$ and therefore equation $\eqref{GA_to_G}$ implies 
\begin{equation}\label{eqn_stab_GA}
    \left| \stab_{\GG_{ \widehat{\AA_{k}}}}(\c_{\widehat{h_k}}) \right| = \left( \prod\limits_{i=j+2}^{m}d_i \right) \left| \stab_{\widehat{\G_{k}}}(\c_{\widehat{h_k}}) \right|
\end{equation}
But $u = \sum\limits_{\substack{i=1 \\i \neq k}}^{j}(d_i-1)+ (d_{j+1}-1)$ implies $\c_{\widehat{h_k}}$ is a minimum weight codeword of $\AC_{q}(u, \widehat{\AA_{k}})$ corresponding to the case $\ell = d_{j+1}-1$. Thus, 
$$
    \left| \N_{q}(u,  \widehat{\AA_{k}}) \right| = (q-1)\left( \prod_{\substack{i=1 \\i \neq k}}^{j+1} d_i\right) \stirling{\mu_r}{j+1-s_{r-1}}.
$$
Hence, using the orbit-stabilizer theorem for $\c_{\widehat{h_k}} \in \AC_{q}(u, \widehat{\AA_k})$, we get
\begin{equation}\label{eqn_stab_G_A_k}
    \left| \stab_{\GG_{\widehat{\AA_{k}}}}(\c_{h_k}) \right| = \frac{\left| \GG_{\widehat{\AA_{k}}}\right|}{\left| \N_{q}(u, \widehat{\AA_{k}}) \right|},
\end{equation}
where 
$$
    \left | \GG_{\widehat{\AA_{k}}} \right| = (q-1) \prod_{i=1}^{\lambda} \left[d_{s'_i}^{ ^{\mu'_i(s'_{i-1}+1)}} \left\{ \prod_{t=0}^{\mu'_{i}-1} d_{s_i}^{\mu'_i} - d_{s_i}^{t} \right\} \right].     
$$
Thus, equation \eqref{GA_to_G} gives
$$
    \left|\N^{(k)}(u, \AA) \right| = \frac{\left|\GA\right|}{\left|\stab_{\GA}(\c_{\widehat{h_k}})\right|} = \frac{\left|\GA\right|}{d_k \left(\prod\limits_{i=j+2}^{m}d_i\right) \left|\stab_{\G}(\c_{\widehat{h_k}}) \right|}.
$$
But equation \eqref{eqn_stab_G} gives $ \left| \stab_{\G}(\c_{\widehat{h_k}}) \right| =  \left(\prod\limits_{i=j+2}^{m}d_i\right) \left| \stab_{\widehat{\G_{k}}}(\c_{\widehat{h_k}}) \right|$. Therefore, it follows from equations \eqref{eqn_stab_GA} and \eqref{eqn_stab_G_A_k} that 
\begin{equation*}
    \begin{split}
        \left|\N^{(k)}(u, \AA) \right|
        &= \frac{\left|\GA\right| \left( \prod\limits_{i=j+2}^{m}d_i \right)}{d_{k}^{s_{t_k}} (d_k -1) \left( \prod\limits_{i=j+2}^{m}d_i \right)^{2}\left| \stab_{\GG_{\widehat{\AA_{k}}}}(\c_{\widehat{h_k}}) \right|}\\
        &= \frac{\left|\GA\right| \left|\N_{q}(u,\widehat{\AA_{k}})\right|}{d_{k}^{s_{t_k}} (d_k -1) \left( \prod\limits_{i=j+2}^{m}d_i \right) \left|\GG_{\widehat{\AA_k}}\right| } \\
        &= (q-1) \bigg(\prod\limits_{\underset{i \neq k}{i=1}}^{j+1} d_i \bigg) \bigg(\prod\limits_{i=s_{t_k}+1}^{j+1} d_i \bigg) \displaystyle{\stirling{\mu_r}{j+1-s_{r-1}}_{d_{j+1}}  \stirling{\mu_{t_k}}{1}_{d_k}}.\\
    \end{split}
\end{equation*}
Hence, the desired result is obtained. 
\end{proof}
      
Now, we consider the case when $d_{k} > d_{j+1}-\ell$.  Define $s = d_k -(d_{j+1}-\ell)$. Then $1 \leq s \leq d_k -2$. In this case, a minimum weight codeword is represented by evaluation of the polynomial $h_k^{\Omega}(\X)$ (up to $\AA$-equivalence) given by
$$
    h_k^{\Omega}(\X) = \prod_{i=1}^{j+1}(1-X_i^{d_i-1}) \prod_{\a =1}^{s}(X_k-\omega_{\a}),
$$
where $1 \leq k \leq j+1$ such that $d_k > d_{j+1}-\ell$ and $\Omega = \{\w_1, \ldots, \w_{s} \} \subseteq A_k$ of cardinality $s$. 
    
\begin{theorem}\label{thm_d_k>d_j+1 -l}
Let $u = \sum\limits_{i=1}^{j}(d_i-1)+\ell $ for some $0 \leq j <m$, $0 < \ell < d_{j+1}-1$. With $r, t_k, k_0$ be defined as before. Let $1\leq k \leq j+1 $ be such that $t_{k}< r$ and $d_{k} >  d_{j+1} -\ell$, then
$$
    \left| \N_{q}^{(k)}(u, \AA)\right|  = (q-1) \bigg(\prod\limits_{\underset{i \neq k}{i=1}}^{j+1} d_i \bigg) \bigg(\prod\limits_{i=s_{t_k}+1}^{j+1} d_i \bigg) \displaystyle{\stirling{\mu_r}{j+1-s_{r-1}}_{d_{j+1}}  \stirling{\mu_{t_k}}{1}_{d_k}}\binom{d_k}{d_{j+1}-\ell}.
$$
\end{theorem}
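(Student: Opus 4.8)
The plan is to combine the coordinate-deletion reduction of Theorem~\ref{thm_d_k=d_j+1 -l} with the orbit-summation argument of Theorem~\ref{thm_tk=r}; indeed the asserted expression is exactly the value of $|\N_{q}^{(k)}(u,\AA)|$ from Theorem~\ref{thm_d_k=d_j+1 -l} multiplied by $\binom{d_k}{d_{j+1}-\ell}$, which already hints at the mechanism. Fix $k$ with $t_k<r$ and $d_k>d_{j+1}-\ell$, and put $s:=d_k-(d_{j+1}-\ell)$, so $1\le s\le d_k-2$. For a fixed $\Omega\subseteq A_k$ with $|\Omega|=s$ I would use $\N_{q}^{(k,\Omega)}(u,\AA)=\orb_{\GA}(\c_{h_k^{\Omega}})$ and Theorem~\ref{thm_orb-stab} to write $|\N_{q}^{(k,\Omega)}(u,\AA)|=|\GA|/|\stab_{\GA}(\c_{h_k^{\Omega}})|$, so the computation reduces to showing that $|\N_{q}^{(k,\Omega)}(u,\AA)|=\b_k/|\Delta_{\Omega}|$, where $\Delta_{\Omega}\subseteq\AGL(1,A_k)$ is the stabilizer of $\Omega$ as in \eqref{eqn_stab1} (now over the subfield $A_k$) and $\b_k$ is independent of $\Omega$.

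To identify the stabilizer I would invoke Lemmas~\ref{lemma_stab_1}, \ref{lemma_stab_2} and \ref{lemma_stab_3} exactly as before. Lemma~\ref{lemma_stab_1}(i) gives $a_{i,k}=0$ for $i\in\{1,\dots,j+1\}\setminus\{k\}$, so each $L_i$ with $i\neq k$ is free of $X_k$; consequently the conditions imposed by $\prod_{i\neq k}(1-(L_i+b_i)^{d_i-1})$ and by $\prod_{\a=1}^{s}(L_k+b_k-\w_{\a})$ decouple. Using Lemma~\ref{lemma_stab_3} to discard the off-diagonal $k$-th row entries and then setting $X_{\z}=0$ for $\z\neq k$, the $X_k$-factor forces $\g\,a_{k,k}^{s}=1$ together with $a_{k,k}\Omega+b_k=\Omega$, i.e.\ $\g=(a_{k,k})^{-s}$ and $\sigma_{a_{k,k},b_k}\in\Delta_{\Omega}$; this is the only $\Omega$-dependent contribution and accounts for the factor $|\Delta_{\Omega}|$. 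The remaining entries are constrained only by $\prod_{i\neq k}(1-L_i^{d_i-1})\equiv\prod_{i\neq k}(1-X_i^{d_i-1})\pmod{\I(\AA)}$, which is precisely the situation treated in Theorem~\ref{thm_d_k=d_j+1 -l}: deleting the $k$-th coordinate to form $\widehat{\AA_k}$ reduces it to a codeword of type $\ell=d_{j+1}-1$ on $\widehat{\AA_k}$, enumerated by Theorem~\ref{thm_l=d-1}. This shows that $\b_k$ is independent of $\Omega$ and equals $d_k(d_k-1)$ times the value of $|\N_{q}^{(k)}(u,\AA)|$ obtained in Theorem~\ref{thm_d_k=d_j+1 -l}.

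Finally I would sum over orbits. As in Theorem~\ref{thm_tk=r}, the polynomials $h_k^{\Omega}$ and $h_k^{\Omega'}$ are $\AA$-equivalent exactly when $\Omega,\Omega'$ lie in one $\AGL(1,A_k)$-orbit, so for orbit representatives $\Omega_1,\dots,\Omega_{\eta}$ the sets $\N_{q}^{(k,\Omega_i)}(u,\AA)$ are pairwise disjoint with union $\N_{q}^{(k)}(u,\AA)$. Since $|\N_{q}^{(k,\Omega)}(u,\AA)|=\b_k/|\Delta_{\Omega}|$ depends on $\Omega$ only through $|\Delta_{\Omega}|$, the analogue of \eqref{eqn_stab-reciprocal-sum} over the field $A_k$ (of size $d_k$, with subset size $s$) gives $\sum_{i=1}^{\eta}1/|\Delta_{\Omega_i}|=\frac{1}{d_k(d_k-1)}\binom{d_k}{s}$. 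Hence
$$
    |\N_{q}^{(k)}(u,\AA)|=\b_k\sum_{i=1}^{\eta}\frac{1}{|\Delta_{\Omega_i}|}=\frac{\b_k}{d_k(d_k-1)}\binom{d_k}{s},
$$
and substituting the value of $\b_k$ together with $\binom{d_k}{s}=\binom{d_k}{d_{j+1}-\ell}$ produces the claimed formula.

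The crux, and the step I expect to be delicate, is the decoupling asserted in the second paragraph: one must verify rigorously that the stabilizer splits so that its $\Omega$-dependent part is exactly $|\Delta_{\Omega}|$ while the complementary factor $\b_k$ coincides with the constant appearing in Theorem~\ref{thm_d_k=d_j+1 -l}. This relies on the vanishing $a_{i,k}=0$ (for $i\neq k$) from Lemma~\ref{lemma_stab_1}, which separates the $X_k$-factor from the rest, and on Lemma~\ref{lemma_stab_3}, which guarantees the off-diagonal $k$-th row entries are free and therefore interfere with neither factor. Once this separation is in place, the orbit count via \eqref{eqn_stab-reciprocal-sum} and the appeal to Theorem~\ref{thm_l=d-1} are routine.
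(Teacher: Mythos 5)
Your proposal is correct and follows essentially the same route as the paper: the orbit--stabilizer reduction for the $\GA$-action on $\N_{q}^{(k,\Omega)}(u,\AA)$, Lemmas \ref{lemma_stab_1}--\ref{lemma_stab_3} to split the stabilizer into the $\Delta_{\Omega}$-part and the condition on $\widehat{h_k}$ over $\widehat{\AA_k}$ (which is the $\ell=d_{j+1}-1$ situation counted by Theorem \ref{thm_l=d-1}, exactly as in Theorem \ref{thm_d_k=d_j+1 -l}), followed by the sum over $\AGL(1,A_k)$-orbits via \eqref{eqn_stab-reciprocal-sum}. Your shortcut of identifying the $\Omega$-independent constant $\b_k$ as $d_k(d_k-1)$ times the count of Theorem \ref{thm_d_k=d_j+1 -l}, instead of recomputing $|\stab_{\GA}(\c_{h_k^{\Omega}})|$ through the passage to $\stab_{\widehat{\G_k}}(\c_{\widehat{h_k}})$ as the paper does, is a correct repackaging of the same computation, since the paper's $B_k/(d_k(d_k-1))$ equals precisely that theorem's value.
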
      
      
\begin{proof}
Let $\mathcal{P}_{s}(A_k)$ denotes the collection of all subsets of $A_k$ having exactly $s$ elements. Then, it follows from equation \eqref{eqn_N-k} and \eqref{eqn_N-k-Omega} that 
$$
    \N_{q}^{(k)}(u,\AA) = \bigcup_{\Omega \in \mathcal{P}_{s}(A_k)}\N_{q}^{(k, \Omega)}(u,\AA),
$$ 
where 
$$
    \{ \g \cdot \c_{h_{k}^{\Omega} \circ \sigma_{A,b}}: \g \in \F_{q}^{*}, \sigma_{A,b} \in \Aff(\AA) \}, 
$$
for some fixed set  $\Omega \in \mathcal{P}_{s}(A_k)$. We first determine $|\N_{q}^{(k,\Omega)}(u,\AA)|$ for a fixed set $\Omega = \{\omega_1, \ldots, \omega_s\}$. Consider the group action $\Phi$ defined by equation \eqref{eqn_group-action}. Then it is easy to observe that $N_{q}^{(k,\Omega)}(u,\AA) = \orb_{\GA}(\c_{h_{k}^{\Omega}})$. Hence, using the orbit-stabilizer theorem, it suffices to determine $|\stab_{\GA}(\c_{h_{k}^{\Omega}})|$ where
$$
    \stab_{\GA}(\c_{h_{k}^{\Omega}}) = \{(\g, \sigma_{A,b}) \in \GA: \g \cdot h_{k}^{\Omega}(\sigma_{A,b}(\X)) \pmod{\I(\AA)}=h_{k}^{\Omega}(\X)\}.
$$
Let $(\g, \sigma_{A,b}) \in \stab_{\GA}(\c_{h_{k}^{\Omega}})$. Then, the following equation must hold, i.e, 
$$
    \g \prod_{\substack {i=1 \\ i\neq k} }^{j+1}\left(1 - (L_i+b_i)^{d_i-1}\right) \prod_{\a=1}^{s}(L_k + b_k -\omega_{\a}) \pmod{\I(\AA)} = \prod_{\substack {i=1 \\ i\neq k} }^{j+1}\left(1 - X_i^{d_i-1}\right) \prod_{\a=1}^{s}(X_k -\omega_{\a}),
$$
where $L_1, \ldots, L_{j+1}$ are linearly independent linear homogeneous polynomials such that $L_i \in F_{s_{t_i}}[X_1, \ldots, X_{s_{t_i}}]$ is given by  
$$
    L_{i}(X_1, \ldots, X_{s_{t_i}}) = \sum_{\zeta=1}^{s_{t_i}} a_{i, \zeta } X_{\zeta}, 
$$ 
and $ b_i \in F_{s_{t_i}}$ for $1 \leq i \leq j+1$. It follows from Lemma \ref{lemma_stab_1} and Lemma \ref{lemma_stab_2} that $a_{i,\z} =0 =b_i $ for $1 \leq i \leq j+1, i \neq k$, $\gamma = (a_{k,k})^{-s}$ and $ a_{i,\z} =0 $ for $s_{r-1}<i \leq j+1$ , $j+2 \leq \z \leq s_r$. Therefore, the equation becomes
\begin{multline*}
     \prod_{\substack {i=1 \\ i\neq k} }^{j+1}\left(1 - \left( \sum_{\substack {\zeta=1 \\ \zeta \neq k} }^{s_{t_i}} a_{i, \z } X_\zeta \right)^{d_i-1}\right) \prod_{\a=1}^{s}\left( X_k +  \sum_{\substack{\zeta =1\\
      \z \neq k}}^{s_{t_k}} \frac{a_{k, \zeta }}{a_{k,k}} X_\zeta + \frac{b_k -\omega_{\a}}{a_{k,k}} \right) \pmod{\I(\AA)} =\\
     \prod_{\substack {i=1 \\ i\neq k} }^{j+1}\left(1 - X_i^{d_i-1}\right) \prod_{\a=1}^{s}(X_k -\omega_{\a}).
\end{multline*}
Again substituting $X_i =0$ for $ 1 \leq i \leq s_r, i \neq k$, we get 
$$
   \prod_{\a=1}^{s}\left[ X_k- \left( \frac{\omega_{\a}-b_k}{a_{k,k}}\right) \right] =  \prod_{\a=1}^{s}(X_k -\omega_{\a}), 
$$
which holds true if and only if  $\left\{ \frac{\omega_{1}-b_k}{a_{k,k}}, \ldots, \frac{\omega_{s}-b_k}{a_{k,k}} \right\} = \{\omega_1, \ldots, \omega_s \}$. This is equivalent to determining $\sigma_{a_{k,k}, b_k} \in \AGL(1, F_{t_k})$ such that $a_{k,k} \Omega+b_k =\Omega$. Define the set 
\begin{equation}\label{eqn_Delta_Omega}
    \Delta_{\Omega} = \left\{\sigma_{a_{k,k},b_k} \in\AGL(1, F_{t_k}) : \{a_{k,k}\w_1+b_k, \ldots, a_{k,k}\w_s+b_k\} = \{\w_1, \ldots, \w_s \} \right\}.
\end{equation}
Therefore, our equation becomes 
\begin{multline*}
    \prod_{\substack {i=1 \\ i\neq k} }^{j+1}\left(1 - \left( \sum_{\substack {\zeta=1 \\ \zeta \neq k} }^{s_{t_i}} a_{\zeta, k} X_\zeta \right)^{d_i-1}\right) \prod_{\a=1}^{s}\left( \left(\sum_{\substack{\zeta =1\\\zeta \neq k}}^{s_{t_k}} \frac{a_{k,\zeta }}{a_{k,k}} X_\zeta \right) +X_k -\omega_{\a}\right) \pmod{\I_{\AA}} = \\
    \prod_{\substack {i=1 \\ i\neq k} }^{j+1}\left(1 - X_i^{d_i-1}\right) \prod_{\a=1}^{s}(X_k -\omega_{\a}).
\end{multline*}
Moreover, if $A' = (a_{i,j}^{'}) \in \G$ where
$$
    a'_{i,j} = \begin{cases}
        a_{i,j} & \text{ if } i \neq k,\\
        a_{k,j} & \text{ if } i = k \text{ and } j \in \{s_{t_k}+1, \ldots, m \} \cup \{k\}.\\
    \end{cases}
$$ 
Then, it follows from Lemma \ref{lemma_stab_3} that $(\g, \sigma_{A',b}) \in \stab_{\GA}(\c_{h_{k}^{\Omega}})$ and hence, we can write
\begin{multline*}
    \stab_{\GA}(\c_{h_{k}^{\Omega}}) = \{ (\g, \sigma_{A,b}) \in \GA: a_{i,k}= 0 = b_i \text{ for } i \in [j+1]\setminus \{k\}, \g =(a_{k,k})^{-s}, \\ 
    (a_{k,k},b_k) \in \Delta_{\Omega} \text{ and } \widehat{h_k} (\sigma_{\widehat{A_k},0} (\widehat{\X_k})) = \widehat{h_k}(\widehat{\X_k})\}, 
\end{multline*}
where  $\Delta_{\Omega}$ is given by equation \eqref{eqn_Delta_Omega}$, \widehat{\X_k} = (X_1, \ldots, X_{k-1}, X_{k+1}, \ldots, X_m) \in \F_q^{m-1}$ and 
$\widehat{A_k}$ is obtained from $A$ by removing the $k^{\rm th}$-row and $k^{\rm th}$-column of $A$ and $\sigma_{\widehat{A_k},0} \in \Aff(\widehat{\AA_{k}})$
and $\widehat{h_k} \in \Fq[\widehat{\X_k}]$ is given by  
$$
    \widehat{h_k}(\widehat{\X_k}) = \prod_{\substack {i=1 \\ i\neq k} }^{j+1}\left(1 - X_i^{d_i-1}\right).
$$
Note that $\widehat{h_k}(\widehat{\X_k})$ is a polynomial of degree $(u-s)$ and corresponds to a minimum weight codeword of $\AC_{q}(u-s, \widehat{\AA_k})$. Therefore, 
$$
    |\stab_{\GA}(\c_{h_{k}^{\Omega}})| = d_{k}^{s_{t_k}-1} (d_{j+2}\cdots d_m)^2  |\Delta_{\Omega}| |\stab_{\widehat{\G_k}}(\c_{\widehat{h_k}})|.
$$
We know that $\c_{\widehat{h_k}} \in \AC_{q}(u-s, \widehat{\AA_k})$ where $u-s = \sum\limits_{\substack{i =1\\i \neq k}}^{j+1} (d_i -1)$. Therefore, 
$$
    |\N_{q}(u-s, \widehat{\AA_k})| = \frac{|\GG_{\widehat{\AA_k}}|}{|\stab_{\GG_{\widehat{\AA_k}}}(\c_{\widehat{h_k}})|} = (q-1) \left(\prod_{\substack{i =1\\i \neq k}}^{j+1} d_i\right) \stirling{\mu_r}{j+1-s_{r-1}}_{d_{j+1}}.
$$
But equation \eqref{eqn_stab_GA} gives $\left| \stab_{\GG_{\widehat{\AA_k}}}(\c_{\widehat{h_k}}) \right| = \left( \prod\limits_{i=j+2}^{m}d_i \right) \left| \stab_{\widehat{\G_{k}}}(\c_{\widehat{h_k}}) \right|$. Therefore, 

\begin{equation*}
    \begin{split}
       \left| \N_{q}^{(k, \Omega)}(u, \AA) \right| &= \frac{|\GA|}{|\stab_{\GA}(\c_{h_{k}^{\Omega}})|}  = \frac{|\GA|}{d_{k}^{s_{t_k}-1} (d_{j+2}\cdots d_m)^2  |\Delta_{\Omega}| |\stab_{\widehat{\G_k}}(\c_{\widehat{h_{k}}})|}\\
       &= \frac{|\GA| |\N_{q}(u-s, \widehat{\AA_k})|}{|\GG_{\widehat{\AA_k}}| d_{k}^{s_{t_k}-1} (d_{j+2}\cdots d_m)  |\Delta_{\Omega}|   }\\
       &= \frac{(q-1) d_{k}^{s_{t_k}} (d_k^{\mu_{t_k} }-1) (d_{s_{t_k}+1} \cdots d_m) }{ d_{k}^{s_{t_k}-1} (d_{j+2}\cdots d_m) |\Delta_{\Omega}|} \left(\prod\limits_{\substack{i =1\\i \neq k}}^{j+1} d_i\right) \stirling{\mu_r}{j+1-s_{r-1}}_{d_{j+1}}\\
       &= \frac{B_k}{|\Delta_{\Omega}|},
    \end{split}
\end{equation*}
where 
$$
    B_k = (q-1) d_k  (d_k^{\mu_{t_k}}-1) \left(\prod\limits_{i =s_{t_k}+1}^{j+1} d_i\right) \left(\prod\limits_{\substack{i =1\\i \neq k}}^{j+1} d_i\right) \stirling{\mu_r}{j+1-s_{r-1}}_{d_{j+1}}.
$$
Now, observe that $B_{k}$ is independent of choice of $\Omega$ and hence $|\N_{q}^{(k, \Omega)}|$  depends only on the size of  set $ \Delta_{\Omega}$. Since there are $\eta$ distinct orbits $\Omega_1, \ldots, \Omega_{\eta}$, we must have
$$
    \left| \N_{q}^{(k)}(u, \AA)\right| = \sum_{i=1}^{\eta} \left| \N_{q}^{(k, \Omega_i)}(u, \AA)\right| = \sum_{i=1}^{\eta} \frac{B_k}{ |\Delta_{\Omega_i}|}. 
$$
But, equation \eqref{eqn_stab-reciprocal-sum} gives 
$$
     \left| \N_{q}^{(k)}(u, \AA)\right| =  \frac{B_k}{d_k(d_k -1)} \binom{d_k}{s}, 
$$
and therefore,  
\begin{equation}
    \left| \N_{q}^{(k)}(u, \AA)\right|  = (q-1) \bigg(\prod\limits_{\underset{i \neq k}{i=1}}^{j+1} d_i \bigg) \bigg(\prod\limits_{i=s_{t_k}+1}^{j+1} d_i \bigg) \displaystyle{\stirling{\mu_r}{j+1-s_{r-1}}_{d_{j+1}}  \stirling{\mu_{t_k}}{1}_{d_k}}\binom{d_k}{d_{j+1}-\ell}.
\end{equation}
Hence, the proof is complete. 
\end{proof}

\begin{remark}\label{rmk_tk<r_combined}
    The results in Theorem \ref{thm_d_k=d_j+1 -l} and Theorem \ref{thm_d_k>d_j+1 -l} can be combined since they differ only in the factor $\binom{d_k}{d_{j+1-\ell}}$ which is equal to $1$ in Theorem $\ref{thm_d_k=d_j+1 -l}$. 
\end{remark}

\section{Proof of the Main Theorem}\label{section_conclusion}

Recalling notations from Section \ref{section_intro}, 
let $m \geq 1$ and $\Fq$ be the finite field with $q$ elements. Consider a subset $\AA$ of $\F_q^m$ defined by 
$$
\mathcal{A}: = A_1 \times \cdots \times A_m ,
$$
where $A_1, \ldots, A_m$ are subsets of $\Fq$ and let $|A_i| =d_i$ for $1 \leq i \leq m$. Moreover, we assume that $1 < d_1 \leq \cdots \leq d_m \leq q$.  We consider a special case for which $\AA$ is given by
$$
    \AA = F_1 \times \cdots \times F_1 \times \cdots \times F_{\lambda} \times \cdots \times F_{\ld}  = F_1^{\mu_1} \times \cdots \times F_{\ld}^{\mu_{\ld}} \subseteq \F_q^m,
$$
where $F_1\subsetneq \ldots \subsetneq F_{\lambda}$ are distinct  subfields of $\Fq$ and each $F_t$ is repeated $\mu_t$ times in $\AA$ for some positive integers $\mu_1, \ldots, \mu_{\ld}$. Note that $\mu_1 + \cdots+ \mu_{\ld} = m$. Let $s_0 =0 $ and for $1 \leq t \leq \lambda$, define $s_t = \mu_1+ \cdots+\mu_t$.  The affine Cartesian code $\AC_{q}(u, \AA)$ is given by $ \Ev(\S_{\leq u}(\AA))$ where the evaluation map is defined by equation \eqref{eqn_Ev}. We consider affine Cartesian codes for this special case of nested subfields and focus on determining the number of minimum weight codewords of $\AC_{q}(u, \AA)$ for $0 \leq u \leq K$. We denote by $\N_{q}(u, \AA)$, the set of minimum weight codewords of  $\AC_{q}(u, \AA)$.

The next theorem enumerates the minimum weight codewords for this special case of affine Cartesian codes for all values of $u$, i.e., $0 \leq u \leq K$.  

\begin{theorem}\label{thm_main-proof}
Let $m \geq 1$ and  $\AA = F_1 \times \cdots \times F_1 \times \cdots \times F_{\lambda} \times \cdots \times F_{\ld}  = F_1^{\mu_1} \times \cdots \times F_{\ld}^{\mu_{\ld}} \subseteq \F_q^m $ as before. Then,
$$
    \left| \N_{q}(0, \AA) \right| = q-1.
$$
Let $K = \sum\limits_{i=1}^{m} (d_i-1)$. For $1 \leq u \leq K$, write 
$$
    u = \sum_{i=1}^{j}(d_i-1)+\ell,
$$
where $j, \ell$ are uniquely determined integers such that $0 \leq j < m$ and $0 < \ell \leq d_{j+1}-1$. Then
\begin{equation*}
    \left| \N_{q}(u, \AA) \right| = \begin{cases}
        (q-1) \bigg(\prod\limits_{\underset{i \neq k}{i=1}}^{j+1} d_i \bigg)  \displaystyle{\stirling{\mu_r}{j+1-s_{r-1}}_{d_{j+1}}} & \text{ if } \ell = d_{j+1}-1,\\
        \left| \N_{q}^{(j+1)}(u, \AA) \right| +  \sum\limits_{t = t_{k_0}}^{r-1} \left| \N_{q}^{(s_{t})}(u, \AA) \right|   & \text{ if } \ell < d_{j+1}-1,\\
    \end{cases}
\end{equation*}
where $r = t_{j+1}$, $1 \leq k_0 \leq j+1$ is the least integer $k$ such that $d_{k} \geq d_{j+1}-\ell$ and  $|\N_{q}^{(k)}(u, \AA)|$ is given by
    \begin{itemize}
        \item If $t_k = r$, 
        $$
            \left| \N_{q}^{(k)}(u, \AA) \right| = (q-1) \bigg(\prod\limits_{\underset{i\neq k}{i=1}}^{j+1} d_i \bigg) \displaystyle{\stirling{\mu_r}{j-s_{r-1}}_{d_{j+1}}  \stirling{s_{r}-j}{1}_{d_k}} \binom{d_{k}}{d_{j+1}-\ell}. 
        $$

        \item If $t_k < r$, 
        $$
            \left| \N_{q}^{(k)}(u, \AA) \right| =   (q-1) \bigg(\prod\limits_{\underset{i \neq k}{i=1}}^{j+1} d_i \bigg) \bigg(\prod\limits_{i=s_{t_k}+1}^{j+1} d_i \bigg) \displaystyle{\stirling{\mu_r}{j+1-s_{r-1}}_{d_{j+1}}  \stirling{\mu_{t_k}}{1}_{d_k}}\binom{d_k}{d_{j+1}-\ell}.  
        $$
    \end{itemize}
\end{theorem}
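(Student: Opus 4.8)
The plan is to obtain Theorem \ref{thm_main-proof} by assembling the case-by-case results established in Section \ref{section_enumeration}; no new computation is needed, and the task reduces to matching each branch of the statement to the theorem that produces it. I would begin with the base case $u = 0$, which is recorded in equation \eqref{eqn_N_0} and gives $|\N_q(0, \AA)| = q-1$ at once. For $1 \leq u \leq K$, I write $u = \sum_{i=1}^{j}(d_i-1) + \ell$ with $0 \leq j < m$ and $0 < \ell \leq d_{j+1}-1$ as in \eqref{eqn_u}, and branch according to whether $\ell = d_{j+1}-1$ or $\ell < d_{j+1}-1$.

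When $\ell = d_{j+1}-1$, Theorem \ref{thm_l=d-1} applies directly and produces the first branch of the formula. When $\ell < d_{j+1}-1$, I would start from the union \eqref{eqn_union_N_k} and invoke Proposition \ref{prop_empty-intersection}: since $\N_q^{(k_1)}(u,\AA) = \N_q^{(k_2)}(u,\AA)$ exactly when $t_{k_1} = t_{k_2}$ and the two sets are disjoint otherwise, the union collapses to one representative per block. Taking $\N_q^{(j+1)}(u,\AA)$ for the top block $t = r$ and $\N_q^{(s_t)}(u,\AA)$ for each block $t$ with $t_{k_0} \leq t \leq r-1$ then yields precisely the additive decomposition \eqref{eqn_N_sum}, which is the second branch of the formula.

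It remains to supply the two bullet formulas for $|\N_q^{(k)}(u,\AA)|$. For $t_k = r$ the value is given by Theorem \ref{thm_tk=r}, matching the first bullet verbatim. For $t_k < r$ I would apply Theorems \ref{thm_d_k=d_j+1 -l} and \ref{thm_d_k>d_j+1 -l}; by Remark \ref{rmk_tk<r_combined} these coincide once both are written with the factor $\binom{d_k}{d_{j+1}-\ell}$, since that binomial equals $1$ precisely when $d_k = d_{j+1}-\ell$, giving the second bullet. Substituting the representative $k = s_t$ (so that $t_k = t$, $d_k = d_{s_t}$, and $\mu_{t_k} = \mu_t$) into this formula then evaluates each summand $|\N_q^{(s_t)}(u,\AA)|$ of \eqref{eqn_N_sum}.

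The step demanding the most care is the bookkeeping around Proposition \ref{prop_empty-intersection}: I must confirm that the distinct sets among $\{\N_q^{(k)}(u,\AA) : k_0 \leq k \leq j+1\}$ are indexed precisely by the block-values $t_{k_0}, t_{k_0}+1, \ldots, r$; that each representative index $s_t$ with $t_{k_0} \leq t \leq r-1$ satisfies $k_0 \leq s_t \leq j$ (so it is admissible, with $d_{s_t} \geq d_{j+1}-\ell$, using $s_t \leq s_{r-1} < j+1$ and $s_t \geq s_{t_{k_0}} \geq k_0$); and that $|\N_q^{(k)}(u,\AA)|$ is genuinely constant across a block, so the chosen representative computes the correct summand. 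Once this indexing is settled, the four constituent theorems exhaust every case and the proof is complete.
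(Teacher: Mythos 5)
Your proposal is correct and follows essentially the same route as the paper's own proof: the base case via equation \eqref{eqn_N_0}, the branch $\ell = d_{j+1}-1$ via Theorem \ref{thm_l=d-1}, and the branch $\ell < d_{j+1}-1$ assembled from Proposition \ref{prop_empty-intersection} and equation \eqref{eqn_N_sum}, with the two bullets supplied by Theorem \ref{thm_tk=r} and by Theorems \ref{thm_d_k=d_j+1 -l} and \ref{thm_d_k>d_j+1 -l} combined through Remark \ref{rmk_tk<r_combined}. The extra bookkeeping you flag (that the distinct sets are indexed by the blocks $t_{k_0},\ldots,r$, that each representative $s_t$ is admissible, and that $|\N_q^{(k)}(u,\AA)|$ is constant on a block) is a sound and slightly more careful verification than the paper itself spells out, and it follows directly from Proposition \ref{prop_empty-intersection}.
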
  

\begin{proof}
If $u=0$, the result follows from equation $\eqref{eqn_N_0}$. For $1 \leq u \leq K$,  we have 
$$
    u = \sum\limits_{i=1}^{j}(d_i-1)+\ell,
$$
for some uniquely determined integers $j, \ell$ such that $0 \leq j <m$ and $0 < \ell \leq d_{j+1}-1$.
 
If $\ell =d_{j+1}-1$, the result directly follows from Theorem \ref{thm_l=d-1}. If $\ell < d_{j+1}-1$, then it follows from Proposition \ref{prop_empty-intersection} and equation \eqref{eqn_N_sum} that 
$$
    |\N_{q}(u, \AA)| =\left| \N_{q}^{(j+1)}(u, \AA) \right| +  \sum\limits_{t = t_{k_0}}^{r-1} \left| \N_{q}^{(s_{t})}(u, \AA) \right|, 
$$
and for $1 \leq k \leq j+1$, $| \N_{q}^{(k)}(u, \AA) |$ is determined separately for different cases. If $t_k=r$, then $| \N_{q}^{(j+1)}(u, \AA) |$ is given by Theorem \ref{thm_tk=r} whereas, if $t_k< r$, the result follows from Theorem \ref{thm_d_k=d_j+1 -l}, \ref{thm_d_k>d_j+1 -l} and Remark \ref{rmk_tk<r_combined}. 
\end{proof}

We have seen in Section \ref{section_intro} that Reed-Solomon codes and Reed-Muller codes are special cases of affine Cartesian codes. Thus, we deduce the number of minimum weight codewords for these codes from our result. 

\subsection{Reed-Solomon codes}
Let $n, k$ be positive integers such that $k \leq n \leq q$. Let $T=\{p_1, \ldots, p_n \} \subseteq \Fq$. Then
$$
    \RS_{q}(u,n) :=\{ (f(p_1), \ldots, f(p_n)): f \in \Fq[X], \deg(f) < k  \}
$$
is a linear code of length $n$ and dimension $k$, called the \textit{Reed-Solomon code}. This is a Maximum Distance Separable (MDS) code and its minimum distance is given by $n-k+1$. Note that $\RS_{q}(k,n) =\Fqn$ if  $k \geq n$. Therefore, it suffices to consider $1 \leq k \leq n$.  

\begin{cor}
For $1 \leq k \leq n$, let $\N(\RS_{q}(k,n))$ denotes the set of minimum weight codewords of Reed-Solomon code $\RS_{q}(k,n)$. Then
$$
    |\N(\RS_{q}(k,n))| = (q-1)\binom{n}{n-k+1}.
$$
\end{cor}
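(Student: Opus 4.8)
The plan is to obtain the corollary as the $m=1$ specialization of Theorem~\ref{thm_main-proof}. As recorded in Section~\ref{section_intro}, if $A_1=F_1$ is a subfield of $\Fq$ with $|F_1|=n$, then $\AC_q(u,\AA)=\RS_q(u+1,n)$, so $\RS_q(k,n)$ corresponds to $u=k-1$. In this regime $\lambda=1$, $\mu_1=1$, $s_0=0$, $s_1=1=m$, $d_1=n$, and $K=n-1$, so the range $1\le k\le n$ matches $0\le u\le K$. Writing $u=\sum_{i=1}^{j}(d_i-1)+\ell$ forces $j=0$ and $\ell=u=k-1$, whence $r=t_{j+1}=t_1=1$ and $d_{j+1}=n$. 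First I would handle the two boundary values. For $u=0$ (that is $k=1$) the theorem gives $|\N_q(0,\AA)|=q-1=(q-1)\binom{n}{n}$, which is $(q-1)\binom{n}{n-k+1}$. For $\ell=d_{j+1}-1=n-1$ (that is $k=n$) the first branch yields $(q-1)\,d_1\,\stirling{\mu_1}{\,j+1-s_{r-1}}_n=(q-1)\,n\,\stirling{1}{1}_n=(q-1)\,n=(q-1)\binom{n}{1}$, again $(q-1)\binom{n}{n-k+1}$.

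The main case is $0<\ell<n-1$, equivalently $2\le k\le n-1$. Since $d_1=n\ge n-\ell=d_{j+1}-\ell$, the least index with $d_{k_0}\ge d_{j+1}-\ell$ is $k_0=1$, so $t_{k_0}=1=r$ and the sum $\sum_{t=t_{k_0}}^{r-1}$ is empty. Hence $|\N_q(u,\AA)|=|\N_q^{(j+1)}(u,\AA)|=|\N_q^{(1)}(u,\AA)|$, and because $t_1=r$ the relevant formula is the $t_k=r$ branch. There the empty product $\prod_{i\ne 1,\,i=1}^{1}d_i$ equals $1$, while $\stirling{\mu_1}{\,j-s_{r-1}}_n=\stirling{1}{0}_n=1$ and $\stirling{s_r-j}{1}_n=\stirling{1}{1}_n=1$; the only surviving factor is $\binom{d_1}{\,d_{j+1}-\ell}=\binom{n}{n-\ell}$. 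Substituting $\ell=k-1$ gives $|\N(\RS_q(k,n))|=(q-1)\binom{n}{n-k+1}$, completing the computation.

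I expect the one genuine subtlety to be a mismatch of hypotheses: Theorem~\ref{thm_main-proof} is proved only when the evaluation set is a subfield, whereas $\RS_q(k,n)$ allows an arbitrary set $T$ of $n$ points, and not every $n\le q$ is the order of a subfield of $\Fq$. I would bridge this by recalling that $\RS_q(k,n)$ is MDS with minimum distance $d=n-k+1$: a codeword $\c_f$ has weight $d$ exactly when $f$ has degree $k-1$ with all of its $k-1$ roots lying in $T$, i.e. $f=\gamma\prod_{\alpha=1}^{k-1}(X-\omega_\alpha)$ for some $\gamma\in\Fq^{*}$ and some $(k-1)$-subset $\{\omega_\alpha\}\subseteq T$. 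Counting these directly gives $(q-1)\binom{n}{k-1}=(q-1)\binom{n}{n-k+1}$. This elementary argument both agrees with the specialization above and removes the subfield restriction, so the role of the $m=1$ computation is precisely to exhibit the corollary as the degenerate instance of the main theorem, with all $q$-binomial factors collapsing to $1$ and only the binomial coefficient $\binom{n}{n-k+1}$ remaining.
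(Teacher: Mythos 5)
Your proposal is correct, and its core computation follows the same route as the paper's proof: the paper likewise identifies $\RS_{q}(k,n)$ with $\AC_{q}(k-1,\AA)$ for $m=1$, handles $k=1$ separately, and for $2\le k\le n$ sets $d_1=n$, $j=0$, $\ell=k-1$, $r=1$, $\mu_r=1$, reading off $(q-1)\stirling{1}{0}_{n}\stirling{1}{1}_{n}\binom{n}{n-k+1}$ from Theorem~\ref{thm_tk=r}. You differ in two refinements, both improvements. First, you route the boundary value $k=n$ (where $\ell=d_{j+1}-1$) through Theorem~\ref{thm_l=d-1}, whereas the paper cites Theorem~\ref{thm_tk=r} uniformly for $2\le k\le n$ even though that theorem is stated only for $\ell<d_{j+1}-1$; this is harmless since the two formulas agree at $\ell=n-1$, but your case split respects the stated hypotheses. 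Second, and more substantively, you correctly observe that Theorem~\ref{thm_main-proof} is proved only when the evaluation sets are nested subfields, while the corollary allows an arbitrary $n$-subset $T\subseteq\Fq$ (and $n$ need not be a subfield order); the paper's proof applies the theorem to $\AA=T$ without addressing this. Your elementary MDS argument --- a codeword of weight $n-k+1$ is the evaluation of $\gamma\prod_{\a=1}^{k-1}(X-\w_{\a})$ with $\gamma\in\F_q^{*}$ and a $(k-1)$-subset $\{\w_1,\dots,\w_{k-1}\}\subseteq T$, counted without collision because evaluation is injective on polynomials of degree less than $n$ --- closes that gap and in fact proves the corollary for every $T$ on its own, turning the $m=1$ specialization of the main theorem into a consistency check rather than the load-bearing step. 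In this respect your proof is more complete than the one in the paper.
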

     
\begin{proof}
We know that if $m=1$, then $\RS_{q}(k,n) =\AC_{q}(k-1, \AA)$ where $\AA = T \subseteq \Fq$. If $k =1$, then 
$$
     |\N(\RS_{q}(k,n))| =q-1.
$$
If $2  \leq  k \leq n$, then upon comparing, we have $|\AA| = n =d_1$,  $r=1$, $\mu_r=1$ and hence $j=0$ and therefore $\ell= k-1$. Hence, it follows from Theorem \ref{thm_tk=r} that
\begin{equation*}
    \begin{split}
        |\N(\RS_{q}(k,n))| &=  |\N_{q}(k-1, T)| = (q-1) \stirling{1}{0}_{n} \stirling{1}{1}_{n} \binom{n}{n-(k-1)}\\
        &= (q-1) \binom{n}{n-k+1}.
    \end{split}
\end{equation*}
\end{proof}

\subsection{Reed-Muller codes}
Let $m$ be a positive integer and let $u \geq 0$. Write $\Fqm = \{P_1, \ldots, P_{q^m}\}$. Then
$$
    \RM_{q}(u,m): = \{(f(P_1), \ldots, f(P_{q^m}): f \in \Fq[X_1, \ldots, X_m], \deg(f) \leq u \}
$$
is a linear code of length $q^m$, called \textit{the Reed-Muller code of order $u$}. We have already seen that if $\AA =\Fqm$, then $\AC_{q}(u,m) = \RM_{q}(u,m)$. Thus, $\RM_{q}(u,m) =\F_q^{q^m}$ if $u \geq m(q-1)$. Therefore, we consider $0 \leq u \leq m(q-1)$. 

\begin{cor}
For $0 \leq u \leq m(q-1)$, write $u = t(q-1)+s $ for unique integers $t,s$ such that $0 \leq t \leq m$ and $0 \leq s< q-1$. Let $\mathrm{N}_{q}(u,m)$ denotes the set of minimum weight codewords of $\RM_{q}(u,m)$, then 
$$
    \left| \mathrm{N}_{q}(u,m) \right| =
    \begin{cases}
        (q-1)q^{t} \stirling{m}{t}_{q} & {\text{ if }s =0},\\[.5em]    
        (q-1)q^{t} \stirling{m}{t}_{q}  \stirling{m-t}{1}_{q} \binom{q}{s}  & {\text{ if } s \neq 0}.
    \end{cases}
$$
\end{cor}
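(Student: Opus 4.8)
The plan is to view $\RM_q(u,m)$ as the affine Cartesian code $\AC_q(u,\AA)$ with $\AA = \Fqm$, and then specialize Theorem \ref{thm_main-proof}. In the nested-subfield notation this is the degenerate case $\lambda = 1$, $F_1 = \Fq$, $\mu_1 = m$, so that $d_1 = \cdots = d_m = q$, $s_0 = 0$, $s_1 = m$, and $t_k = 1 = r$ for every $k$ with $1 \le k \le m$. The substantive work is not any new estimate but the reconciliation of the main theorem's splitting $u = \sum_{i=1}^{j}(d_i-1) + \ell$, subject to $0 < \ell \le d_{j+1}-1 = q-1$, with the corollary's splitting $u = t(q-1) + s$, subject to $0 \le s < q-1$.

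First I would dispose of $u = 0$, which forces $t = s = 0$; here the base case of Theorem \ref{thm_main-proof} gives $q-1$, in agreement with $(q-1)q^{0}\stirling{m}{0}_{q}$. For $1 \le u \le m(q-1)$ the reindexing splits according to whether $s$ vanishes. If $s = 0$, then $u = t(q-1)$ with $t \ge 1$; to respect the constraint $\ell > 0$ I would set $j = t-1$ and $\ell = q-1 = d_{j+1}-1$, landing in the first branch (equivalently Theorem \ref{thm_l=d-1}). Substituting $\prod_{i=1}^{j+1} d_i = q^{t}$, $\mu_r = m$, $s_{r-1} = 0$, and $\stirling{\mu_r}{j+1 - s_{r-1}}_{d_{j+1}} = \stirling{m}{t}_{q}$ yields $(q-1)q^{t}\stirling{m}{t}_{q}$, as claimed.

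If instead $s \ne 0$, I would set $j = t$ and $\ell = s$, so that $0 < \ell < q-1 = d_{j+1}-1$ and we are in the second branch. Since $r = t_{j+1} = 1$, the index range $t_{k_0} \le \cdot \le r-1 = 0$ of the summation there is empty, whence $|\N_q(u,\AA)| = |\N_q^{(j+1)}(u,\AA)|$; and because $t_{j+1} = r$, this quantity is computed by the $t_k = r$ formula (Theorem \ref{thm_tk=r}). Specializing the factors $\prod_{\substack{i=1\\ i\neq k}}^{j+1} d_i = q^{t}$, $\stirling{\mu_r}{j - s_{r-1}}_{d_{j+1}} = \stirling{m}{t}_{q}$, $\stirling{s_r - j}{1}_{d_k} = \stirling{m-t}{1}_{q}$, and $\binom{d_k}{d_{j+1}-\ell} = \binom{q}{q-s} = \binom{q}{s}$ reproduces exactly $(q-1)q^{t}\stirling{m}{t}_{q}\stirling{m-t}{1}_{q}\binom{q}{s}$.

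There is no genuine analytic difficulty here; the entire content is bookkeeping, and I expect the only point of friction to be that the two cases of the corollary do not line up naively with the two branches of the main theorem. Because the main theorem enforces $\ell > 0$, the value $s = 0$ must be absorbed by the shift $j = t-1$, $\ell = q-1$, which throws it into the $\ell = d_{j+1}-1$ branch; only $s \ne 0$ falls into the $\ell < d_{j+1}-1$ branch, with $j = t$. Getting this reindexing right — and confirming that it is consistent at the seam $u = 0$, $t = 0$ — is the crux of the deduction.
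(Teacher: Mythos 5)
Your proposal is correct and follows essentially the same route as the paper: both specialize the nested-subfield results with $\lambda=1$, $d_1=\cdots=d_m=q$, $r=1$, $\mu_r=s_r=m$, sending $s=0$ to the $\ell=d_{j+1}-1$ branch via $j=t-1$, $\ell=q-1$ (Theorem \ref{thm_l=d-1}) and $s\neq 0$ to the $t_k=r$ case via $j=t$, $\ell=s$ (Theorem \ref{thm_tk=r}), with the same substitutions of the $q$-binomial factors. If anything, you are slightly more careful than the paper's own proof, which does not explicitly dispose of the seam $u=0$ nor remark that the summation over $t_{k_0}\leq t\leq r-1$ is vacuous when $r=1$.
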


\begin{proof}
Upon comparing with $\AC_{q}(u, \AA)$, we have $d_1 = \cdots=d_m =q$, $r=1$, $\mu_r=s_r=m$.  If $s=0$, then $u=t(q-1)$, and hence, we must have  $\ell= d_{j+1}-1$ and therefore $t= j+1$ when we write $u = \sum\limits_{i=1}^{j}(d_i-1)+\ell$. Thus, it follows from Theorem \ref{thm_l=d-1} that 
$$
    |\mathrm{N}_{q}(u,m)| = (q-1)(d_1 \cdots d_{j+1}) \stirling{\mu_r}{j+1}_{d_{j+1}} = (q-1) q^t \stirling{m}{t}_{q}. 
$$

If $s \neq 0$, we have $t=j$ and $s =\ell$, and hence, Theorem \ref{thm_tk=r} gives
$$
    |\mathrm{N}_{q}(u,m)| = (q-1)(d_1 \cdots d_{j}) \stirling{\mu_r}{j}_{d_{j+1}} \stirling{s_r-j}{1}_{d_{j+1}} \binom{d_{j+1}}{\ell} = (q-1) q^t \stirling{m}{t}_{q} \stirling{m-t}{1}_{q} \binom{q}{s}.
$$
\end{proof}

\subsection{Examples}
We illustrate our results with the help of two examples. Consider $m=3, \; q=4$, and let 
$$
    \AA = \F_2 \times \F_2 \times \F_4 \subseteq \F_4^{3}.
$$
Therefore, in this case we have $\mu_1=2, \;  \mu_2=1$ and $d_1=d_2=2, \; d_3=4$. Thus, we must have $0 \leq u \leq K = 5$. The table below gives $|\N_{4}(u, \AA)|$ for $1 \leq u \leq 5$. 

\medskip


\begin{center}
\begin{tabular}
{|c|c|c|c|c|c|c|c|c|c|c|c|}
\hline
 ${u}$ &  ${(j,\ell)}$  &   \hspace{0.1cm} ${k_0}$ \hspace{0.1cm} & \hspace{0.1cm} ${k}$  \hspace{0.1cm} & \hspace{0.1cm} ${|C|}$ \hspace{0.1cm} &${\delta_{q}(u, \AA)}$ &  $|{\N_{q}^{(k)}(u,\AA)}|$ &   $|{\N_{q}(u,\AA)}|$ & \hspace{0.1cm} {Ref.}  \hspace{0.1cm}\\[.1em]
\hline
$1$ & $(0,1)$ &$1$ & $1$ & $4^{4}$ & $8$ & $18$ & $18$ & Thm. \ref{thm_l=d-1}\\[1em]
$2$ & $(1,1)$ & $2$ & $2$ &  $4^{8}$ & $4$ & $12$ & $12$ & Thm. \ref{thm_l=d-1}\\[1em]
$3$ & $(2,1)$ & $3$ & $3$  & $4^{12}$ & $3$ & $48$ & $48$ & Thm. \ref{thm_tk=r}\\[1em]
$\multirow{2}{*}{4}$ & $\multirow{2}{*}{(2,2)}$ &  $\multirow{2}{*}{1}$ & $1,2$ &  $\multirow{2}{*}{$4^{15}$}$ & $\multirow{2}{*}{2}$ & $288$ & $\multirow{2}{*}{\color{black}360}$& Thm. \ref{thm_d_k=d_j+1 -l} \\
 &   & & $3$ &  & & $72$ & & Thm. \ref{thm_tk=r}\\[1em]
$5$ &  $(2,3)$ & $3$ & $3$ & $4^{16}$ & $1$ & $48$ & $48$& Thm. \ref{thm_l=d-1}\\
\hline
\end{tabular}
\end{center}

\bigskip
  
In the next example, we consider $m=2$,  $q=9$,  and let $\AA = \F_3 \times \F_9 \subseteq \F_9^2$.  The table below illustrates all values of $|\N_{9}(u, \AA)|$ for $1 \leq u \leq K =10 $. 

\bigskip

\begin{center}
\begin{tabular}
{|c|c|c|c|c|c|c|c|c|c|c|c|}
\hline
 ${u}$ &  ${(j,\ell)}$  &   \hspace{0.1cm} ${k_0}$ \hspace{0.1cm} & \hspace{0.1cm} ${k}$  \hspace{0.1cm} & \hspace{0.1cm} ${|C|}$ \hspace{0.1cm} &${\delta_{q}(u, \AA)}$ &  $|{\N_{q}^{(k)}(u,\AA)}|$ &   $|{\N_{q}(u,\AA)}|$ & \hspace{0.1cm} {Ref.}  \hspace{0.1cm}\\[.1em]
\hline
$1$ & $(0,1)$ &$1$ & $1$ & $9^{3}$ & $18$ & $24$ & $24$ & Thm. \ref{thm_tk=r}\\[1em]
$2$ & $(0,2)$ & $1$ & $1$ &  $9^{6}$ & $9$ & $24$ & $24$ & Thm. \ref{thm_l=d-1}\\[1em]
$3$ & $(1,1)$ & $2$ & $2$  & $9^{9}$ & $8$ & $216$ & $216$ & Thm. \ref{thm_tk=r}\\[1em]
$4$ & $(1,2)$ &  $2$ & $2$ &  $9^{12}$ & $7$ & $864$ & $864$& Thm. \ref{thm_tk=r} \\[1em]
$5$ &  $(1,3)$ & $2$ & $2$ & $9^{15}$ & $6$ & $2016$ & $2016$& Thm. \ref{thm_tk=r}\\[1em]
$6$ &  $(1,4)$ & $2$ & $2$ & $9^{18}$ & $5$ & $3024$ & $3024$& Thm. \ref{thm_tk=r}\\[1em]
$7$ &  $(1,5)$ & $2$ & $2$ & $9^{21}$ & $4$ & $3024$ & $3024$& Thm. \ref{thm_tk=r}\\[1em]
$\multirow{2}{*}{8}$ &  $\multirow{2}{*}{(1,6)}$ & $\multirow{2}{*}{1}$ & $1$ & $\multirow{2}{*}{$9^{24}$}$ & $\multirow{2}{*}{3}$ & ${\color{black}648}$ & $\multirow{2}{*}{\color{black} 2664}$& Thm. \ref{thm_d_k=d_j+1 -l}\\
 &  &  & $2$ &  &  & ${\color{black} 2016}$ & & Thm. \ref{thm_tk=r}\\[1em]

 $\multirow{2}{*}{9}$ &  $\multirow{2}{*}{(1,7)}$ & $\multirow{2}{*}{1}$ & $1$ & $\multirow{2}{*}{$9^{26}$}$ & $\multirow{2}{*}{2}$ & ${\color{black}1944}$ & $\multirow{2}{*}{\color{black} 2808}$ & Thm. \ref{thm_d_k>d_j+1 -l}\\
 &  &  & $2$ &  &  & ${\color{black} 864}$ & & Thm. \ref{thm_tk=r}\\[1em]
$10$ &  $(1,8)$ & $2$ & $2$ & $9^{27}$ & $1$ & $216$ & $216$& Thm. \ref{thm_l=d-1}\\
\hline
\end{tabular}
\end{center}

\end{document}